\documentclass[a4paper,fleqn]{cas-sc}

\usepackage[numbers,sort&compress]{natbib}
\usepackage{amsmath,amssymb}
\usepackage{booktabs}
\usepackage{array}
\usepackage{longtable}
\usepackage{amsthm}
\usepackage{float}
\usepackage[section]{placeins}

\newtheorem{lemma}{Lemma}
\newtheorem{proposition}{Proposition}

\usepackage{tikz}
\usetikzlibrary{positioning,calc,backgrounds}
\usepackage{pgfplots}
\pgfplotsset{compat=1.18}
\usepgfplotslibrary{groupplots}

\def\tsc#1{\csdef{#1}{\textsc{\lowercase{#1}}\xspace}}
\tsc{WGM}
\tsc{QE}

\begin{document}
\let\WriteBookmarks\relax
\def\floatpagefraction{1}
\def\textpagefraction{.001}

\shorttitle{Exact SAT Solving for 2DBMP}
\shortauthors{Pham Quang et al.}

\title [mode = title]{Exact SAT Solving for the Two-Dimensional Bandwidth Minimization Problem}

\author[]{Pham Quang Minh}[style=chinese,orcid=0009-0008-3591-3076]
\ead{24020239@vnu.edu.vn}

\author[]{Dao Xuan Nghia}[style=chinese,orcid=0009-0001-9628-636X]
\ead{25025081@vnu.edu.vn}

\author[]{To Van Khanh}[style=chinese,orcid=0009-0008-1907-7848]
\cormark[1]
\ead{khanhtv@vnu.edu.vn}

\affiliation{organization={Faculty of Information Technology, VNU University of Engineering and Technology},
            city={Hanoi},
            country={Vietnam}}

\cortext[1]{Corresponding author}

\begin{abstract}
The two-dimensional bandwidth minimization problem (2DBMP) seeks an injective embedding of a guest graph into a square grid that minimizes the maximum Manhattan distance over its edges. Heuristic methods can provide strong upper bounds, but these bounds do not by themselves certify optimality. We present an efficient exact SAT-based approach for 2DBMP that incrementally searches for the minimum feasible bandwidth and certifies optimality through satisfiability and unsatisfiability results. On the standard $\lceil\sqrt n\rceil\times\lceil\sqrt n\rceil$ host grid, under a 3600~s time limit, the proposed SAT approach certifies optimal bandwidths for 41 of 43 Regular instances and 42 of 93 Harwell--Boeing instances, 
achieving substantially broader optimality certification within the 3600 s time limit than a previous exact approach evaluated with a 72-hour time limit. In addition, it certifies three bandwidth values that improve all previously published comparison values considered in this study and establishes all three as optimal. We further evaluate the approach on alternative host geometries, namely $2\times\lceil n/2\rceil$ and $n\times n$ grids, to assess its effectiveness beyond the standard host. Overall, the results demonstrate that the proposed SAT approach provides an effective exact method for the small- and medium-sized benchmark instances considered in this study, with fewer than 400 vertices, while heuristic methods remain important for larger and more challenging instances.
\end{abstract}


\begin{keywords}
2D bandwidth minimization \sep graph layout \sep SAT solving \sep incremental SAT \sep exact optimization
\end{keywords}

\maketitle

\section{Introduction}

Graph layout is an important family of combinatorial optimization problems concerned with arranging the vertices of a graph on a target structure under different optimization objectives. This family of problems has broad practical relevance, with representative applications in VLSI design, telecommunication architectures, parallel processing systems, sparse matrix ordering, and matrix decomposition~\citep{Chung1988,LinLin2010,RodriguezGarcia2021}. Within this family, an important line of research focuses on minimizing the maximum host-graph distance induced between adjacent vertices. The two-dimensional bandwidth minimization problem (2DBMP) is a representative problem in this line because a graph is arranged on a two-dimensional grid and edge lengths are measured by the Manhattan distance.

The problem considered in this paper is 2DBMP on the standard square grid of side $h=\lceil\sqrt n\rceil$~\citep{RodriguezGarcia2021,Cavero2023}. Given an input graph $G=(V,E)$ with $n$ vertices, each vertex must be assigned to a distinct grid position. The objective is to minimize the maximum Manhattan distance between adjacent vertices after the graph is placed on the grid. In the classical one-dimensional bandwidth minimization problem, each vertex receives one label and the distance between adjacent vertices is the absolute difference between labels. In 2DBMP, each vertex is identified by two grid coordinates. Consequently, moving from a path host to a two-dimensional grid changes both the geometry of feasible embeddings and the structure of the distance constraints.

The literature on 2DBMP has progressed along complementary theoretical, exact, and heuristic lines~\citep{LinLin2010,RodriguezGarcia2021,Cavero2023,Zhou2026}. Theoretical studies established lower and upper bounds and exact values for selected graph classes under broader host-grid formulations~\citep{Lin1996,LinLin2010,LinLin2011}. On the heuristic side, recent methods have continued to improve the upper bounds reported for benchmark instances~\citep{TorresJimenez2025,Zhou2026}. These methods terminate according to algorithm-specific stopping criteria, such as time or iteration limits, and return the best solution found~\citep{Cavero2023,TorresJimenez2025,Zhou2026}. Because these heuristic searches do not provide certificates that exclude better embeddings, an improved best-known value is not necessarily a certified optimal value. 
Exact methods complement heuristic search by providing optimality certificates, but they generally require substantially greater computational effort in time and memory~\citep{RodriguezGarcia2021}. Both exact and heuristic approaches have been studied for the Manhattan-distance problem~\citep{RodriguezGarcia2021} and the related Chebyshev-distance variant~\citep{Cavero2026Linf}. Prior studies have also considered alternative host-grid dimensions, including $2\times\lceil n/2\rceil$ rectangular grids~\citep{Khandelwal2023SAGBMP} and $n\times n$ grids~\citep{LinLin2010,LinLin2011}. These formulations broaden the range of host-grid settings and motivate further study of alternative host geometries. On the widely used Regular and Harwell--Boeing benchmarks~\citep{RodriguezGarcia2021,Cavero2023,Zhou2026}, the exact results reported by~\citet{RodriguezGarcia2021} certified most Regular instances but left optimality unresolved for most Harwell--Boeing instances. Consequently, tightening bounds on unresolved instances and certifying additional optimal values remain important goals for 2DBMP.

Modern SAT solving has become an effective approach for a wide range of practical problems~\citep{Biere2021Handbook,Fichte2023SAT}. Its applications include exact methods for graph-related problems such as graph coloring, antibandwidth, and bandwidth coloring~\citep{Niewiadomski2019SAT,Fazekas2020Antibandwidth,Faber2024BandwidthColoring}. This makes SAT a natural framework for 2DBMP because vertex placements, occupancy requirements, and bandwidth limits can be represented by Boolean variables and clauses. Testing the resulting formula for successive values of $K$ converts the optimization problem into a sequence of related SAT decision problems.

This paper addresses the certification gap with a SAT-based exact approach for standard 2DBMP. The proposed SAT encoding combines components for position assignment, sequential occupancy, cutoff-aware distance, symmetry breaking through an anchor vertex, and bandwidth enforcement.
During the descending threshold search, the incremental procedure retains the fixed clauses and solver state while adding the bandwidth component $\Delta_K$ for each tested value of $K$. Experiments on the small and medium instances considered in this study, comprising 43 Regular and 93 Harwell--Boeing instances, show that the proposed method achieves competitive exact certification relative to the locally evaluated commercial solvers and extends the exact results available in the literature. Under the 3600~s time limit on the standard $\lceil\sqrt n\rceil\times\lceil\sqrt n\rceil$ host grid, it establishes three new certified optimal values for Harwell--Boeing instances. Each value is lower than every available published comparison bandwidth considered in this study and every feasible incumbent returned by Gurobi, CPLEX MIP, and CPLEX CP for the corresponding instance.

The remainder of the paper is organized as follows. Section~\ref{sec:background} defines 2DBMP, presents an integer linear programming (ILP) formulation, and reviews the directly relevant literature. Section~\ref{sec:sat} presents the SAT encoding and the incremental solving scheme. Section~\ref{sec:experiments} reports the computational experiments. Section~\ref{sec:conclusion} concludes the paper.

\section{Background}\label{sec:background}

\subsection{Problem Definition and ILP Formulation}
\label{sec:problem}
\begin{figure}[pos=htbp]
\centering
\small
\tikzset{
  vertex/.style={circle, draw=black, fill=white, minimum size=5.2mm, inner sep=0pt, font=\scriptsize},
  parta/.style={vertex, fill=blue!12},
  partb/.style={vertex, fill=orange!18},
  gridline/.style={gray!35, line width=0.3pt},
  emptygrid/.style={black!50, line width=0.55pt},
  gedge/.style={gray!65, line width=0.45pt},
  worst/.style={red!70!black, line width=1.05pt}
}
\begin{minipage}[t]{0.33\textwidth}
\centering
\begin{tikzpicture}[scale=0.94]
\node[parta] (g1) at (0,2) {1};
\node[parta] (g3) at (0,1) {3};
\node[parta] (g6) at (0,0) {6};
\node[partb] (g2) at (2.25,2) {2};
\node[partb] (g4) at (2.25,1) {4};
\node[partb] (g5) at (2.25,0) {5};
\foreach \a in {g1,g3,g6}{\foreach \b in {g2,g4,g5}{\draw[gedge] (\a) -- (\b);}}
\node[font=\scriptsize] at (1.12,-0.55) {$G=K_{3,3}$};
\node[font=\scriptsize, blue!55!black] at (0,2.55) {$A$};
\node[font=\scriptsize, orange!70!black] at (2.25,2.55) {$B$};
\end{tikzpicture}
\par\vspace{0.25em}(a) Guest graph
\end{minipage}\hspace{0.02\textwidth}
\begin{minipage}[t]{0.33\textwidth}
\centering
\begin{tikzpicture}[scale=0.88]
\foreach \r in {1,2,3}{
  \foreach \c in {1,2,3}{
    \node[font=\scriptsize, black!65] at (\c,-\r+0.32) {$(\r,\c)$};
  }
}
\draw[emptygrid] (0.5,-0.5) grid (3.5,-3.5);
\node[font=\scriptsize] at (2,-4.05) {$3\times3$ host grid};
\end{tikzpicture}
\par\vspace{0.25em}(b) Host grid
\end{minipage}

\par\vspace{0.35em}

\begin{minipage}[t]{0.33\textwidth}
\centering
\begin{tikzpicture}[scale=0.88]
\foreach \r in {1,2,3}{
  \foreach \c in {1,2,3}{
    \node[font=\tiny, gray!55] at (\c,-\r+0.32) {$(\r,\c)$};
  }
}
\draw[gridline] (0.5,-0.5) grid (3.5,-3.5);
\node[parta] (s1) at (1,-1) {1};
\node[parta] (s3) at (2,-1) {3};
\node[parta] (s6) at (3,-1) {6};
\node[partb] (s2) at (1,-3) {2};
\node[partb] (s4) at (2,-3) {4};
\node[partb] (s5) at (3,-3) {5};
\foreach \a/\b in {s3/s5,s3/s4,s3/s2,s6/s5,s6/s4,s6/s2,s1/s4,s1/s2}{\draw[gedge] (\a) -- (\b);}
\draw[worst] (s1) -- (s5);
\node[font=\tiny, red!70!black, fill=white, inner sep=1pt] at (2,-2) {$|1-3|+|1-3|=4$};
\node[font=\scriptsize] at (2,-4.05) {$\beta(G,\varphi)=4$};
\end{tikzpicture}
\par\vspace{0.25em}(c) Feasible embedding
\end{minipage}\hspace{0.02\textwidth}
\begin{minipage}[t]{0.33\textwidth}
\centering
\begin{tikzpicture}[scale=0.88]
\foreach \r in {1,2,3}{
  \foreach \c in {1,2,3}{
    \node[font=\tiny, gray!55] at (\c,-\r+0.32) {$(\r,\c)$};
  }
}
\draw[gridline] (0.5,-0.5) grid (3.5,-3.5);
\node[parta] (o1) at (1,-1) {1};
\node[parta] (o6) at (2,-1) {6};
\node[partb] (o4) at (3,-1) {4};
\node[partb] (o5) at (1,-2) {5};
\node[partb] (o2) at (2,-2) {2};
\node[parta] (o3) at (3,-2) {3};
\foreach \a/\b in {o3/o4,o3/o2,o6/o5,o6/o4,o6/o2,o1/o5,o1/o2}{\draw[gedge] (\a) -- (\b);}
\draw[gedge] (o1) to[bend right=18] (o4);
\draw[gedge] (o5) to[bend right=18] (o3);
\node[font=\scriptsize] at (2,-4.05) {$\beta(G,\varphi^*)=2$};
\end{tikzpicture}
\par\vspace{0.25em}(d) Optimal embedding
\end{minipage}
\caption{A 2DBMP example for $G=K_{3,3}$ on a $3\times3$ host grid. The feasible and optimal embeddings have bandwidths $4$ and $2$, respectively.}
\label{fig:embedding}
\end{figure}

The 2DBMP places the vertices of a guest graph at distinct positions on a two-dimensional host grid and minimizes the largest distance between the grid positions of adjacent guest vertices~\citep{LinLin2010}.
Here, the guest graph is a simple, undirected, and unweighted graph $G=(V,E)$ with $n\ge 2$. Its vertex set is $V=\{1,\ldots,n\}$, and $E$ denotes its nonempty edge set. The host grid consists of $h\times h$ positions arranged in $h$ rows and $h$ columns. In the square-root-grid setting
considered here, the grid side is $h=\lceil\sqrt n\rceil$, following recent 2DBMP studies~\citep{RodriguezGarcia2021,Cavero2023}.
Let $\varphi(v)=(x_v,y_v)$ denote the grid position assigned to vertex $v$. Equation~\eqref{eq:bandwidth} defines the bandwidth of an embedding as the maximum Manhattan distance over all guest edges. The 2DBMP seeks a feasible embedding $\varphi$ that minimizes $\beta(G,\varphi)$. The minimum bandwidth over all feasible embeddings of $G$ is denoted by $\beta^*(G)=\min_{\varphi}\beta(G,\varphi)$. Figure~\ref{fig:embedding} illustrates this setting using the guest graph $K_{3,3}$ and a $3\times3$ host grid. Panels (a) and (b) show the graph and the grid before vertex placement. Panels (c) and (d) compare a feasible embedding with bandwidth $4$ and an optimal embedding with bandwidth $2$. The highlighted edge in panel (c) has Manhattan distance $4$, which is the largest edge distance in this embedding.
\begin{equation}
\beta(G,\varphi)=\max_{(u,v)\in E}\left(|x_u-x_v|+|y_u-y_v|\right)
\label{eq:bandwidth}
\end{equation}

Equations~\eqref{eq:ilp}--\eqref{eq:ilp_domains} give an ILP formulation of the 2DBMP.
Let $H_h=\{(r,c):1\le r,c\le h\}$ denote the host-grid positions. For each
$v\in V$ and $(r,c)\in H_h$, let $a_{vrc}$ be a binary placement variable.
For each edge $e=(u,v)\in E$, the model uses four nonnegative integer auxiliary variables,
$d^{x,+}_{e},d^{x,-}_{e},d^{y,+}_{e},d^{y,-}_{e}$, to represent row and
column coordinate differences without absolute-value terms. It also uses one
integer bandwidth variable $\beta$.
\begin{align}
\min\quad & \beta \label{eq:ilp}\\
\text{s.t.}\quad
& \sum_{r=1}^{h}\sum_{c=1}^{h} a_{vrc}=1
&& \forall v\in V, \label{eq:assign_vertex}\\
& \sum_{v\in V}a_{vrc}\le 1
&& \forall (r,c)\in H_h, \label{eq:assign_cell}\\
& \left(\sum_{r=1}^{h}\sum_{c=1}^{h} r\,a_{urc}\right)
  -\left(\sum_{r=1}^{h}\sum_{c=1}^{h} r\,a_{vrc}\right)
  =d^{x,+}_{e}-d^{x,-}_{e}
&& \forall e=(u,v)\in E, \label{eq:dx_split}\\
& \left(\sum_{r=1}^{h}\sum_{c=1}^{h} c\,a_{urc}\right)
  -\left(\sum_{r=1}^{h}\sum_{c=1}^{h} c\,a_{vrc}\right)
  =d^{y,+}_{e}-d^{y,-}_{e}
&& \forall e=(u,v)\in E, \label{eq:dy_split}\\
& d^{x,+}_{e}+d^{x,-}_{e}
  +d^{y,+}_{e}+d^{y,-}_{e}\le \beta
&& \forall e\in E, \label{eq:beta_bound}\\
& a_{vrc}\in\{0,1\}
&& \forall v\in V,\ (r,c)\in H_h, \label{eq:binary_domain}\\
& d^{x,+}_{e},d^{x,-}_{e},
  d^{y,+}_{e},d^{y,-}_{e}\in\{0,\ldots,h-1\}
&& \forall e\in E, \label{eq:distance_domains}\\
& \beta\in\mathbb Z_{\ge0}. \label{eq:ilp_domains}
\end{align}
Constraints~\eqref{eq:assign_vertex} and~\eqref{eq:assign_cell} assign each
vertex to exactly one host-grid position and allow each position to contain
at most one vertex, so they enforce an injective placement.
For each vertex $v\in V$, let $x_v=\sum_{r,c} r\,a_{vrc}$ and
$y_v=\sum_{r,c} c\,a_{vrc}$, where the sums range over $(r,c)\in H_h$.
These expressions are the row and column coordinates induced by the
placement variables. For an edge $e=(u,v)$, Constraints~\eqref{eq:dx_split}--
\eqref{eq:beta_bound} linearize the condition
$|x_u-x_v|+|y_u-y_v|\le \beta$. The split variables are
nonnegative, and Constraints~\eqref{eq:dx_split} and~\eqref{eq:dy_split}
write the signed row and column differences as differences of such variables.
Lemma~\ref{lem:ilp-split-distance} in Appendix~\ref{app:ilp-linearization}
formalizes why this represents the Manhattan distance without using
absolute-value terms. Constraint~\eqref{eq:beta_bound} bounds the resulting
edge distance by $\beta$ for every guest edge. Finally,
Constraints~\eqref{eq:binary_domain}--\eqref{eq:ilp_domains} specify the
variable domains, and minimizing $\beta$ in~\eqref{eq:ilp} is therefore
equivalent to minimizing the maximum Manhattan distance over all guest edges.

\subsection{Related Work}\label{sec:related}

Research on 2DBMP comprises theoretical analyses and computational approaches based on heuristic search, exact formulations, or combinations of both~\citep{LinLin2010,RodriguezGarcia2021,Cavero2023,Cavero2026Linf}. These studies are not always directly comparable because they may adopt different distance metrics or host-grid dimensions~\citep{LinLin2010,Khandelwal2023SAGBMP,Cavero2026Linf}. Accordingly, the discussion below distinguishes the standard square-root-grid $L_1$ 2DBMP considered in this paper from $L_\infty$ formulations and alternative host grids.

Theoretical studies established bounds and exact values for selected graph classes under broader host-grid formulations. For the $L_1$ model,~\citet{Lin1996} derived density-based lower bounds and a general upper bound based on the minimum $L_1$-diameter of $n$ grid points. The corresponding $L_\infty$ formulation and exact results for several graph classes were developed by~\citet{LinHaoLi2000}. The $L_1$ and $L_\infty$ models were subsequently studied within a common framework by~\citet{LinLin2010}, while~\citet{LinLin2011} investigated the square-root relationship between one-dimensional and two-dimensional bandwidth. These studies use $n\times n$ or otherwise sufficiently large host grids rather than the compact $\lceil\sqrt n\rceil\times\lceil\sqrt n\rceil$ host used for the benchmark setting considered in this paper. Their results provide theoretical foundations for 2DBMP, but they do not directly establish optimal values for the compact-grid benchmark instances.

Heuristic methods have been used to construct feasible embeddings and obtain strong upper bounds without requiring an optimality proof. For the $L_1$ problem,~\citet{RodriguezGarcia2018GRASP} proposed a GRASP-based approach for square-grid hosts. Subsequent methods for the compact square-root-grid setting include the BVNS heuristic developed alongside the exact constraint satisfaction problem (CSP) models of~\citet{RodriguezGarcia2021}, iterated greedy with local search~\citep{Cavero2023}, dual-representation simulated annealing~\citep{TorresJimenez2025}, and reinforcement-learning-driven tabu search~\citep{Zhou2026}. Related $L_1$ studies considered simulated annealing on a $2\times\lceil n/2\rceil$ host~\citep{Khandelwal2023SAGBMP} and RVNS on balanced rectangular hosts~\citep{Khandelwal2023RVNS}. For the $L_\infty$ variant,~\citet{Cavero2026Linf} combined a multi-start heuristic with exact CSP formulations. Although these heuristic methods can produce high-quality feasible embeddings, the resulting bandwidths remain upper bounds or best-known values unless an exact method separately certifies optimality.

Exact methods complement heuristic search by seeking both a feasible embedding and a certificate that no embedding with a smaller bandwidth exists. Within this line of research,~\citet{RodriguezGarcia2021} developed CSP models for the standard $L_1$ 2DBMP and evaluated them on established benchmark instances. More recently,~\citet{Cavero2026Linf} developed CSP models for the $L_\infty$ variant of 2DBMP, in which edge lengths are measured using Chebyshev distance rather than the $L_1$ Manhattan distance considered in this study. To reduce the range of bandwidth values examined by the exact search, their procedure uses the maximum vertex degree to identify a lower bound below which no feasible embedding can exist. This lower bound is obtained by counting the distinct grid positions available around a vertex within each Chebyshev distance. These studies establish CSP as an exact approach for both distance metrics. However, $L_1$ and $L_\infty$ define different optimization objectives even on the same graph instances, so optimality under Chebyshev distance does not certify optimality under Manhattan distance. The reported exact results for the $L_1$ setting also leave many larger benchmark instances unresolved~\citep{RodriguezGarcia2021}, motivating the investigation of alternative exact formulations.

Overall, prior work provides theoretical bounds, heuristic upper bounds, and exact methods. 
However, there is still a gap between obtaining high-quality feasible embeddings and certifying their optimality in the standard square-root-grid Manhattan-distance setting.
This limitation motivates the alternative SAT-based exact formulation  developed in the next section.
\section{SAT-Based Exact Approach}
\label{sec:sat}
\subsection{SAT Encoding Overview}
The SAT approach reduces the 2DBMP to a sequence of decision problems, each associated with a bandwidth threshold $K$. For a fixed $K$, the method constructs a SAT formula representing the existence of a feasible embedding $\varphi$ such that $\beta(G,\varphi)\le K$. After obtaining a feasible embedding, the method sets the next threshold $K$ to one less than the bandwidth of that embedding. If the resulting formula is unsatisfiable, no feasible embedding with a smaller bandwidth exists. Therefore, the bandwidth of the current embedding is the optimal value $\beta^*(G)$.

The 2DBMP constraints are encoded by the position, occupancy, distance, and bandwidth components of the SAT formula in Equation~\eqref{eq:full_encoding}, while the optional symmetry component is added to reduce the search space. The position component $\mathcal{F}_{\mathrm{pos}}$ assigns one row and one column to each vertex, while the occupancy component $\mathcal{F}_{\mathrm{occ}}^{\mathrm{seq}}$ prevents two vertices from occupying the same grid position. The distance component $\mathcal{F}_{\mathrm{dist}}^{U^*}$ represents coordinate separations up to the cutoff $U^*$. The symmetry component $\mathcal{F}_{\mathrm{sym}}$ removes a safe subset of symmetry-equivalent placements by restricting a selected anchor vertex. The bandwidth component $\Delta_K$ contains the clauses that enforce the current bandwidth threshold. Thus, when $K$ decreases, the position, occupancy, distance, and symmetry components remain unchanged, while only the bandwidth component $\Delta_K$ changes.
\begin{equation}
\mathcal{F}_{\mathrm{full}}(K)=
\mathcal{F}_{\mathrm{pos}}
\land \mathcal{F}_{\mathrm{occ}}^{\mathrm{seq}}
\land \mathcal{F}_{\mathrm{dist}}^{U^*}
\land \mathcal{F}_{\mathrm{sym}}
\land \Delta_K
\label{eq:full_encoding}
\end{equation}

Among the component roles in Equation~\eqref{eq:full_encoding}, position, occupancy, distance, and bandwidth enforcement are required to represent a bandwidth-$K$ embedding, whereas symmetry breaking is an optional search optimization. The symmetry component $\mathcal{F}_{\mathrm{sym}}$ can therefore be removed without changing which bandwidth values are feasible. The required occupancy component can be represented by the sequential occupancy encoding $\mathcal{F}_{\mathrm{occ}}^{\mathrm{seq}}$, which uses cell-indicator variables and sequential-counter constraints, or by the pairwise occupancy encoding $\mathcal{F}_{\mathrm{occ}}^{\mathrm{pair}}$. Similarly, the required distance component can use the cutoff-aware form $\mathcal{F}_{\mathrm{dist}}^{U^*}$ or the full form $\mathcal{F}_{\mathrm{dist}}$. The cutoff-aware form preserves exactness only when $U^*$ is valid. The required bandwidth component $\Delta_K$ is shared by both solving schemes. The incremental scheme retains the solver state as $K$ decreases, whereas the non-incremental scheme rebuilds the solver for each value of $K$.

To construct the SAT formula, the method uses several types of Boolean variables. 
For each vertex $v\in V$ and each coordinate $i\in\{1,\ldots,h\}$, the Boolean variables $X_{v,i}$ and $Y_{v,i}$ encode the row and column choices of $v$. For each vertex $v\in V$ and each grid position $(r,c)\in H_h$, the Boolean variable $A_{v,r,c}$ indicates that $v$ occupies this position. For each edge $e=(u,v)$, each axis $q\in\{x,y\}$, and each represented threshold $d$, the Boolean variable $T^q_{e,d}$ indicates that the coordinate separation of $u$ and $v$ along axis $q$ is at least $d$. The following subsections present these components, their alternatives where applicable, and the conditions required for the optional restrictions to preserve correctness.

\subsection{Position and Occupancy Encoding}
The position component $\mathcal F_{\mathrm{pos}}$ corresponds to the
one-position-per-vertex condition in Constraint~\eqref{eq:assign_vertex} of
the ILP model. In the SAT encoding, this component requires each vertex to
select exactly one row and exactly one column. Equations~\eqref{eq:sat_x_exact}
and~\eqref{eq:sat_y_exact} state the corresponding cardinality constraints for
the row and column variables, respectively. These two constraints are encoded
into CNF using the Sequential Counter encoding of~\citet{Sinz2005}, as
implemented in PySAT~\citep{Ignatiev2018PySAT,Ignatiev2024AccessibleSAT}.
\begin{align}
\sum_{i=1}^{h}X_{v,i}&=1 && \forall v\in V
\label{eq:sat_x_exact}\\
\sum_{i=1}^{h}Y_{v,i}&=1 && \forall v\in V
\label{eq:sat_y_exact}
\end{align}

The occupancy component corresponds to Constraint~\eqref{eq:assign_cell} in the ILP model, which requires each grid position to contain at most one vertex. A direct way to enforce this condition is to consider every grid position $(r,c)\in H_h$ and every pair of distinct vertices $u,v\in V$, and forbid both vertices from occupying that position simultaneously. This construction applies the standard pairwise encoding of an at-most-one constraint to each grid position~\citep{Prestwich2009CNF}. Equation~\eqref{eq:pairwise_occupancy} defines the resulting pairwise occupancy encoding $\mathcal F_{\mathrm{occ}}^{\mathrm{pair}}$. Each clause in Equation~\eqref{eq:pairwise_occupancy} is violated exactly when both $u$ and $v$ occupy $(r,c)$, so the conjunction ensures that every grid position contains at most one vertex.
\begin{equation}
\mathcal F_{\mathrm{occ}}^{\mathrm{pair}}
=
\bigwedge_{(r,c)\in H_h}
\bigwedge_{\substack{u,v\in V\\u<v}}
\left(
\neg X_{u,r}\lor\neg Y_{u,c}\lor
\neg X_{v,r}\lor\neg Y_{v,c}
\right)
\label{eq:pairwise_occupancy}
\end{equation}

For a graph with $n$ vertices, the encoding introduces one clause for every unordered pair of distinct vertices at each grid position, resulting in $\binom{n}{2}$ clauses per position. Since the host grid has $\lceil\sqrt n\rceil \times \lceil\sqrt n\rceil$ positions, the complete pairwise occupancy encoding contains $\lceil\sqrt n\rceil \times \lceil\sqrt n\rceil \times \binom{n}{2}$ clauses. Thus, the pairwise occupancy encoding contains $\Theta(n^3)$ clauses in the square-root-grid setting. Instead of constructing the occupancy condition from pairwise clauses, the proposed method formulates an at-most-one cardinality constraint for each grid position. Specifically, for each vertex $v\in V$ and grid position $(r,c)\in H_h$, the method introduces a Boolean variable $A_{v,r,c}$ indicating whether $v$ occupies $(r,c)$. To enforce this meaning, Equations~\eqref{eq:cell_to_x}--\eqref{eq:xy_to_cell} link $A_{v,r,c}$ to the row and column variables.
\begin{align}
\neg A_{v,r,c}\lor X_{v,r} &
&& \forall v\in V,\ (r,c)\in H_h
\label{eq:cell_to_x}\\
\neg A_{v,r,c}\lor Y_{v,c} &
&& \forall v\in V,\ (r,c)\in H_h
\label{eq:cell_to_y}\\
A_{v,r,c}\lor\neg X_{v,r}\lor\neg Y_{v,c} &
&& \forall v\in V,\ (r,c)\in H_h
\label{eq:xy_to_cell}
\end{align}

For each grid position $(r,c)\in H_h$, the requirement that no more than one vertex occupies that position is expressed by applying an at-most-one constraint to the variables $\{A_{v,r,c}:v\in V\}$, as stated in Equation~\eqref{eq:cell_amo}.
This constraint can be translated into CNF using different cardinality encodings without changing the represented occupancy condition. In this study, these constraints are encoded using the Sequential Counter encoding of~\citet{Sinz2005}, as implemented in PySAT~\citep{Ignatiev2018PySAT,Ignatiev2024AccessibleSAT}. Together, the linking clauses in Equations~\eqref{eq:cell_to_x}--\eqref{eq:xy_to_cell} and the Sequential Counter encoding of Equation~\eqref{eq:cell_amo} define the sequential occupancy encoding $\mathcal F_{\mathrm{occ}}^{\mathrm{seq}}$. For the $n$ occupancy variables associated with each grid position, this encoding introduces $\Theta(n)$ clauses and $\Theta(n)$ auxiliary variables. Across the $\lceil\sqrt n\rceil \times \lceil\sqrt n\rceil$ grid positions, the resulting counts are therefore $\Theta(n^2)$ clauses and $\Theta(n^2)$ auxiliary variables. After the variables $A_{v,r,c}$ and their linking clauses are included, the complete sequential occupancy encoding remains $\Theta(n^2)$ in both clauses and auxiliary variables, whereas the pairwise occupancy encoding contains $\Theta(n^3)$ clauses. Despite their different CNF representations, $\mathcal F_{\mathrm{occ}}^{\mathrm{seq}}$ and $\mathcal F_{\mathrm{occ}}^{\mathrm{pair}}$ enforce the same occupancy condition over the placement variables.
\begin{equation}
\sum_{v\in V}A_{v,r,c}\le 1
\qquad \forall (r,c)\in H_h
\label{eq:cell_amo}
\end{equation}

\subsection{Distance Encoding}

The distance component corresponds to Constraints~\eqref{eq:dx_split}--
\eqref{eq:beta_bound} in the ILP model, which linearize the absolute-value
condition for the Manhattan distance of each guest edge. In the full SAT
encoding, these differences are represented by Boolean threshold variables.
For each edge $e=(u,v)\in E$, coordinate axis $q\in\{x,y\}$, where $q=x$
denotes the row axis and $q=y$ denotes the column axis, and threshold
$d\in\{1,\ldots,h-1\}$, the variable $T^q_{e,d}$ is true exactly when the
coordinate separation of $u$ and $v$ along axis $q$ is at least $d$. To express
the corresponding clauses uniformly, the method uses $Z^q_{w,i}$ as shorthand
for $X_{w,i}$ when $q=x$ and $Y_{w,i}$ when $q=y$, where $w\in V$ and
$i\in\{1,\ldots,h\}$.

Equations~\eqref{eq:dist_act_1}--\eqref{eq:dist_mono} enforce the threshold semantics of $T^q_{e,d}$. When the selected coordinates are exactly $d$ positions apart, Equations~\eqref{eq:dist_act_1} and~\eqref{eq:dist_act_2} force $T^q_{e,d}$ to be true for the two possible coordinate orders. Equations~\eqref{eq:dist_deact_1} and~\eqref{eq:dist_deact_2} encode the upper boundary of an exact separation. If the selected coordinates are exactly $d$ positions apart, the separation cannot be at least $d+1$, so these equations force $T^q_{e,d+1}$ to be false for the two possible coordinate orders. Equation~\eqref{eq:dist_zero} handles zero separation along one axis. If $u$ and $v$ select the same row for $q=x$, or the same column for $q=y$, then $T^q_{e,1}$ is forced to be false. Equation~\eqref{eq:dist_mono} enforces consistency among the threshold variables. If the coordinate separation is at least $d+1$, it must also be at least $d$, so $T^q_{e,d+1}$ implies $T^q_{e,d}$. Together with the unique coordinate assignments enforced by $\mathcal F_{\mathrm{pos}}$, these clauses establish the intended threshold semantics. Specifically, $T^q_{e,d}$ is true if and only if the coordinate separation of $u$ and $v$ is at least $d$. The conjunction of these clauses over all $e=(u,v)\in E$, $q\in\{x,y\}$, and the stated ranges of $d$ and $k$ is denoted by $\mathcal F_{\mathrm{dist}}$.
\begin{align}
Z^q_{v,k}\land Z^q_{u,k-d}
&\rightarrow T^q_{e,d}
&& 1\le d\le h-1,\ d+1\le k\le h
\label{eq:dist_act_1}\\
Z^q_{u,k}\land Z^q_{v,k-d}
&\rightarrow T^q_{e,d}
&& 1\le d\le h-1,\ d+1\le k\le h
\label{eq:dist_act_2}\\
Z^q_{v,k}\land Z^q_{u,k-d}
&\rightarrow \neg T^q_{e,d+1}
&& 1\le d\le h-2,\ d+1\le k\le h
\label{eq:dist_deact_1}\\
Z^q_{u,k}\land Z^q_{v,k-d}
&\rightarrow \neg T^q_{e,d+1}
&& 1\le d\le h-2,\ d+1\le k\le h
\label{eq:dist_deact_2}\\
Z^q_{u,k}\land Z^q_{v,k}
&\rightarrow \neg T^q_{e,1}
&& 1\le k\le h
\label{eq:dist_zero}\\
T^q_{e,d+1}
&\rightarrow T^q_{e,d}
&& 1\le d\le h-2
\label{eq:dist_mono}
\end{align}

The cutoff-aware encoding $\mathcal F_{\mathrm{dist}}^{U^*}$ uses a valid initial bandwidth upper bound $U^*$ to limit the represented axis-distance thresholds, and the descending search starts from the same value. Lin and Lin showed that $\delta(n)$ is the minimum possible maximum Manhattan distance among $n$ distinct lattice points with integer coordinates~\citep{LinLin2010}. However, their result does not guarantee that a configuration attaining $\delta(n)$ fits within the compact $\lceil\sqrt n\rceil\times\lceil\sqrt n\rceil$ host used in this study. Let $U_{\mathrm{DRSA}}$ denote the bandwidth of a feasible embedding on this compact host returned by DRSA~\citep{TorresJimenez2025}. Since $U^*=\max\{\delta(n),U_{\mathrm{DRSA}}\}\ge U_{\mathrm{DRSA}}$, the known DRSA embedding is feasible at the initial threshold $U^*$. When $\delta(n)>U_{\mathrm{DRSA}}$, the first SAT call uses $K=\delta(n)$. This initialization is intended to avoid starting the incremental search directly at a tighter bandwidth threshold. Additionally, if $U_{\mathrm{DRSA}}$ is unavailable, the bandwidth of another feasible embedding or, for the $h\times h$ host considered here, the host-diameter upper bound $2(h-1)$ can be used instead. Equations~\eqref{eq:delta_ub} and~\eqref{eq:ub_start} formally define $\delta(n)$ and $U^*$, respectively.
\begin{align}
\delta(n)
&=\min\left\{
2\left\lceil\frac{\sqrt{2n-1}-1}{2}\right\rceil,\,
2\left\lceil\sqrt{\frac n2}\right\rceil-1
\right\}
\label{eq:delta_ub}\\
U^*
&=\max\{\delta(n),U_{\mathrm{DRSA}}\}
\label{eq:ub_start}
\end{align}

Using this bound, $\mathcal F_{\mathrm{dist}}^{U^*}$ restricts the threshold index range to $d\in\{1,\ldots,\min\{U^*,h-1\}\}$ and adds the clauses in Equation~\eqref{eq:cutoff_pair} to exclude axis separations greater than $U^*$. By contrast, $\mathcal F_{\mathrm{dist}}$ represents the full threshold range $d\in\{1,\ldots,h-1\}$ and does not include the cutoff clauses. Since the separation along each axis is at most $h-1$, the Manhattan distance of an edge is at most $2(h-1)$. The descending search with $\mathcal F_{\mathrm{dist}}$ therefore starts from the initial bandwidth threshold $K=2(h-1)$. Every embedding with bandwidth at most $U^*$ also has axis separations at most $U^*$, so the cutoff preserves every embedding within the searched bandwidth range and does not affect the exactness of the encoding.

\begin{align}
\neg Z^q_{u,i}\lor\neg Z^q_{v,k}
&\qquad
\substack{e=(u,v)\in E,\ q\in\{x,y\},\\
i,k\in\{1,\ldots,h\},\ |i-k|>U^*}
\label{eq:cutoff_pair}
\end{align}

\subsection{Q1 Anchor Symmetry Breaking}

Reflecting an embedding horizontally or vertically changes the grid positions of its vertices but preserves all Manhattan edge distances. Following the quarter-grid symmetry-breaking strategy of~\citet{RodriguezGarcia2021}, the SAT formulation restricts a deterministically selected anchor vertex $a$ to one quarter of the host grid. The anchor is chosen by maximum degree, then by the maximum sum of the degrees of its neighbors, and finally by the smallest vertex index if a tie remains.

Define $Q_1=\{(r,c)\in H_h:1\le r,c\le \lceil h/2\rceil\}$ as the region allowed for the anchor vertex $a$. Figure~\ref{fig:q1-region} illustrates this definition for a $4\times4$ host grid, where $Q_1$ is the shaded $2\times2$ region. To restrict $a$ to $Q_1$, the SAT formulation forbids every row and column coordinate greater than $\lceil h/2\rceil$. Using the position variables $X_{a,p}$ and $Y_{a,p}$, Equations~\eqref{eq:q1_x} and~\eqref{eq:q1_y} encode these row and column exclusions, respectively, as unit clauses. Together, these unit clauses force both coordinates of $a$ to lie in $\{1,\ldots,\lceil h/2\rceil\}$, and therefore restrict $a$ to $Q_1$. The conjunction of the unit clauses in Equations~\eqref{eq:q1_x} and~\eqref{eq:q1_y} is denoted by $\mathcal F_{\mathrm{sym}}$.
\begin{align}
\neg X_{a,p} && \forall p\in
\left\{\left\lceil\frac{h}{2}\right\rceil+1,\ldots,h\right\}
\label{eq:q1_x}\\
\neg Y_{a,p} && \forall p\in
\left\{\left\lceil\frac{h}{2}\right\rceil+1,\ldots,h\right\}
\label{eq:q1_y}
\end{align}

\begin{figure}[pos=ht]
\centering
\small
\tikzset{
  qvertex/.style={circle, draw=black, fill=blue!12, minimum size=5.2mm, inner sep=0pt, font=\scriptsize},
  qgridline/.style={gray!35, line width=0.3pt},
  qgridborder/.style={black!50, line width=0.55pt},
  qaxis/.style={gray!70, dashed, line width=0.75pt},
  qregion/.style={draw=blue!55!black, line width=0.85pt}
}
\begin{tikzpicture}[scale=0.78]
\fill[blue!8] (0.5,-0.5) rectangle (2.5,-2.5);
\draw[qgridline] (0.5,-0.5) grid (4.5,-4.5);
\draw[qgridborder] (0.5,-0.5) rectangle (4.5,-4.5);
\draw[qaxis] (2.5,-0.5) -- (2.5,-4.5);
\draw[qaxis] (0.5,-2.5) -- (4.5,-2.5);
\draw[qregion] (0.5,-0.5) rectangle (2.5,-2.5);
\foreach \r in {1,...,4}{
  \foreach \c in {1,...,4}{
    \node[font=\tiny, gray!55] at (\c,-\r+0.32) {$(\r,\c)$};
  }
}
\node[qvertex] (anchor) at (1,-1) {$a$};
\node[font=\bfseries\scriptsize, blue!55!black, fill=blue!8, inner sep=1pt]
  at (1.5,-2.15) {Q1};
\node[font=\scriptsize] at (2.5,-5.05) {$4\times4$ host grid};
\end{tikzpicture}
\caption{Q1 region used for anchor symmetry breaking on a $4\times4$ host grid. The dashed reflection axes divide the grid into four $2\times2$ regions. The anchor vertex $a$ is restricted to the shaded Q1 region.}
\label{fig:q1-region}
\end{figure}

For any feasible embedding, horizontal and vertical reflections can be applied independently so that the anchor vertex lies in $Q_1$. Each reflection is a bijection of $H_h$ and preserves Manhattan distances, so the transformed embedding remains injective and has the same bandwidth. Consequently, adding $\mathcal F_{\mathrm{sym}}$ does not change whether a bandwidth-$K$ embedding exists and does not change the optimal value. Lemma~\ref{lem:q1-symmetry} in Appendix~\ref{app:correctness} provides the formal preservation proof.

\subsection{Incremental Solving and Certification}

The bandwidth component $\Delta_K$ is the only component that depends on $K$. It enforces the bound $K$ through clauses constructed from the threshold variables of the selected distance encoding. Equations~\eqref{eq:delta_k_cutoff} and~\eqref{eq:delta_k_full} define $\Delta_K$ for $\mathcal F_{\mathrm{dist}}^{U^*}$ and $\mathcal F_{\mathrm{dist}}$, respectively. Both forms contain the same two clause groups. Their only difference is that the largest encoded threshold is $\min\{U^*,h-1\}$ in the cutoff-aware form and $h-1$ in the full-distance form. The first group contains a unit clause for each edge and axis when the threshold at $K+1$ does not exceed the applicable limit. This clause restricts the separation along that axis to at most $K$. The second group considers every decomposition $K+1=i+(K-i+1)$ whose two terms do not exceed the applicable limit and forbids the two axis separations from reaching these thresholds simultaneously. Because the two thresholds sum to $K+1$, every forbidden combination would give the edge a Manhattan distance greater than $K$.
\begin{subequations}
\begin{align}
\Delta_K
&=
\bigwedge_{e\in E}
\left(
\bigwedge_{\substack{q\in\{x,y\}\\K+1\le \min\{U^*,h-1\}}}
\neg T^q_{e,K+1}
\land
\bigwedge_{\substack{1\le i\le K\\i\le \min\{U^*,h-1\},\ K-i+1\le \min\{U^*,h-1\}}}
\left(\neg T^x_{e,i}\lor\neg T^y_{e,K-i+1}\right)
\right)
\label{eq:delta_k_cutoff}\\
\Delta_K
&=
\bigwedge_{e\in E}
\left(
\bigwedge_{\substack{q\in\{x,y\}\\K+1\le h-1}}
\neg T^q_{e,K+1}
\land
\bigwedge_{\substack{1\le i\le K\\i\le h-1,\ K-i+1\le h-1}}
\left(\neg T^x_{e,i}\lor\neg T^y_{e,K-i+1}\right)
\right)
\label{eq:delta_k_full}
\end{align}
\end{subequations}

Table~\ref{tab:encoding-configurations} summarizes the four complete encoding configurations considered in this study. The proposed encoding is denoted by $\mathcal F_{\mathrm{full}}(K)$. The remaining notation identifies the component changed relative to the proposed encoding. The notation $\mathcal F_{\mathrm{no\text{-}sym}}(K)$ removes Q1 anchor symmetry breaking. The notation $\mathcal F_{\mathrm{pair}}(K)$ uses pairwise occupancy, while $\mathcal F_{\mathrm{full\text{-}dist}}(K)$ uses the full distance encoding without the cutoff. All four configurations encode the same bandwidth-$K$ decision problem. In $\mathcal F_{\mathrm{full\text{-}dist}}(K)$, $\Delta_K$ is defined by Equation~\eqref{eq:delta_k_full}, and the descending search starts from $K=2(h-1)$, the maximum possible Manhattan distance between two positions on an $h\times h$ grid. In configurations containing $\mathcal F_{\mathrm{dist}}^{U^*}$, $\Delta_K$ is defined by Equation~\eqref{eq:delta_k_cutoff}, and the search starts from $K=U^*$, the initial bandwidth upper bound used to construct $\mathcal F_{\mathrm{dist}}^{U^*}$.

\begin{table}[pos=H]
\centering
\caption{Complete SAT encoding configurations considered in this study.}
\label{tab:encoding-configurations}
{\renewcommand{\arraystretch}{1.5}
\setlength{\tabcolsep}{0pt}
\begin{tabular}{p{0.15\textwidth}p{0.3\textwidth}p{0.55\textwidth}}
\toprule
Notation & Definition & Description \\
\midrule
$\mathcal F_{\mathrm{full}}(K)$ &
$\mathcal F_{\mathrm{pos}} \land \mathcal F_{\mathrm{occ}}^{\mathrm{seq}} \land \mathcal F_{\mathrm{dist}}^{U^*} \land \mathcal F_{\mathrm{sym}} \land \Delta_K$ &
Proposed encoding \\

$\mathcal F_{\mathrm{no\text{-}sym}}(K)$ &
$\mathcal F_{\mathrm{pos}} \land \mathcal F_{\mathrm{occ}}^{\mathrm{seq}} \land \mathcal F_{\mathrm{dist}}^{U^*} \land \Delta_K$ &
Proposed encoding without Q1 anchor symmetry breaking \\

$\mathcal F_{\mathrm{pair}}(K)$ &
$\mathcal F_{\mathrm{pos}} \land \mathcal F_{\mathrm{occ}}^{\mathrm{pair}} \land \mathcal F_{\mathrm{dist}}^{U^*} \land \mathcal F_{\mathrm{sym}} \land \Delta_K$ &
Proposed encoding with pairwise occupancy in place of sequential occupancy \\

$\mathcal F_{\mathrm{full\text{-}dist}}(K)$ &
$\mathcal F_{\mathrm{pos}} \land \mathcal F_{\mathrm{occ}}^{\mathrm{seq}} \land \mathcal F_{\mathrm{dist}} \land \mathcal F_{\mathrm{sym}} \land \Delta_K$ &
Proposed encoding without the cutoff-aware distance restriction \\
\bottomrule
\end{tabular}
}
\end{table}

The incremental procedure applies to any encoding configuration in Table~\ref{tab:encoding-configurations} because $\Delta_K$ is the only component that changes with $K$. For the selected configuration, $\mathit{fixedClauses}$ denotes the clauses generated from all remaining components, which are added to the SAT solver once before the threshold search begins. For example, when $\mathcal F_{\mathrm{full}}(K)$ is selected, $\mathit{fixedClauses}$ contains the clauses generated from $\mathcal F_{\mathrm{pos}}$, $\mathcal F_{\mathrm{occ}}^{\mathrm{seq}}$, $\mathcal F_{\mathrm{dist}}^{U^*}$, and $\mathcal F_{\mathrm{sym}}$. At each tested value of $K$, $\mathit{thresholdClauses}$ denotes the clauses generated from $\Delta_K$, which are added before the corresponding solver call. Table~\ref{tab:incremental-procedure} summarizes the complete procedure, including the update of $K$ from the actual bandwidth computed from a satisfiable model and termination with status \textsc{optimal}, \textsc{feasible}, or \textsc{timeout}. 
Here, \textsc{optimal} indicates that the returned $\mathit{bestBandwidth}$ is a certified optimal value. Status \textsc{feasible} indicates that the search timed out after finding a feasible upper bound, whereas \textsc{timeout} indicates that it timed out without finding a feasible embedding.

When the solver returns \textsc{sat} at threshold $K$, the decoded model yields a feasible embedding $\varphi$ whose actual bandwidth $\beta(G,\varphi)$ is at most $K$. Because this value may be smaller than $K$, the procedure sets the next threshold to $\mathit{actualBandwidth}-1$, thereby skipping intermediate thresholds that cannot improve the current feasible upper bound. If the formula at this next threshold is \textsc{unsat}, no embedding with bandwidth below $\mathit{actualBandwidth}$ exists. Together with the feasible embedding already found, this establishes $\mathit{actualBandwidth}$ as a certified optimal value. If $\mathit{actualBandwidth}=1$, it already attains the trivial lower bound for any graph containing at least one edge and can therefore be certified without another solver call. A timeout establishes neither feasibility nor infeasibility at the current threshold, so the procedure terminates without an optimality certificate.

As $K$ decreases, each newly added $\Delta_K$ strengthens the current formula, while the previously added threshold clauses remain in the same solver. Reusing the solver retains both these clauses and the learned information obtained from earlier calls. The non-incremental ablation instead creates a fresh solver, adds $\mathit{fixedClauses}$ and the current $\mathit{thresholdClauses}$ for each tested threshold, and discards that solver after the call. Thus, both procedures solve the same bandwidth-$K$ decision problems but differ in whether clauses and learned information from earlier calls are retained. Appendix~\ref{app:correctness} presents the correctness proofs for the encoding configurations in Table~\ref{tab:encoding-configurations} and proves that a value returned with status \textsc{optimal} by the incremental procedure is a certified optimal value.

\begin{table}[pos=H]
\centering
\caption{Incremental threshold search and certification.}
\label{tab:incremental-procedure}
{\renewcommand{\arraystretch}{1.15}
\setlength{\tabcolsep}{4pt}
\begin{tabular}{p{0.08\textwidth}p{0.88\textwidth}}
\toprule
\textbf{Input} & A graph $G$, an encoding configuration selected from Table~\ref{tab:encoding-configurations}, a time limit, and an initial bandwidth upper bound $U$. \\
\midrule
\textit{Step 1} & For $\mathcal F_{\mathrm{full}}(K)$, set $\mathit{fixedClauses}$ to the clauses generated from $\mathcal F_{\mathrm{pos}}$, $\mathcal F_{\mathrm{occ}}^{\mathrm{seq}}$, $\mathcal F_{\mathrm{dist}}^{U^*}$, and $\mathcal F_{\mathrm{sym}}$. The other configurations in Table~\ref{tab:encoding-configurations} are handled analogously, with $\Delta_K$ excluded from $\mathit{fixedClauses}$ in every case. \\
\midrule
\textit{Step 2} & Create a SAT solver and add $\mathit{fixedClauses}$ to it. Set $K\leftarrow U$ and $\mathit{bestBandwidth}\leftarrow\textsc{undefined}$. \\
\midrule
\textit{Step 3} & \textbf{While} $K\ge 1$ \textbf{do} \newline
\hspace*{1.5em}Set $\mathit{thresholdClauses}$ to the clauses generated from $\Delta_K$. \newline
\hspace*{1.5em}Add $\mathit{thresholdClauses}$ to the same solver. \newline
\hspace*{1.5em}Set $\mathit{result}\leftarrow\operatorname{Solve}()$. \newline
\hspace*{1.5em}\textbf{If} the time limit is reached and $\mathit{bestBandwidth}$ is defined, \textbf{return} $\mathit{bestBandwidth}$ with status \textsc{feasible}. \newline
\hspace*{1.5em}\textbf{If} the time limit is reached and $\mathit{bestBandwidth}$ is undefined, \textbf{return} status \textsc{timeout}. \newline
\hspace*{1.5em}\textbf{If} $\mathit{result}=\textsc{unsat}$, \textbf{return} $\mathit{bestBandwidth}$ with status \textsc{optimal}. \newline
\hspace*{1.5em}Decode the SAT model into an embedding $\varphi$. \newline
\hspace*{1.5em}Set $\mathit{actualBandwidth}\leftarrow\beta(G,\varphi)$. \newline
\hspace*{1.5em}Set $\mathit{bestBandwidth}\leftarrow\mathit{actualBandwidth}$. \newline
\hspace*{1.5em}\textbf{If} $\mathit{bestBandwidth}=1$, \textbf{return} $\mathit{bestBandwidth}$ with status \textsc{optimal}. \newline
\hspace*{1.5em}Set $K\leftarrow\mathit{bestBandwidth}-1$. \newline
\textbf{End while}. \\
\midrule
\textbf{Output} & The best bandwidth found, if available, together with status \textsc{optimal}, \textsc{feasible}, or \textsc{timeout}. \\
\bottomrule
\end{tabular}
}
\end{table}

\section{Computational Experiments}
\label{sec:experiments}


\subsection{Experimental Setup}
\begin{table}[pos=H]
\caption{Summary of the benchmark sets evaluated in this study.}
\label{tab:benchmark-sets}
\centering
\footnotesize
\begin{tabular}{lcccc}
\toprule
Benchmark set & Instances & $n$ range & $|E|$ range & Host grid \\
\midrule
Regular & 43 & 5--21 & 6--190 & $\lceil\sqrt n\rceil\times\lceil\sqrt n\rceil$ \\
Harwell--Boeing & 93 & 9--362 & 50--5856 & $\lceil\sqrt n\rceil\times\lceil\sqrt n\rceil$ \\
Regular, 2-row & 43 & 5--21 & 6--190 & $2\times\lceil n/2\rceil$ \\
Harwell--Boeing, 2-row & 93 & 9--362 & 50--5856 & $2\times\lceil n/2\rceil$ \\
Regular, extended & 43 & 5--21 & 6--190 & $n\times n$ \\
Harwell--Boeing, extended & 49 & 9--147 & 50--2211 & $n\times n$ \\
\bottomrule
\end{tabular}
\end{table}

Table~\ref{tab:benchmark-sets} summarizes the benchmark sets evaluated in this study. The Regular set contains 43 guest graphs from the benchmark collection accompanying RLTS~\citep{Zhou2DBMPRepository}, which has also been used in earlier 2DBMP studies~\citep{RodriguezGarcia2021,Cavero2023,Zhou2026}. The Harwell--Boeing set contains 93 guest graphs derived from sparse-matrix patterns with $n\le 400$ from the Matrix Market repository~\citep{MatrixMarket2014,Zhou2DBMPRepository}. This size limit was imposed to keep the exact experiments within the available computational resources. The same 93 instances were evaluated by DRSA~\citep{TorresJimenez2025}, while overlapping subsets were considered in other 2DBMP studies~\citep{RodriguezGarcia2021,Cavero2023,Khandelwal2023RVNS,Zhou2026}. The remaining columns report the ranges of the numbers of vertices $n$ and edges $|E|$, together with the host-grid dimensions. The standard experiments use the $\lceil\sqrt n\rceil\times\lceil\sqrt n\rceil$ host grid. The additional host-grid experiments include a $2\times\lceil n/2\rceil$ host-grid evaluation on all 43 Regular and 93 Harwell--Boeing instances, and an $n\times n$ host-grid evaluation on all 43 Regular and 49 instances from the same Harwell--Boeing set.

All experiments were conducted on a workstation equipped with an Intel Core i7-12700KF processor and 32~GB of RAM, running Ubuntu 24.04.3 LTS. The SAT implementation invoked CaDiCaL 1.9.5 through the PySAT toolkit~\citep{Ignatiev2018PySAT,Ignatiev2024AccessibleSAT,Biere2024Cadical}. The commercial-solver comparisons used Gurobi Optimizer 12.0.3~\citep{Gurobi1203} and two formulations in IBM ILOG CPLEX Optimization Studio 22.1.1~\citep{IBMCPLEX2211}: a mixed-integer programming formulation (CPLEX MIP) and a constraint-programming formulation (CPLEX CP). A time limit of 3600~s was applied to each local run. For all locally evaluated methods, the reported solving time includes the complete run for each instance, including model or encoding construction and all solver calls.

The initial SAT thresholds and the domains of the commercial bandwidth variable $\beta$ are specified separately for each host grid. On the standard $h\times h$ grid, where $h=\lceil\sqrt n\rceil$, the cutoff-aware SAT configurations use $U^*$ from Equation~\eqref{eq:ub_start} as both the distance cutoff and the initial threshold. By contrast, $\mathcal F_{\mathrm{full\text{-}dist}}(K)$ starts from $K=2(h-1)$, and the commercial solvers (Gurobi, CPLEX MIP, and CPLEX CP) minimize $\beta$ over $0\le\beta\le2(h-1)$. On the $n\times n$ grid, the cutoff-aware SAT configuration uses $\delta(n)$ from Equation~\eqref{eq:delta_ub} as both the distance cutoff and the initial threshold. The commercial solvers minimize $\beta$ over $0\le\beta\le2(n-1)$ on this host.
On the $2\times\lceil n/2\rceil$ host, $\delta(n)$ is not guaranteed to provide a feasible bandwidth threshold and is therefore used only as the starting value of the initialization procedure. Starting from $\delta(n)$, successive candidate thresholds are generated by increasing the multiplier applied to $\delta(n)$ in steps of $0.5$ (yielding $\delta(n),\lceil1.5\delta(n)\rceil,2\delta(n),\lceil2.5\delta(n)\rceil,\ldots$) until the resulting threshold reaches or exceeds the lower bound in Equation~\eqref{eq:2xm-lower-bound}. SAT feasibility checks then determine a valid initial upper bound. Appendix~\ref{app:initial-bound-2xm} provides the complete initialization procedure. The resulting upper bound is used as both the distance cutoff and the initial SAT threshold. The commercial solvers minimize $\beta$ over $0\le\beta\le\lceil n/2\rceil$ on this host.

The proposed SAT method, Gurobi, CPLEX MIP, and CPLEX CP were run locally, with each method run once on each instance.
The SAT experiments were performed as single runs because preliminary repeated runs on representative instances showed no material variation. Similarly, Gurobi and CPLEX MIP/CP were each run once per instance under their respective default configurations. The two CPLEX formulations are kept separate because they use different optimization models. The third constraint satisfaction programming model (M3) of \citet{RodriguezGarcia2021} is included as a published exact reference.
Published results from BVNS~\citep{RodriguezGarcia2021}, IG~\citep{Cavero2023}, RVNS~\citep{Khandelwal2023RVNS}, DRSA~\citep{TorresJimenez2025}, and RLTS~\citep{Zhou2026} are included only for comparison of objective values and are not treated as evidence of optimality. 
For the SAT runs, \texttt{OPT} is assigned according to Proposition~\ref{prop:incremental-optimality}, when a feasible bandwidth $B$ is followed by an \textsc{unsat} result at $B-1$ or when a feasible bandwidth of $1$ is found. 
For Gurobi and CPLEX, \texttt{OPT} denotes the solver's native proven-optimal status. 
In the summary result tables, \texttt{OPT} indicates a certified optimal value, \texttt{FEAS} indicates termination at the time limit with a feasible incumbent but without an optimality proof, and \texttt{TO} indicates timeout without a feasible solution. In the detailed Appendix tables, a star marks a certified optimal bandwidth, \texttt{FEAS} in parentheses marks a feasible incumbent at the time limit, and \texttt{-- (TO)} marks timeout without a feasible solution. Subsections~\ref{sec:exp-ablation}--\ref{sec:exp-commercial} then report the standard-grid results on the $\lceil\sqrt n\rceil\times\lceil\sqrt n\rceil$ host. Specifically, Subsection~\ref{sec:exp-ablation} evaluates the SAT encoding and incremental solving components, Subsection~\ref{sec:exp-published} compares certified SAT values with published reference values, and Subsection~\ref{sec:exp-commercial} compares the proposed SAT method with commercial exact solvers. Subsection~\ref{sec:exp-additional-host-grids} reports the additional host-grid comparisons on the $2\times\lceil n/2\rceil$ and $n\times n$ grids.
\subsection{Results on the \texorpdfstring{$\lceil\sqrt n\rceil\times\lceil\sqrt n\rceil$}{ceil(sqrt(n)) x ceil(sqrt(n))} Grid}
\label{sec:exp-sqrt-grid}

\subsubsection{Ablation Study of SAT Encoding and Incremental Solving}
\label{sec:exp-ablation}
\begin{table}[pos=ht]
\centering
\small
\setlength{\tabcolsep}{2pt}
\caption{\texttt{OPT} and \texttt{FEAS} counts for the five SAT variants on the Regular and Harwell--Boeing sets using the $\lceil\sqrt n\rceil\times\lceil\sqrt n\rceil$ host grid.}
\label{tab:ablation-results-summary}
\begin{tabular}{l@{\hspace{1.5em}}rr@{}p{3em}@{}rr}
\toprule
& \multicolumn{2}{c}{Regular (43)} & & \multicolumn{2}{c}{Harwell--Boeing (93)} \\
\cmidrule(lr{1.2em}){2-3}\cmidrule(lr){5-6}
Variant & \texttt{OPT} & \texttt{FEAS} & & \texttt{OPT} & \texttt{FEAS} \\
\midrule
$\mathcal{F}_{\mathrm{full}}$ & 41 & 2 &  & \textbf{42} & \textbf{51} \\
$\mathcal{F}_{\mathrm{no\text{-}sym}}$ & 41 & 2 &  & 34 & 59 \\
$\mathcal{F}_{\mathrm{pair}}$ & 41 & 2 &  & 35 & 58 \\
$\mathcal{F}_{\mathrm{full\text{-}dist}}$ & 41 & 2 &  & 37 & 56 \\
Non-incremental $\mathcal{F}_{\mathrm{full}}$ & 41 & 2 &  & 37 & 56 \\
\bottomrule
\end{tabular}
\end{table}

This ablation compares the proposed $\mathcal F_{\mathrm{full}}(K)$ with $\mathcal F_{\mathrm{no\text{-}sym}}(K)$, $\mathcal F_{\mathrm{pair}}(K)$, and $\mathcal F_{\mathrm{full\text{-}dist}}(K)$, whose configurations are summarized in Table~\ref{tab:encoding-configurations}. It also includes a non-incremental variant of $\mathcal F_{\mathrm{full}}(K)$ to evaluate the effect of reusing the SAT solver across bandwidth thresholds. Table~\ref{tab:ablation-results-summary} reports the numbers of instances for which each variant proves optimality or reaches the time limit. On the Regular set, all five variants return the same bandwidth and status on every instance, so no difference in solution quality is observed. On the Harwell--Boeing set, $\mathcal F_{\mathrm{full}}$ has the highest certification coverage with 42 instances, compared with 34 for $\mathcal F_{\mathrm{no\text{-}sym}}$, 35 for $\mathcal F_{\mathrm{pair}}$, and 37 for both $\mathcal F_{\mathrm{full\text{-}dist}}$ and non-incremental $\mathcal F_{\mathrm{full}}$. Appendix~\ref{app:ablation-encoding-size-comparison} reports aggregate variable and clause counts before $\Delta_K$ is added.

\begin{figure}[pos=htbp]
\centering
\includegraphics[width=0.96\textwidth]{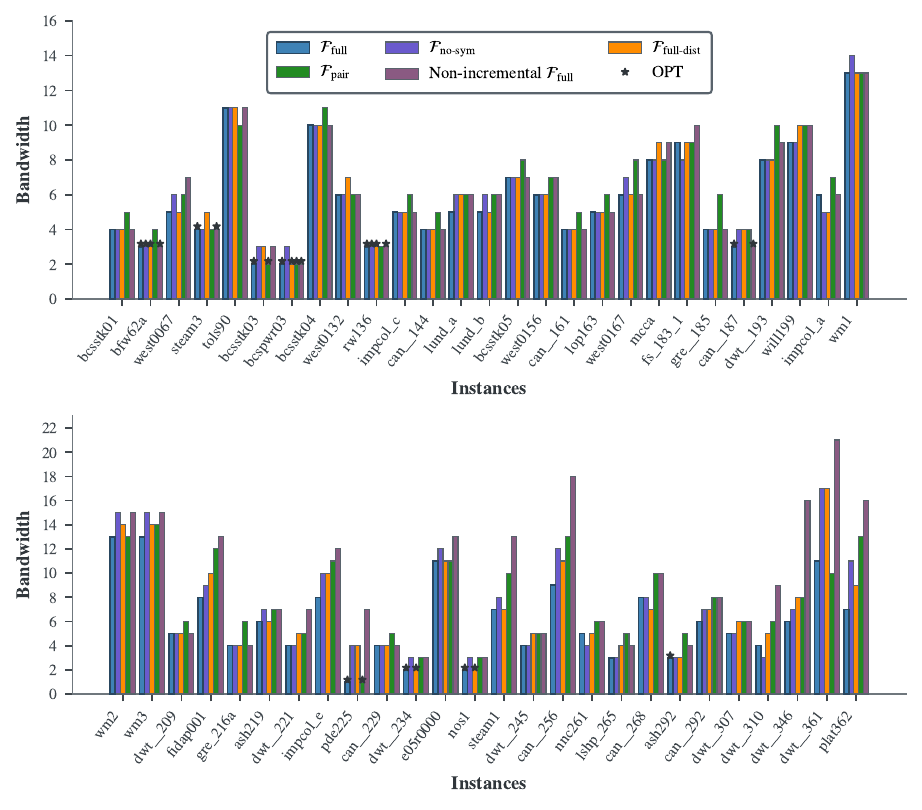}
\caption{Returned-bandwidth comparison of the five SAT variants on the 53 Harwell--Boeing instances with differing bandwidths or statuses.}
\label{fig:ablation-bandwidth-comparison}
\end{figure}

\begin{figure}[pos=htbp]
\centering
\includegraphics[width=0.96\textwidth]{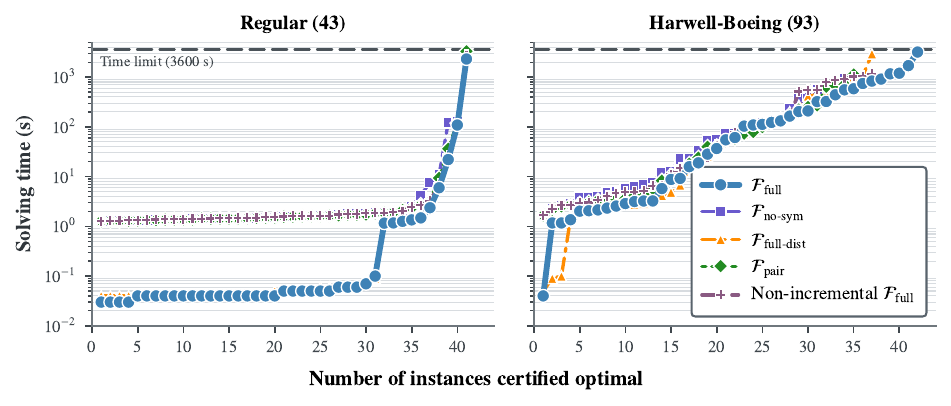}
\caption{Cumulative solving-time comparison of the five SAT variants on the standard $\lceil\sqrt{n}\rceil\times\lceil\sqrt{n}\rceil$ host grid.}
\label{fig:ablation-opt-cactus}
\end{figure}

Figures~\ref{fig:ablation-bandwidth-comparison} and~\ref{fig:ablation-opt-cactus} complement Table~\ref{tab:ablation-results-summary} with instance-level views of returned bandwidths and certification performance, respectively. Figure~\ref{fig:ablation-bandwidth-comparison} shows the 53 Harwell--Boeing instances for which at least two variants differ in returned bandwidth or status, whereas Figure~\ref{fig:ablation-opt-cactus} plots cumulative solving time over the instances certified by each variant. In this and subsequent bandwidth figures, a star marks a certified optimal bandwidth, an unstarred bar marks a feasible incumbent at the time limit, and \texttt{--} indicates that no feasible bandwidth was found. In Figure~\ref{fig:ablation-bandwidth-comparison}, $\mathcal F_{\mathrm{full}}$ attains the smallest returned bandwidth, including ties, on 46 of the 53 instances. On the remaining seven instances, at least one comparison variant returns a smaller bandwidth. In all seven cases, however, $\mathcal F_{\mathrm{full}}$ has status \texttt{FEAS}, so these differences concern incumbent quality rather than certified optimal values. Figure~\ref{fig:ablation-opt-cactus} shows that $\mathcal F_{\mathrm{full}}$ certifies additional Harwell--Boeing instances for which one or more comparison variants reaches the time limit. Over the certification-count range attained by all five variants, however, the cumulative solving-time profiles remain broadly similar. Taken together, these results indicate that the seven weaker \texttt{FEAS} incumbents constitute a limited trade-off alongside the higher Harwell--Boeing certification coverage and the generally consistent returned-bandwidth performance of $\mathcal F_{\mathrm{full}}$.



\subsubsection{Comparison with Published Results}
\label{sec:exp-published}

The comparison distinguishes the published exact constraint satisfaction model M3 of~\citet{RodriguezGarcia2021} from the heuristic methods BVNS~\citep{RodriguezGarcia2021}, IG~\citep{Cavero2023}, RVNS~\citep{Khandelwal2023RVNS}, DRSA~\citep{TorresJimenez2025}, and RLTS~\citep{Zhou2026}. The bandwidths published for the heuristic methods are treated as feasible upper bounds rather than evidence of optimality. The published methods were evaluated under different experimental settings. For example, the M3 experiments used a maximum CPU time of 72~h~\citep{RodriguezGarcia2021}, whereas DRSA reported, for each matrix, the best bandwidth obtained from five runs~\citep{TorresJimenez2025}. The comparison in this subsection therefore focuses on published bandwidths and available exact-certification status rather than solving time. Within this audited comparison, $\mathcal{F}_{\mathrm{full}}$ improves the best published bandwidth on three Harwell--Boeing instances: \texttt{bfw62a}, \texttt{bcsstk03}, and \texttt{bcsstk22}. Table~\ref{tab:published-square-grid-improvements} reports the corresponding improvements from 4 to 3, 3 to 2, and 3 to 2, all certified optimal by $\mathcal{F}_{\mathrm{full}}$.

{
\begin{table}[pos=ht]
\caption{Certified optimal bandwidths below the compared published values for three Harwell--Boeing instances on the $\lceil\sqrt n\rceil\times\lceil\sqrt n\rceil$ host grid.}
\label{tab:published-square-grid-improvements}
\centering
\small
\setlength{\tabcolsep}{3pt}
\begin{tabular}{lrrrrrrr@{\hspace{2.5em}}r}
\toprule
Instance & $n$ & M3 & BVNS & IG & RVNS & DRSA & RLTS & \textbf{$\mathcal{F}_{\mathrm{full}}$} \\
\midrule
\texttt{bfw62a} & 62 & -- & -- & -- & -- & 4 & -- & $\mathbf{3}^{*}$ \\
\texttt{bcsstk03} & 112 & -- & -- & -- & -- & 3 & -- & $\mathbf{2}^{*}$ \\
\texttt{bcsstk22} & 138 & 8 & 4 & 3 & 4 & 3 & 3 & $\mathbf{2}^{*}$ \\
\bottomrule
\end{tabular}
\end{table}
}

Table~\ref{tab:published-square-grid-summary} places the three improvements within the broader comparison with each published method. For each method, \emph{Published rows} gives the number of instances with an available published bandwidth, and the parenthetical counts for M3 identify how many of those bandwidths were marked optimal in the source study. The \emph{Lower}, \emph{Equal}, and \emph{Higher} columns classify the bandwidth returned by $\mathcal{F}_{\mathrm{full}}$ relative to the corresponding published bandwidth. When a method provides multiple published bandwidths for an instance, the comparison uses its best published bandwidth. The \texttt{OPT} block contains comparisons for which $\mathcal{F}_{\mathrm{full}}$ certifies its bandwidth as optimal, whereas the \texttt{FEAS} block contains comparisons based on feasible incumbents without an optimality proof.

{
\begin{table}[pos=ht]
\caption{Aggregate comparison of $\mathcal{F}_{\mathrm{full}}$ with published bandwidths on the $\lceil\sqrt n\rceil\times\lceil\sqrt n\rceil$ host grid, separated into \texttt{OPT} and \texttt{FEAS} groups.}
\label{tab:published-square-grid-summary}
\centering
\small
\setlength{\tabcolsep}{2pt}
\begin{tabular}{@{}l@{\hspace{1.2em}}r@{\hspace{2.0em}}rrr@{\hspace{2.5em}}rrr@{}}
\toprule
Published method & Published rows & \multicolumn{3}{c}{\textbf{$\mathcal{F}_{\mathrm{full}}$}: \texttt{OPT}} & \multicolumn{3}{c}{\textbf{$\mathcal{F}_{\mathrm{full}}$}: \texttt{FEAS}} \\
\cmidrule(lr{1.2em}){3-5}\cmidrule(l{1.2em}r){6-8}
& & Lower & Equal & Higher & Lower & Equal & Higher \\
\midrule
\multicolumn{8}{l}{\textit{Regular (43)}} \\
M3 & 39 (39 \texttt{OPT}) & 0 & 39 & 0 & 0 & 0 & 0 \\
\addlinespace[0.4em]
BVNS & 43 & 6 & 35 & 0 & 0 & 2 & 0 \\
IG & 43 & 0 & 41 & 0 & 0 & 2 & 0 \\
RLTS & 43 & 0 & 41 & 0 & 0 & 2 & 0 \\
\addlinespace[0.8em]
\multicolumn{8}{l}{\textit{Harwell--Boeing (93)}} \\
M3 & 24 (1 \texttt{OPT}) & 7 & 1 & 0 & 16 & 0 & 0 \\
\addlinespace[0.4em]
BVNS & 24 & 5 & 3 & 0 & 12 & 2 & 2 \\
IG & 53 & 10 & 13 & 0 & 0 & 18 & 12 \\
RVNS & 24 & 6 & 2 & 0 & 10 & 2 & 4 \\
DRSA & 93 & 4 & 38 & 0 & 0 & 14 & 37 \\
RLTS & 53 & 6 & 17 & 0 & 0 & 16 & 14 \\
\bottomrule
\end{tabular}
\end{table}
}

In comparison with M3, $\mathcal{F}_{\mathrm{full}}$ matches all 39 Regular bandwidths certified by the source study. On Harwell--Boeing, $\mathcal{F}_{\mathrm{full}}$ returns \texttt{OPT} for eight of the 24 instances with a published M3 bandwidth, including seven lower bandwidths and one bandwidth equal to the only M3-certified value. The remaining 16 comparisons are \texttt{FEAS}, and each incumbent returned by $\mathcal{F}_{\mathrm{full}}$ is lower than the corresponding M3 bandwidth but is not proved optimal. Thus, within this published M3 subset, $\mathcal{F}_{\mathrm{full}}$ extends exact certification from one to eight Harwell--Boeing instances. For the remaining 69 Harwell--Boeing instances without a published M3 bandwidth, $\mathcal{F}_{\mathrm{full}}$ returns a feasible bandwidth in every case and certifies 34 of these values as optimal.

In comparison with the heuristic methods, the principal strength of $\mathcal{F}_{\mathrm{full}}$ lies in exact certification rather than uniform dominance in incumbent quality. Every \texttt{OPT} bandwidth returned by $\mathcal{F}_{\mathrm{full}}$ is no higher than the corresponding published upper bound. A lower entry is a method-specific certified improvement, whereas an equal entry certifies the matching heuristic upper bound as optimal. The \texttt{FEAS} comparisons show a more mixed pattern. All six Regular comparisons are equal. On Harwell--Boeing, most \texttt{FEAS} incumbents are lower than or equal to the corresponding BVNS and RVNS values. Equal incumbents occur slightly more often than higher ones for IG and RLTS, while most incumbents are higher than the DRSA values. These results show that $\mathcal{F}_{\mathrm{full}}$ complements heuristic search by certifying matching or improved bandwidths, while its incomplete runs still provide competitive incumbents for several method-specific subsets. The higher \texttt{FEAS} values, particularly relative to DRSA, show that this advantage is not uniform under the 3600~s time limit. Detailed instance-level comparisons are reported in Table~\ref{tab:standard-grid-combined-detail} in Appendix~\ref{app:detailed-results}.


\subsubsection{Comparison with Commercial Solvers}
\label{sec:exp-commercial}

The proposed $\mathcal F_{\mathrm{full}}$ configuration is compared with Gurobi, CPLEX MIP, and CPLEX CP on the same instances and standard square-root host grid, using the same hardware and 3600~s time limit.
Table~\ref{tab:commercial-results-summary} summarizes the certification outcomes. On the Regular set, $\mathcal F_{\mathrm{full}}$, Gurobi, and CPLEX MIP certify 41 of 43 instances, while CPLEX CP certifies 42. On the Harwell--Boeing set, the corresponding counts are 42, 10, 8, and 20. The main distinction therefore appears on Harwell--Boeing, where $\mathcal F_{\mathrm{full}}$ provides broader exact-certification coverage.

\begin{table}[pos=ht]
\centering
\small
\setlength{\tabcolsep}{2pt}
\caption{\texttt{OPT} and \texttt{FEAS} counts for $\mathcal F_{\mathrm{full}}$, Gurobi, CPLEX MIP, and CPLEX CP on the Regular and Harwell--Boeing sets using the $\lceil\sqrt n\rceil\times\lceil\sqrt n\rceil$ host grid.}
\label{tab:commercial-results-summary}
\begin{tabular}{l@{\hspace{1.5em}}rr@{}p{3em}@{}rr}
\toprule
& \multicolumn{2}{c}{Regular (43)} & & \multicolumn{2}{c}{Harwell--Boeing (93)} \\
\cmidrule(lr{1.2em}){2-3}\cmidrule(lr){5-6}
Method & \texttt{OPT} & \texttt{FEAS} & & \texttt{OPT} & \texttt{FEAS} \\
\midrule
$\mathcal{F}_{\mathrm{full}}$ & 41 & 2 &  & \textbf{42} & \textbf{51} \\
Gurobi & 41 & 2 &  & 10 & 83 \\
CPLEX MIP & 41 & 2 &  & 8 & 85 \\
CPLEX CP & \textbf{42} & \textbf{1} &  & 20 & 73 \\
\bottomrule
\end{tabular}
\end{table}

\begin{figure}[pos=H]
\centering
\includegraphics[width=0.96\textwidth]{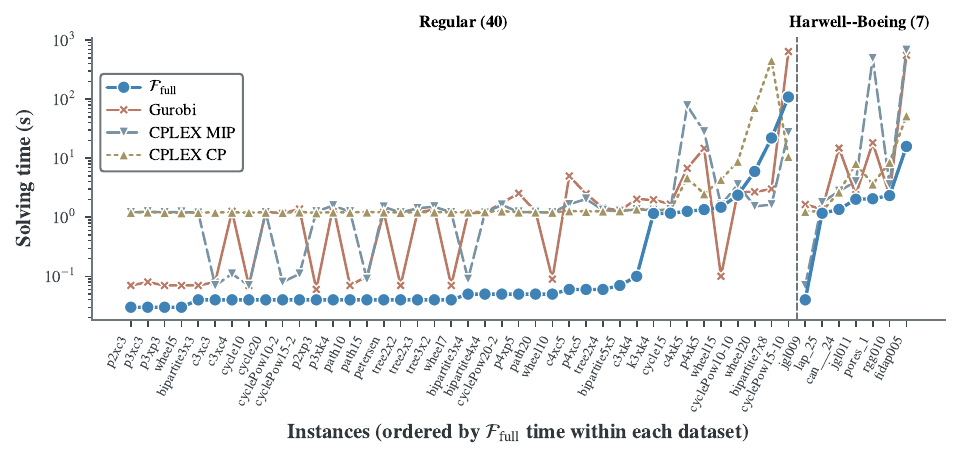}
\caption{Solving-time comparison of $\mathcal F_{\mathrm{full}}$, Gurobi, CPLEX MIP, and CPLEX CP on the $\lceil\sqrt n\rceil\times\lceil\sqrt n\rceil$ host grid for the 47 instances on which all four methods certify the same bandwidth.}
\label{fig:commercial-common-opt-runtime}
\end{figure}


\vspace{-0.6em}
\begin{figure}[pos=H]
\centering
\includegraphics[width=0.96\textwidth]{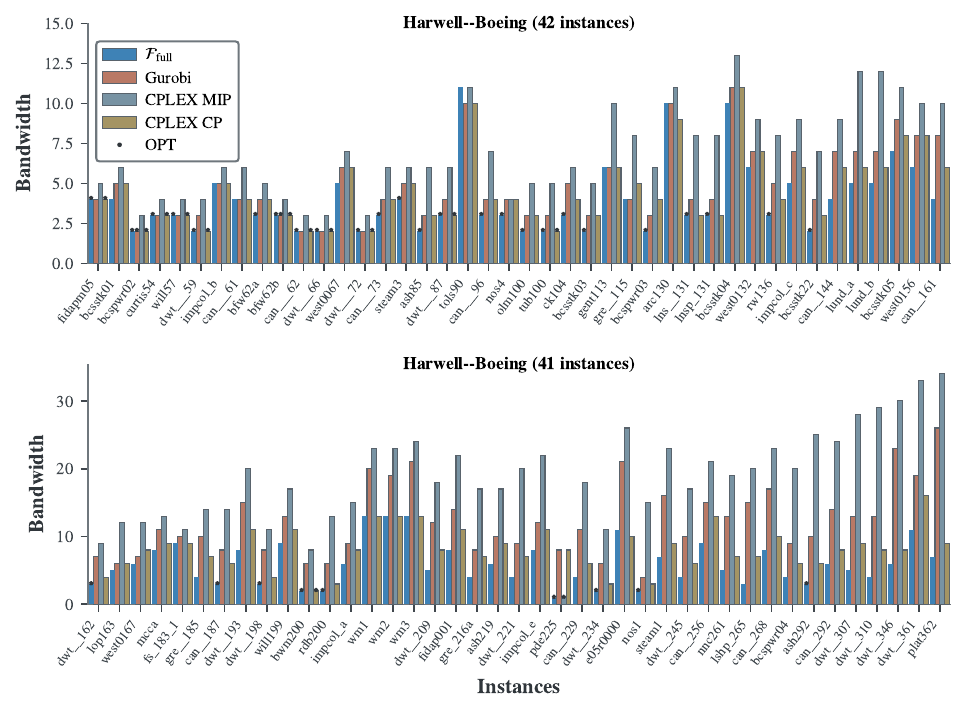}
\caption{Returned-bandwidth comparison of $\mathcal F_{\mathrm{full}}$, Gurobi, CPLEX MIP, and CPLEX CP on the $\lceil\sqrt n\rceil\times\lceil\sqrt n\rceil$ host grid for the 83 Harwell--Boeing instances with at least one \texttt{FEAS} outcome and differing returned bandwidths.}
\label{fig:commercial-timeout-values}
\end{figure}

Figures~\ref{fig:commercial-common-opt-runtime} and~\ref{fig:commercial-timeout-values} complement the certification counts with comparisons of solving time and returned bandwidth, respectively. Figure~\ref{fig:commercial-timeout-values} shows the 83 Harwell--Boeing instances on which at least one method has status \texttt{FEAS} and the returned bandwidths differ. On this subset, $\mathcal F_{\mathrm{full}}$ returns a lower bandwidth than Gurobi on 69 instances, CPLEX MIP on 81, and CPLEX CP on 59. Thus, $\mathcal F_{\mathrm{full}}$ generally returns lower bandwidths on these timeout-involving instances, although the improvement is not uniform across every comparison. Figure~\ref{fig:commercial-common-opt-runtime} then restricts the solving-time comparison to the 47 instances on which all four methods certify the same bandwidth. On this matched subset, $\mathcal F_{\mathrm{full}}$ is faster than Gurobi and CPLEX MIP on 44 of 47 instances and faster than CPLEX CP on 46 of 47. Hence, when both the certified bandwidth and status are matched, $\mathcal F_{\mathrm{full}}$ also has a lower solving time in most of the displayed cases.

\subsection{Additional Host-Grid Results}
\label{sec:exp-additional-host-grids}

\subsubsection{Results on the \texorpdfstring{$2\times\lceil n/2\rceil$}{2 x ceil(n/2)} Grid}
\label{sec:exp-2xn-grid}

This experiment compares $\mathcal F_{\mathrm{full}}$, Gurobi, CPLEX MIP, and CPLEX CP under the same local experimental setup and 3600~s time limit on the $2\times\lceil n/2\rceil$ host grid. Lemma~\ref{lem:two-row-correctness} establishes the exactness of the incremental SAT procedure on this host. Table~\ref{tab:two-row-results-summary} summarizes the outcome counts. On the Regular set, CPLEX MIP certifies all 43 instances, while $\mathcal F_{\mathrm{full}}$ and Gurobi certify 42 and CPLEX CP certifies 38. The distinction is clearer on Harwell--Boeing, where $\mathcal F_{\mathrm{full}}$ returns \texttt{OPT} on 23 instances, compared with 15 for Gurobi, 7 for CPLEX MIP, and 8 for CPLEX CP. When \texttt{OPT} and \texttt{FEAS} are counted together, the four methods return a feasible bandwidth on 64, 43, 33, and 62 instances, respectively. Thus, $\mathcal F_{\mathrm{full}}$ provides the broadest exact-certification coverage and returns a feasible bandwidth on the largest number of Harwell--Boeing instances, although 29 instances remain \texttt{TO}.

\begin{table}[pos=H]
\centering
\caption{Summary of results on the Regular and Harwell--Boeing datasets for $\mathcal F_{\mathrm{full}}$, Gurobi, CPLEX MIP, and CPLEX CP using the $2\times\lceil n/2\rceil$ host grid.}
\label{tab:two-row-results-summary}
\small
\setlength{\tabcolsep}{4pt}
\begin{tabular}{lrrrrrr}
\toprule
& \multicolumn{3}{c}{Regular (43)} & \multicolumn{3}{c}{Harwell--Boeing (93)} \\
\cmidrule(lr){2-4}\cmidrule(lr){5-7}
Method & \texttt{OPT} & \texttt{FEAS} & \texttt{TO} & \texttt{OPT} & \texttt{FEAS} & \texttt{TO} \\
\midrule
$\mathcal{F}_{\mathrm{full}}$ & 42 & 1 & \textbf{0} & \textbf{23} & 41 & \textbf{29} \\
Gurobi & 42 & 0 & 1 & 15 & 28 & 50 \\
CPLEX MIP & \textbf{43} & 0 & \textbf{0} & 7 & 26 & 60 \\
CPLEX CP & 38 & 0 & 5 & 8 & 54 & 31 \\
\bottomrule
\end{tabular}
\end{table}

Figure~\ref{fig:two-row-common-opt-solving-time} restricts the solving-time comparison to the 44 instances on which all four methods return the same bandwidth with status \texttt{OPT}. To avoid conflating runtime with differences in certification outcome, runtime comparisons are also reported on the subset of instances for which all compared methods certify the same bandwidth. On this matched subset, under the reported experimental configuration and single-run protocol, $\mathcal F_{\mathrm{full}}$ achieved a lower observed solving time than Gurobi, CPLEX MIP, and CPLEX CP on 41, 38, and 39 instances, respectively. These counts describe the observed single-run results and should not be interpreted as evidence of statistically significant runtime differences. The complete instance-level results for this host grid are reported in Table~\ref{tab:auxiliary-host-detail} in Appendix~\ref{app:detailed-results}.

Figure~\ref{fig:two-row-bandwidth-comparison} shows the 66 instances on which at least two methods differ in bandwidth or status. In the pairwise comparisons with Gurobi, CPLEX MIP, and CPLEX CP, $\mathcal F_{\mathrm{full}}$ returns a lower bandwidth or returns a bandwidth when the compared solver returns none on 35, 53, and 30 instances, respectively. The reverse occurs on 5, 1, and 11 instances. Thus, $\mathcal F_{\mathrm{full}}$ generally provides broader returned-bandwidth coverage and lower incumbents in these comparisons, although the difference is not uniform.

\begin{figure}[pos=h]
\centering
\includegraphics[width=0.96\textwidth]{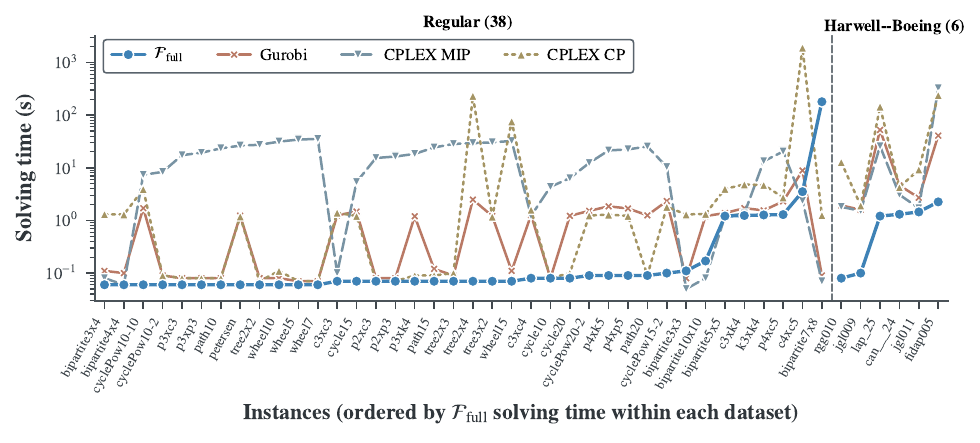}
\caption{Solving-time comparison of $\mathcal F_{\mathrm{full}}$, Gurobi, CPLEX MIP, and CPLEX CP on the $2\times\lceil n/2\rceil$ host grid for the 44 instances where all four methods return the same bandwidth with status \texttt{OPT}.}
\label{fig:two-row-common-opt-solving-time}
\end{figure}

\begin{figure}[pos=h]
\centering
\includegraphics[width=0.96\textwidth]{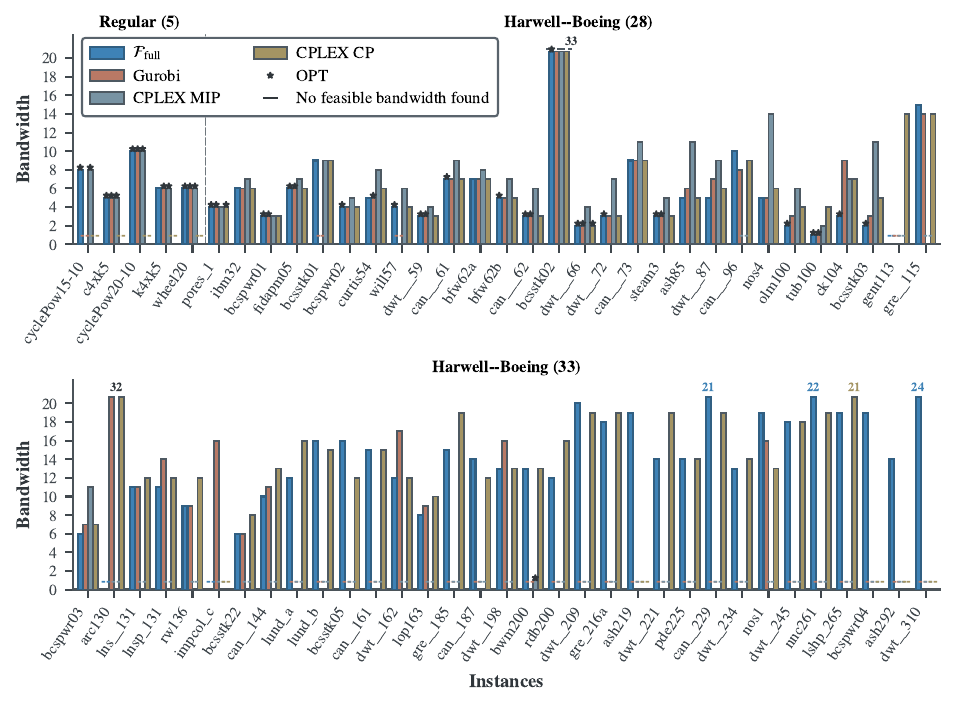}
\caption{Bandwidth comparison of $\mathcal F_{\mathrm{full}}$, Gurobi, CPLEX MIP, and CPLEX CP on the $2\times\lceil n/2\rceil$ host grid for the 66 instances where at least two methods differ in bandwidth or status.}
\label{fig:two-row-bandwidth-comparison}
\end{figure}

\FloatBarrier

\subsubsection{Results on the \texorpdfstring{$n\times n$}{n x n} Grid}
\label{sec:exp-extended-nxn}

These experiments are intended to assess the behavior of the exact SAT approach under alternative host geometries rather than to compare objective values across different host-grid settings. We further extend the matched local comparison of $\mathcal F_{\mathrm{full}}$, Gurobi, CPLEX MIP, and CPLEX CP to the $n\times n$ host grid, using the same 3600~s time limit and experimental protocol.

For the Harwell--Boeing benchmark, we restrict this experiment to 49 small-graph instances ($n\leq150$), as larger instances are substantially more challenging for exact approaches and are difficult to solve or certify within the imposed time limit. This selection provides a more informative comparison under the $n\times n$ host geometry.

Table~\ref{tab:nxn-results-summary} summarizes the outcomes for 43 Regular and 49 Harwell--Boeing instances. On the Regular set, $\mathcal F_{\mathrm{full}}$ certifies 38 instances, compared with 36 for CPLEX CP and 34 each for CPLEX MIP and Gurobi. On Harwell--Boeing, the corresponding counts are 32, 18, 2, and 5. Both $\mathcal F_{\mathrm{full}}$ and CPLEX CP return either an \texttt{OPT} or \texttt{FEAS} outcome on every evaluated instance, whereas CPLEX MIP has 1 Regular and 16 Harwell--Boeing \texttt{TO} outcomes and Gurobi has 19 Harwell--Boeing \texttt{TO} outcomes. Within this evaluated subset, the main distinction is therefore the broader exact-certification coverage of $\mathcal F_{\mathrm{full}}$.

\begin{table}[pos=ht]
\centering
\caption{Summary of results on the Regular and Harwell--Boeing datasets for $\mathcal F_{\mathrm{full}}$, Gurobi, CPLEX MIP, and CPLEX CP using the $n\times n$ host grid.}
\label{tab:nxn-results-summary}
\small
\setlength{\tabcolsep}{3.5pt}
\begin{tabular}{lrrr@{\hspace{3em}}rrr}
\toprule
& \multicolumn{3}{c}{Regular (43)} & \multicolumn{3}{c}{Harwell--Boeing (49)} \\
\cmidrule(lr){2-4}\cmidrule(lr){5-7}
Method & \texttt{OPT} & \texttt{FEAS} & \texttt{TO} & \texttt{OPT} & \texttt{FEAS} & \texttt{TO} \\
\midrule
$\mathcal{F}_{\mathrm{full}}$ & \textbf{38} & 5 & \textbf{0} & \textbf{32} & 17 & \textbf{0} \\
CPLEX CP & 36 & 7 & \textbf{0} & 18 & 31 & \textbf{0} \\
CPLEX MIP & 34 & 8 & 1 & 2 & 31 & 16 \\
Gurobi & 34 & 9 & \textbf{0} & 5 & 25 & 19 \\
\bottomrule
\end{tabular}
\end{table}


\begin{figure}[pos=H]
\centering
\includegraphics[width=0.9\textwidth]{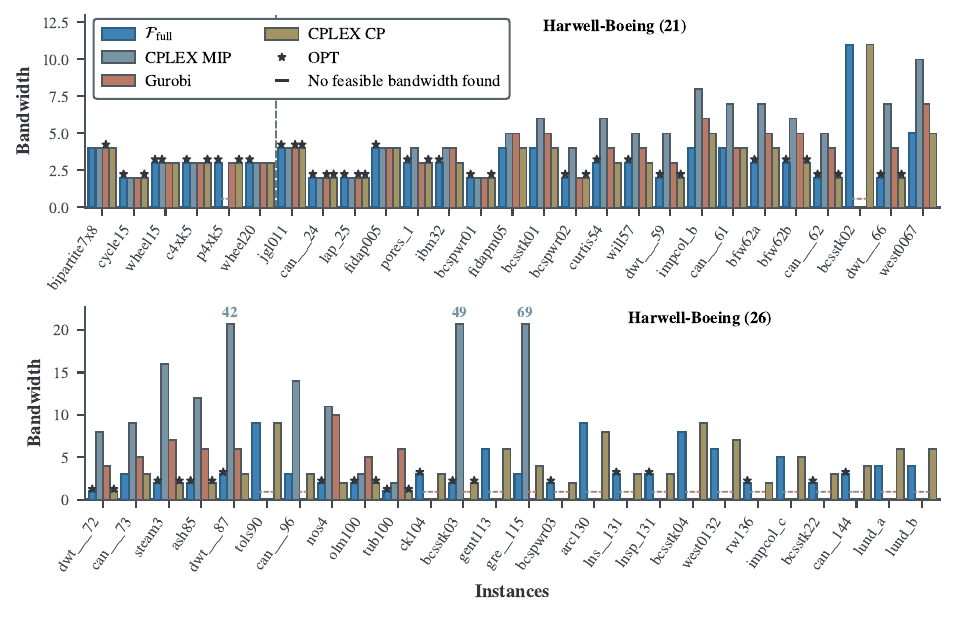}
\caption{Returned-bandwidth comparison of $\mathcal F_{\mathrm{full}}$, Gurobi, CPLEX MIP, and CPLEX CP on the 53 extended $n\times n$ host instances with differing returned bandwidths or statuses.}
\label{fig:nxn-timeout-values}
\end{figure}

\begin{figure}[pos=H]
\centering
\includegraphics[width=0.9\textwidth]{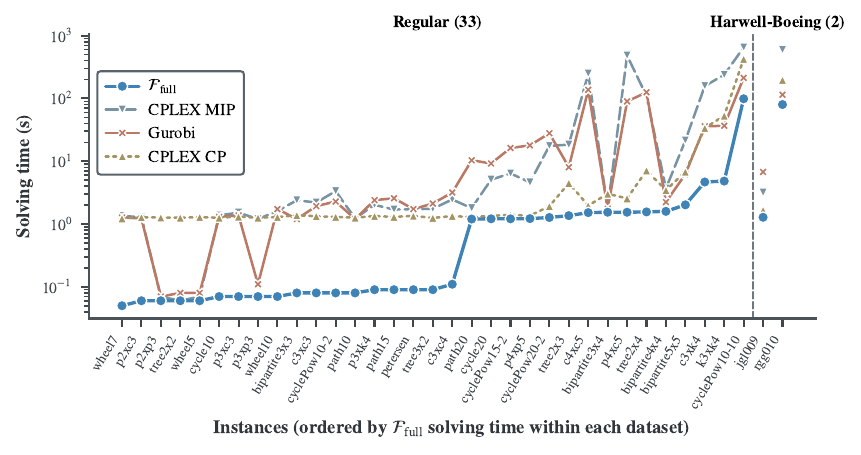}
\caption{Solving-time comparison of $\mathcal F_{\mathrm{full}}$, Gurobi, CPLEX MIP, and CPLEX CP on the $n\times n$ host grid for the 35 instances certifying the same bandwidth.}
\label{fig:nxn-common-opt-solving-time}
\end{figure}
Figure~\ref{fig:nxn-timeout-values} shows the 53 instances on which the returned bandwidths or statuses differ. In the pairwise comparisons with Gurobi, CPLEX MIP, and CPLEX CP, $\mathcal F_{\mathrm{full}}$ returns a lower bandwidth or returns a bandwidth when the compared solver returns none on 39, 43, and 9 instances, respectively. The reverse does not occur against Gurobi or CPLEX MIP and occurs only on \texttt{arc130} against CPLEX CP. These comparisons complement the certification counts by describing returned-bandwidth coverage and quality on the displayed subset.

Figure~\ref{fig:nxn-common-opt-solving-time} restricts the solving-time comparison to the 35 instances on which all four methods certify the same bandwidth, comprising 33 Regular and 2 Harwell--Boeing instances. On this matched subset, $\mathcal F_{\mathrm{full}}$ is faster than Gurobi and CPLEX CP on all 35 instances and faster than CPLEX MIP on 34, with one tie. The complete instance-level results for this host grid are reported in Table~\ref{tab:auxiliary-host-detail}.

\section{Conclusion}\label{sec:conclusion}

This paper presented an exact SAT-based approach for the two-dimensional bandwidth minimization problem (2DBMP), together with an incremental threshold-search procedure for certifying optimal bandwidths. The approach is designed for the standard square-root-grid host and combines a compact SAT representation of vertex placement, occupancy, and Manhattan-distance constraints with symmetry breaking and incremental solving.

On the standard $\lceil\sqrt n\rceil\times\lceil\sqrt n\rceil$ host grid, the proposed method certifies optimal bandwidths for 41 of 43 Regular instances and 42 of 93 Harwell--Boeing instances within the 3600~s time limit, substantially extending the optimality certification achieved by the previously published exact approach under a 72-hour time limit.
It also certifies four bandwidth values that are lower than those reported by DRSA, the strongest heuristic method among the published approaches considered in this study; three of these improve upon all previously published comparison values and are certified optimal.
Compared with the commercial exact solvers under the reported local experimental protocol, $\mathcal F_{\mathrm{full}}$ generally achieves substantially broader optimality certification, particularly on the Harwell--Boeing benchmark, while maintaining competitive solving times on matched instances. On the standard host grid, it certifies 42 of 93 Harwell--Boeing instances, compared with 10, 8, and 20 for Gurobi, CPLEX MIP, and CPLEX CP, respectively.
The additional experiments on $2\times\lceil n/2\rceil$ and $n\times n$ host grids further demonstrate that the approach remains effective under alternative host geometries.

The results should be interpreted primarily as evidence of the effectiveness of SAT for exact certification, rather than as evidence of dominance over heuristic methods, which may still provide better feasible incumbents when the exact solver reaches the time limit without completing a certificate. Moreover, the reported runtime comparisons are based on single runs under the specified experimental protocol, while published runtimes were obtained under different hardware and experimental settings and are therefore not directly comparable or statistically validated. Overall, the proposed approach provides an effective exact method for the small- and medium-sized benchmark instances considered in this study, with fewer than 400 vertices, while larger and more challenging instances, particularly dense guest graphs, remain difficult for exact approaches. Future work will focus on stronger lower bounds, additional sound symmetry reductions, and more compact incremental encodings to improve scalability.

\section*{CRediT authorship contribution statement}

Pham Quang Minh: Methodology, Software, Validation, Writing -- review and editing. Dao Xuan Nghia: Methodology, Supervision, Validation, Writing -- original draft, Writing -- review and editing. To Van Khanh: Conceptualization, Investigation, Supervision, Writing -- review and editing.

\section*{Declaration of competing interest}
The authors declare that they have no known competing financial interests or personal relationships that could have appeared to influence the work reported in this paper.

\section*{Data availability}
The benchmark result files and source code used in this manuscript are available in the project repository: \url{https://github.com/PQMlmaoxd/Two-Dimensional-Bandwidth-Minimization-Problem}.

\appendix
\counterwithin{table}{section}
\counterwithin{figure}{section}
\makeatletter
\renewcommand{\@seccntformat}[1]{%
  \ifstrequal{#1}{section}{Appendix~\csname the#1\endcsname\quad}{%
    \csname the#1\endcsname\quad}}
\makeatother
\section{ILP Linearization}
\label{app:ilp-linearization}

This appendix records the linearization of the absolute-value constraint used
in the reference ILP formulation.

\begin{lemma}[Linearization of the absolute-value constraint]
\label{lem:ilp-split-distance}
For a fixed placement and an edge $e=(u,v)$, let
$x_u=\sum_{r=1}^{h}\sum_{c=1}^{h} r\,a_{urc}$,
$y_u=\sum_{r=1}^{h}\sum_{c=1}^{h} c\,a_{urc}$,
$x_v=\sum_{r=1}^{h}\sum_{c=1}^{h} r\,a_{vrc}$, and
$y_v=\sum_{r=1}^{h}\sum_{c=1}^{h} c\,a_{vrc}$ denote the induced row and
column coordinates. If the nonnegative split variables satisfy
Constraints~\eqref{eq:dx_split} and~\eqref{eq:dy_split} for an edge
$e=(u,v)$, then
\[
|x_u-x_v|+|y_u-y_v|
\le
d^{x,+}_{e}+d^{x,-}_{e}+d^{y,+}_{e}+d^{y,-}_{e}
\]
Consequently, Constraint~\eqref{eq:beta_bound} implies that edge $e$ has
Manhattan distance at most $\beta$. Conversely, any fixed placement whose
edge $e$ has Manhattan distance at most $\beta$ can be extended to
nonnegative split variables satisfying
Constraints~\eqref{eq:dx_split}--\eqref{eq:beta_bound}.
\end{lemma}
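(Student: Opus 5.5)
The argument has two directions, and both rest on a triangle-inequality bound for nonnegative splits together with the obvious positive/negative-part construction.

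For the forward direction, fix the placement and the edge $e=(u,v)$. Constraint~\eqref{eq:assign_vertex} makes exactly one $a_{urc}$ equal to $1$, so $x_u,y_u,x_v,y_v$ are the actual grid coordinates of $u$ and $v$. By Constraint~\eqref{eq:dx_split}, $x_u-x_v=d^{x,+}_e-d^{x,-}_e$. Since both split variables are nonnegative,
\[
|x_u-x_v|=|d^{x,+}_e-d^{x,-}_e|\le |d^{x,+}_e|+|d^{x,-}_e|=d^{x,+}_e+d^{x,-}_e .
\]
Constraint~\eqref{eq:dy_split} gives the analogous bound for the column axis. Adding the two bounds yields the displayed inequality. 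Chaining it with Constraint~\eqref{eq:beta_bound} then gives Manhattan distance at most $\beta$, which is the stated consequence.

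For the converse, take a placement in which edge $e$ has Manhattan distance at most $\beta$, and set the split variables to the positive and negative parts of the signed differences:
\[
d^{x,+}_e=\max\{x_u-x_v,0\},\qquad d^{x,-}_e=\max\{x_v-x_u,0\},
\]
with $d^{y,\pm}_e$ defined in the same way from $y_u-y_v$. Each variable is nonnegative. Their difference equals the signed difference, so Constraints~\eqref{eq:dx_split} and~\eqref{eq:dy_split} hold. At most one variable in each pair is nonzero, so each pair sums to the absolute difference, and the four variables together sum to the Manhattan distance. Constraint~\eqref{eq:beta_bound} therefore follows from the hypothesis.

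The only point needing care is the domain constraint~\eqref{eq:distance_domains}, which the converse should respect even though the statement only mentions~\eqref{eq:dx_split}--\eqref{eq:beta_bound}. Coordinates lie in $\{1,\ldots,h\}$, so each positive or negative part lies in $\{0,\ldots,h-1\}$, and the constructed values are also integers. I do not expect any real obstacle here. The one subtlety is that the forward direction gives only an inequality, not equality: a solution may inflate both $d^{x,+}_e$ and $d^{x,-}_e$. The inequality direction is exactly what is needed, because minimization of $\beta$ together with the converse construction ensures the optimum is unaffected.
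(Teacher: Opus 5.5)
Your proposal is correct and follows the paper's proof essentially step for step. The forward direction uses the triangle inequality for nonnegative splits, and the converse uses the positive/negative-part construction, including the same check that the constructed values lie in $\{0,\ldots,h-1\}$.
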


\begin{proof}
Constraint~\eqref{eq:dx_split} gives
$x_u-x_v=d^{x,+}_{e}-d^{x,-}_{e}$. Since
$d^{x,+}_{e}$ and $d^{x,-}_{e}$ are nonnegative, the triangle inequality gives
\[
|x_u-x_v|
=|d^{x,+}_{e}-d^{x,-}_{e}|
\le d^{x,+}_{e}+d^{x,-}_{e}
\]
The same argument applied to Constraint~\eqref{eq:dy_split} gives
\[
|y_u-y_v|
\le d^{y,+}_{e}+d^{y,-}_{e}
\]
Adding the two inequalities and using Constraint~\eqref{eq:beta_bound}
proves the forward implication.

For the converse direction, set
$d^{x,+}_{e}=\max\{x_u-x_v,0\}$ and
$d^{x,-}_{e}=\max\{x_v-x_u,0\}$, and choose
$d^{y,+}_{e}$ and $d^{y,-}_{e}$ analogously for the column difference. These
values are nonnegative and at most $h-1$ because all coordinates lie in
$\{1,\ldots,h\}$. They satisfy Constraints~\eqref{eq:dx_split}
and~\eqref{eq:dy_split}, and their total equals
$|x_u-x_v|+|y_u-y_v|$. Therefore Constraint~\eqref{eq:beta_bound} holds
whenever the edge has Manhattan distance at most $\beta$.
\end{proof}

\section{Correctness of the SAT Encoding}\label{app:correctness}

The first four lemmas establish the correctness of the SAT components, leading to Proposition~\ref{prop:full-correctness} for the proposed full encoding. The remaining results establish the equivalence of the alternative configurations, the exactness of the incremental search, and the exactness of the encoding on the $2\times\lceil n/2\rceil$ host.

\begin{lemma}[Position and occupancy correctness]
\label{lem:position-occupancy}
Every satisfying assignment of either
$\mathcal F_{\mathrm{pos}}\land\mathcal F_{\mathrm{occ}}^{\mathrm{seq}}$
or
$\mathcal F_{\mathrm{pos}}\land\mathcal F_{\mathrm{occ}}^{\mathrm{pair}}$
determines an injective embedding $\varphi:V\rightarrow H_h$ through the position variables. Conversely, every injective embedding is represented by at least one satisfying assignment under each occupancy encoding. Thus, replacing $\mathcal F_{\mathrm{occ}}^{\mathrm{seq}}$ with $\mathcal F_{\mathrm{occ}}^{\mathrm{pair}}$ does not change the set of feasible embeddings.
\end{lemma}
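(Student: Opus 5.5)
The plan is to prove both directions separately for each occupancy encoding, using the exactly-one semantics of $\mathcal F_{\mathrm{pos}}$ as the common starting point. The only property of the Sequential Counter encoding of~\citet{Sinz2005} that I will invoke is the standard one. First, every assignment to the original variables satisfying the cardinality constraint extends to the auxiliary variables so that all clauses hold. Second, any satisfying assignment of the CNF restricts to an assignment of the original variables satisfying the constraint.

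\emph{Forward direction.} Take a satisfying assignment of $\mathcal F_{\mathrm{pos}}$ together with either occupancy encoding. By Equations~\eqref{eq:sat_x_exact} and~\eqref{eq:sat_y_exact}, each $v$ has exactly one true $X_{v,r}$ and exactly one true $Y_{v,c}$. Define $\varphi(v)=(r,c)$ from these. For injectivity, suppose $\varphi(u)=\varphi(v)=(r,c)$ with $u\neq v$.
\begin{itemize}
\item In the pairwise case, the clause of Equation~\eqref{eq:pairwise_occupancy} for $(r,c)$ and the pair $u,v$ is falsified.
\item In the sequential case, clause~\eqref{eq:xy_to_cell} forces $A_{u,r,c}$ and $A_{v,r,c}$ to be true. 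This violates the at-most-one constraint~\eqref{eq:cell_amo}, which holds by the soundness of the Sequential Counter.
\end{itemize}
Either way we have a contradiction, so $\varphi$ is injective.

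\emph{Converse.} Given an injective $\varphi$ with $\varphi(v)=(x_v,y_v)$, set $X_{v,i}$ true iff $i=x_v$ and $Y_{v,i}$ true iff $i=y_v$. The exactly-one constraints then hold, and the Sequential Counter auxiliaries of $\mathcal F_{\mathrm{pos}}$ can be extended by completeness. For the pairwise encoding, a clause for $(r,c)$ and $u<v$ could only fail if both $u$ and $v$ sit at $(r,c)$, which injectivity excludes. For the sequential encoding, set $A_{v,r,c}$ true iff $\varphi(v)=(r,c)$.
\begin{itemize}
\item Clauses~\eqref{eq:cell_to_x} and~\eqref{eq:cell_to_y} hold, since $A_{v,r,c}$ true implies $x_v=r$ and $y_v=c$.
\item Clause~\eqref{eq:xy_to_cell} holds, since $X_{v,r}\land Y_{v,c}$ means $\varphi(v)=(r,c)$.
\item By injectivity, at most one $A_{v,r,c}$ is true per cell, so~\eqref{eq:cell_amo} holds and its counter auxiliaries extend.
\end{itemize}

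The final claim follows by combining the two directions. Both formulas decode to exactly the set of injective embeddings: each satisfying assignment yields one, and each one is realized. Hence the two feasible-embedding sets coincide.

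The main obstacle, though mild, is being careful about the auxiliary variables. Decoding must depend only on $X$ and $Y$. Also, the linking clauses must pin $A_{v,r,c}$ to $X_{v,r}\land Y_{v,c}$ in both directions, so that the at-most-one constraint genuinely constrains the placements rather than some spurious $A$ assignment.
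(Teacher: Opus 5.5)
Your proposal is correct and follows essentially the same route as the paper. It decodes $\varphi$ from the exactly-one row and column variables, derives injectivity by contradiction with either the pairwise clause or the at-most-one constraint on the $A_{v,r,c}$, and for the converse sets $A_{v,r,c}$ to the actual placement and extends the counter auxiliaries. As a minor aside, your forward argument uses only clause~\eqref{eq:xy_to_cell}, so your closing remark that the linking must hold in both directions overstates what is required, though this does not affect the proof.
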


\begin{proof}
First, let an assignment satisfy either $\mathcal F_{\mathrm{pos}}\land\mathcal F_{\mathrm{occ}}^{\mathrm{seq}}$ or $\mathcal F_{\mathrm{pos}}\land\mathcal F_{\mathrm{occ}}^{\mathrm{pair}}$. Equations~\eqref{eq:sat_x_exact} and~\eqref{eq:sat_y_exact} ensure that each vertex $v$ has a unique true row variable $X_{v,r_v}$ and a unique true column variable $Y_{v,c_v}$. Define $\varphi(v)=(r_v,c_v)$.

Suppose first that the assignment satisfies $\mathcal F_{\mathrm{occ}}^{\mathrm{seq}}$. Equations~\eqref{eq:cell_to_x}--\eqref{eq:xy_to_cell} ensure that $A_{v,r,c}$ is true exactly when both $X_{v,r}$ and $Y_{v,c}$ are true. If two distinct vertices $u$ and $v$ were mapped to the same grid position $(r,c)$, then both $A_{u,r,c}$ and $A_{v,r,c}$ would be true, contradicting the at-most-one constraint in Equation~\eqref{eq:cell_amo}. Suppose instead that the assignment satisfies $\mathcal F_{\mathrm{occ}}^{\mathrm{pair}}$. If two distinct vertices $u$ and $v$ were mapped to the same position $(r,c)$, the clause in Equation~\eqref{eq:pairwise_occupancy} indexed by $u$, $v$, and $(r,c)$ would have all four literals false. This contradicts the assumption that $\mathcal F_{\mathrm{occ}}^{\mathrm{pair}}$ is satisfied. Thus, in both cases, no two vertices share a grid position, so $\varphi$ is injective.

For the converse direction, let $\varphi:V\rightarrow H_h$ be an injective embedding. Set $X_{v,r}$ true exactly when $r$ is the row of $\varphi(v)$, and set $Y_{v,c}$ true exactly when $c$ is its column. These assignments satisfy the exactly-one requirements in Equations~\eqref{eq:sat_x_exact} and~\eqref{eq:sat_y_exact}, and the auxiliary variables of their CNF encodings can be assigned accordingly.

For $\mathcal F_{\mathrm{occ}}^{\mathrm{seq}}$, set $A_{v,r,c}$ true exactly when $\varphi(v)=(r,c)$. The linking clauses in Equations~\eqref{eq:cell_to_x}--\eqref{eq:xy_to_cell} then hold. Injectivity ensures that at most one variable in $\{A_{v,r,c}:v\in V\}$ is true for each grid position, so the sequential-counter auxiliary variables can be assigned to satisfy Equation~\eqref{eq:cell_amo}. For $\mathcal F_{\mathrm{occ}}^{\mathrm{pair}}$, injectivity ensures that no two distinct vertices occupy the same position. Hence, every clause in Equation~\eqref{eq:pairwise_occupancy} contains at least one true literal. Thus, the assignment can be extended to satisfy either conjunction.
\end{proof}

\begin{lemma}[Distance-threshold semantics and cutoff validity]
\label{lem:distance-threshold}
Suppose that the position variables satisfy $\mathcal F_{\mathrm{pos}}$. For each edge $e=(u,v)\in E$, define $a_x(e)=|x_u-x_v|$ and $a_y(e)=|y_u-y_v|$. Every satisfying extension to $\mathcal F_{\mathrm{dist}}$ satisfies $T^q_{e,d}$ if and only if $d\le a_q(e)$ for $q\in\{x,y\}$ and $d\in\{1,\ldots,h-1\}$, and every position assignment has such an extension. A position assignment can be extended to satisfy $\mathcal F_{\mathrm{dist}}^{U^*}$ if and only if $a_q(e)\le U^*$ for every edge $e$ and axis $q$. In every such extension, $T^q_{e,d}$ is true if and only if $d\le a_q(e)$ for $d\in\{1,\ldots,\min\{U^*,h-1\}\}$. Consequently, the cutoff-aware encoding preserves every embedding whose bandwidth is at most $U^*$.
\end{lemma}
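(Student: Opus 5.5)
Fix a position assignment satisfying $\mathcal F_{\mathrm{pos}}$, so every vertex $w$ has a unique true row variable and a unique true column variable. For an edge $e=(u,v)$ and an axis $q$, let $i_u,i_v$ be the selected coordinates of $u$ and $v$ along $q$, and write $s=a_q(e)=|i_u-i_v|$. The plan is to work one edge and one axis at a time. Every clause in Equations~\eqref{eq:dist_act_1}--\eqref{eq:dist_mono} and~\eqref{eq:cutoff_pair} involves only the threshold variables of a single pair $(e,q)$ together with position literals. So once the position literals are fixed, the threshold chains for different pairs $(e,q)$ are independent, and it suffices to analyze a single chain $T^q_{e,1},\ldots,T^q_{e,D}$. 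Here $D=h-1$ for $\mathcal F_{\mathrm{dist}}$ and $D=\min\{U^*,h-1\}$ for $\mathcal F_{\mathrm{dist}}^{U^*}$.

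\emph{Soundness (forcing).} Assume the chain satisfies the clauses and split on $s$.

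If $s=0$, Equation~\eqref{eq:dist_zero} with $k=i_u=i_v$ forces $\neg T^q_{e,1}$. Monotonicity, Equation~\eqref{eq:dist_mono}, applied contrapositively down the chain, then makes every $T^q_{e,d}$ false, as required.

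If $s\ge1$ and $s\le D$, the exact-separation clause for $d=s$ has both antecedents true. This is Equation~\eqref{eq:dist_act_1} or~\eqref{eq:dist_act_2} according to which vertex has the larger coordinate, taking $k=\max(i_u,i_v)$ and $k-s=\min(i_u,i_v)$. It forces $T^q_{e,s}$, and Equation~\eqref{eq:dist_mono} then propagates truth to every $T^q_{e,d}$ with $d<s$. If moreover $s+1\le D$, the deactivation clause with $d=s$, Equation~\eqref{eq:dist_deact_1} or~\eqref{eq:dist_deact_2}, forces $\neg T^q_{e,s+1}$, and monotonicity propagates falsity upward. This gives exactly $T^q_{e,d}\Leftrightarrow d\le s$.

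I need to check that the index ranges in the clause families match. In the cutoff form, the activation and deactivation clauses must be instantiated for $d\le D$ and $d+1\le D$, respectively. This is the bookkeeping step I would verify carefully: it relies on the restriction of the index range to $d\le\min\{U^*,h-1\}$ described in the cutoff paragraph.

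\emph{Completeness (extension).} Assign $T^q_{e,d}$ true iff $d\le s$, for every $d\le D$. Then verify each clause family directly.

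An activation clause fires only when $s=d$, and the assignment sets $T^q_{e,d}$ true.

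A deactivation clause fires only when $s=d$, and the assignment sets $T^q_{e,d+1}$ false.

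The zero clause fires only when $s=0$, and the assignment sets $T^q_{e,1}$ false.

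Monotonicity holds because the assigned values form a downward-closed prefix of the chain.

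This gives the statement for $\mathcal F_{\mathrm{dist}}$ with no condition on the position assignment.

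\emph{Cutoff clauses.} A clause in Equation~\eqref{eq:cutoff_pair} is falsified exactly when $i_u$ and $i_v$ are selected with $|i_u-i_v|>U^*$; the clause instances in both orders of the pair are present. Hence the position part satisfies all cutoff clauses iff $a_q(e)\le U^*$ for every edge and axis. When that holds, $s\le U^*$, and $s\le h-1$ holds trivially. The forcing and extension arguments above therefore apply on the truncated chain, giving the characterization of $\mathcal F_{\mathrm{dist}}^{U^*}$ together with the semantics on $d\le\min\{U^*,h-1\}$.

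Finally, if $\beta(G,\varphi)\le U^*$, then each axis separation satisfies $a_q(e)\le a_x(e)+a_y(e)\le U^*$. The extension therefore exists, which proves that the cutoff-aware encoding preserves every such embedding.

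The main obstacle is not conceptual but the careful matching of index ranges in the truncated encoding. In particular, I must confirm that when $s=D$ no deactivation clause is needed, and that no activation clause refers to an index beyond $D$.
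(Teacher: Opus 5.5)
Your proposal is correct and follows essentially the same route as the paper. Both work per edge and axis, split on zero versus positive separation using the zero-separation, activation, deactivation and monotonicity clauses, take the canonical assignment $T^q_{e,d}\Leftrightarrow d\le a_q(e)$ for the converse, and note that the cutoff clauses hold exactly when every axis separation is at most $U^*$, which follows from bandwidth at most $U^*$. The index-range bookkeeping you flag is resolved in the paper exactly as you assume: every clause mentioning a threshold above $\min\{U^*,h-1\}$ is treated as absent, so each forcing step applies ``whenever that threshold is represented.''
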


\begin{proof}
Fix an edge $e=(u,v)$ and an axis $q\in\{x,y\}$. The position component selects unique coordinates for $u$ and $v$ on this axis, so their separation $a=a_q(e)$ is fixed. If $a=0$ and threshold $1$ is represented, Equation~\eqref{eq:dist_zero} forces $T^q_{e,1}$ to be false. Equation~\eqref{eq:dist_mono} then prevents any higher represented threshold from being true, because a true $T^q_{e,d}$ would propagate down to $T^q_{e,1}$. If no threshold is represented, the threshold-semantics statement is vacuous.

Now suppose that $a>0$. According to the order of the two selected coordinates, Equation~\eqref{eq:dist_act_1} or Equation~\eqref{eq:dist_act_2} forces $T^q_{e,a}$ to be true whenever that threshold is represented. Repeated application of Equation~\eqref{eq:dist_mono} therefore forces every $T^q_{e,d}$ with $d\le a$ to be true. If $a+1$ is represented, Equation~\eqref{eq:dist_deact_1} or Equation~\eqref{eq:dist_deact_2} forces $T^q_{e,a+1}$ to be false. Any true threshold above $a+1$ would propagate down to $T^q_{e,a+1}$ and produce a contradiction, while no threshold above $a$ exists when $a$ is the largest represented value. Thus, $T^q_{e,d}$ is true exactly for the represented thresholds satisfying $d\le a$.

For the reverse direction, set each represented $T^q_{e,d}$ true exactly when $d\le a_q(e)$. The activation, deactivation, zero-separation, and monotonicity clauses in Equations~\eqref{eq:dist_act_1}--\eqref{eq:dist_mono} then hold by construction. In the cutoff-aware encoding, Equation~\eqref{eq:cutoff_pair} is satisfied exactly when no edge has an axis separation greater than $U^*$. If an embedding has bandwidth at most $U^*$, then $a_x(e)+a_y(e)\le U^*$ for every edge $e$. Since both axis separations are nonnegative, $a_x(e)\le U^*$ and $a_y(e)\le U^*$. Equation~\eqref{eq:cutoff_pair} is therefore satisfied, and the threshold assignment above gives a satisfying extension to $\mathcal F_{\mathrm{dist}}^{U^*}$.
\end{proof}

\begin{lemma}[Bandwidth-component correctness]
\label{lem:bandwidth-component}
Let the position variables satisfy $\mathcal F_{\mathrm{pos}}$. For every integer $K$ with $0\le K\le U^*$, the position assignment can be extended to satisfy $\mathcal F_{\mathrm{dist}}^{U^*}$ together with the form of $\Delta_K$ in Equation~\eqref{eq:delta_k_cutoff} if and only if every guest edge has Manhattan distance at most $K$. For every integer $K$ with $0\le K\le 2(h-1)$, the same equivalence holds for $\mathcal F_{\mathrm{dist}}$ together with the form of $\Delta_K$ in Equation~\eqref{eq:delta_k_full}.
\end{lemma}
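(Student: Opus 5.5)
The plan is to work edge by edge and reduce everything to the threshold semantics of Lemma~\ref{lem:distance-threshold}. Let $M=\min\{U^*,h-1\}$ for the cutoff-aware form and $M=h-1$ for the full form, so that exactly the thresholds $1,\ldots,M$ are represented. Fix a position assignment satisfying $\mathcal F_{\mathrm{pos}}$, and write $a_x=a_x(e)$ and $a_y=a_y(e)$ for an edge $e$.

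\emph{Step 1 (no admissible extension satisfies $\Delta_K$ when some distance exceeds $K$).} In the cutoff-aware case, if the assignment does not extend to $\mathcal F_{\mathrm{dist}}^{U^*}$ at all, there is nothing to prove. Otherwise Lemma~\ref{lem:distance-threshold} gives two facts: $a_q\le U^*$ for every edge and axis, and in every extension $T^q_{e,d}$ is true iff $d\le a_q$ for $d\le M$. Since also $a_q\le h-1$, we get $a_q\le M$ in both forms. Now suppose some edge has $a_x+a_y\ge K+1$; I show that some clause of $\Delta_K$ is violated in every such extension.
\begin{itemize}
\item If $a_x\ge K+1$, then $K+1\le a_x\le M$. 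Hence the unit clause $\neg T^x_{e,K+1}$ is present, and it is falsified because $T^x_{e,K+1}$ is true. The case $a_y\ge K+1$ is symmetric.
\item Otherwise $a_x\le K$ and $a_y\le K$. Then $a_y\ge 1$, since otherwise $a_x\ge K+1$. Choose $i=K+1-a_y$. Because $a_y\le K$ we have $i\ge 1$, and because $a_y\ge1$ we have $i\le K$. From $a_x+a_y\ge K+1$ we get $i\le a_x\le M$. Finally $K-i+1=a_y\in\{1,\ldots,M\}$. So the pair clause $\neg T^x_{e,i}\lor\neg T^y_{e,K-i+1}$ belongs to $\Delta_K$, and both of its literals are false.
\end{itemize}

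\emph{Step 2 (if every distance is at most $K$, an extension satisfies $\Delta_K$).} Here $a_x+a_y\le K$ for every edge. In the cutoff-aware case this gives $a_q\le K\le U^*$, so by Lemma~\ref{lem:distance-threshold} an extension exists; in the full case it always exists. Assign the threshold variables by the semantics $T^q_{e,d}\Leftrightarrow d\le a_q$ and check the two clause groups.
\begin{itemize}
\item Each unit clause $\neg T^q_{e,K+1}$ holds, since $a_q\le K<K+1$.
\item If a pair clause had both literals false, then $a_x\ge i$ and $a_y\ge K-i+1$, so $a_x+a_y\ge K+1$, a contradiction.
\end{itemize}

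The main obstacle is the index bookkeeping in Step 1. The violated clause must actually be present in $\Delta_K$, which means respecting the side conditions $K+1\le M$, $1\le i\le K$, $i\le M$ and $K-i+1\le M$. This is exactly why I first establish $a_q\le M$ (using the cutoff clauses in the $U^*$ form) and then split on whether one axis alone exceeds $K$.

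Edge cases such as $K=0$ need no separate treatment. For $K=0$ the pair group is empty, and the unit clause $\neg T^q_{e,1}$ catches any positive separation, because $M\ge1$ whenever some separation is positive. The full form with $K$ up to $2(h-1)$ is covered by the same argument with $M=h-1$ and no cutoff clauses.
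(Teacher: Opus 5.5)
Your proof is correct and follows essentially the same route as the paper. Both reduce to the threshold semantics of Lemma~\ref{lem:distance-threshold}, use the unit clauses to force $a_x(e),a_y(e)\le K$, exhibit a violated pair clause, and check the converse clause group by clause group; the differences are cosmetic: you argue the forward direction by contrapositive and take $i=K+1-a_y(e)$, whereas the paper takes $i=a_x(e)$ and treats $K=0$ separately.
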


\begin{proof}
Consider either distance encoding and let $L=\min\{U^*,h-1\}$ in the cutoff-aware case and $L=h-1$ in the full-distance case. Fix an edge $e=(u,v)$ and define $a_x(e)=|x_u-x_v|$ and $a_y(e)=|y_u-y_v|$.

Suppose first that the distance encoding and the applicable form of $\Delta_K$ are satisfied. Lemma~\ref{lem:distance-threshold} gives $0\le a_x(e),a_y(e)\le L$ and makes $T^x_{e,d}$ and $T^y_{e,d}$ true exactly up to the corresponding axis separations. If $a_x(e)\ge K+1$, then $K+1\le a_x(e)\le L$, so the unit clause $\neg T^x_{e,K+1}$ is present and is violated. The same argument on the other axis shows that $a_x(e)\le K$ and $a_y(e)\le K$. For $K=0$, these two inequalities give $a_x(e)+a_y(e)\le K$. For $K\ge1$, suppose that $a_x(e)+a_y(e)\ge K+1$. Both axis separations are then positive. Setting $i=a_x(e)$ gives $1\le i\le K$, $i\le L$, and $K-i+1\le a_y(e)\le L$. The corresponding binary clause in Equation~\eqref{eq:delta_k_cutoff} or Equation~\eqref{eq:delta_k_full} is violated because both threshold variables are true. Hence every edge satisfies $a_x(e)+a_y(e)\le K$.

For the other direction, suppose that every edge satisfies $a_x(e)+a_y(e)\le K$. In the cutoff-aware case, $a_x(e),a_y(e)\le K\le U^*$, and the grid-coordinate domains also give $a_x(e),a_y(e)\le h-1$. Hence, $a_x(e),a_y(e)\le\min\{U^*,h-1\}$. In the full-distance case, the grid-coordinate domains give $a_x(e),a_y(e)\le h-1$. Lemma~\ref{lem:distance-threshold} therefore provides the required threshold valuation for either distance encoding. Any represented unit clause at threshold $K+1$ is satisfied because neither axis separation exceeds $K$. If both threshold variables in a binary clause were true, then $a_x(e)\ge i$ and $a_y(e)\ge K-i+1$, which would imply $a_x(e)+a_y(e)\ge K+1$. Therefore every binary clause is also satisfied, and the conclusion holds for both forms of $\Delta_K$.
\end{proof}

\begin{lemma}[Q1 symmetry preservation]
\label{lem:q1-symmetry}
Every injective embedding has an embedding obtained by horizontal and vertical reflections whose anchor vertex lies in $Q_1$. These reflections preserve injectivity and the Manhattan distance of every guest edge, so adding $\mathcal F_{\mathrm{sym}}$ does not change bandwidth-$K$ feasibility.
\end{lemma}

\begin{proof}
Let $\varphi$ be an injective embedding and write $\varphi(a)=(r_a,c_a)$ for the position of the selected anchor vertex. If $r_a>\lceil h/2\rceil$, reflect every grid position horizontally by replacing its row $r$ with $h+1-r$. The new anchor row satisfies $h+1-r_a\le\lfloor h/2\rfloor\le\lceil h/2\rceil$. If $c_a>\lceil h/2\rceil$, apply the analogous vertical reflection by replacing every column $c$ with $h+1-c$, which places the anchor column in the same allowed range.

Each reflection is a bijection of $H_h$, so applying it to all vertices preserves injectivity. A horizontal reflection preserves row differences because $|(h+1-r_u)-(h+1-r_v)|=|r_u-r_v|$, and a vertical reflection preserves column differences by the same argument. Their composition therefore preserves the Manhattan distance of every guest edge and leaves the bandwidth unchanged. The transformed anchor lies in $Q_1$, so its position variables satisfy Equations~\eqref{eq:q1_x} and~\eqref{eq:q1_y}, establishing the claimed preservation.
\end{proof}

\begin{proposition}[Correctness of $\mathcal F_{\mathrm{full}}(K)$]
\label{prop:full-correctness}
For every integer $K$ with $0\le K\le U^*$, the formula $\mathcal F_{\mathrm{full}}(K)$ is satisfiable if and only if there exists an injective embedding $\varphi:V\to H_h$ such that $\beta(G,\varphi)\le K$.
\end{proposition}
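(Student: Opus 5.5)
The plan is to prove the two directions separately by composing Lemmas~\ref{lem:position-occupancy}--\ref{lem:q1-symmetry}. The formula $\mathcal F_{\mathrm{full}}(K)$ is a conjunction of components whose auxiliary variables are disjoint across components, apart from the shared position variables $X_{v,i},Y_{v,i}$. Each earlier lemma characterizes which position assignments extend to a satisfying assignment of the corresponding component. So the whole argument reduces to intersecting these characterizations on a common position assignment.

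For the forward direction, fix a satisfying assignment of $\mathcal F_{\mathrm{full}}(K)$. Its restriction to $\mathcal F_{\mathrm{pos}}\land\mathcal F_{\mathrm{occ}}^{\mathrm{seq}}$ yields an injective embedding $\varphi$ by Lemma~\ref{lem:position-occupancy}. Its restriction to $\mathcal F_{\mathrm{dist}}^{U^*}\land\Delta_K$, with $\Delta_K$ in the form of Equation~\eqref{eq:delta_k_cutoff}, is a satisfying extension of the same position assignment. Because $0\le K\le U^*$, Lemma~\ref{lem:bandwidth-component} then gives that every guest edge has Manhattan distance at most $K$ under $\varphi$, so $\beta(G,\varphi)\le K$. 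The clauses of $\mathcal F_{\mathrm{sym}}$ are simply dropped; removing clauses cannot hurt.

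For the converse, start from an injective $\varphi$ with $\beta(G,\varphi)\le K$. First apply Lemma~\ref{lem:q1-symmetry} to replace $\varphi$ by a reflected embedding $\varphi'$ whose anchor lies in $Q_1$; this $\varphi'$ is still injective and still has bandwidth at most $K$. Set the position variables from $\varphi'$, which satisfies $\mathcal F_{\mathrm{pos}}$ together with its cardinality auxiliaries, and also satisfies the unit clauses of $\mathcal F_{\mathrm{sym}}$. Extend to the occupancy variables $A_{v,r,c}$ and the sequential-counter auxiliaries via Lemma~\ref{lem:position-occupancy}. Extend to the threshold variables $T^q_{e,d}$ (satisfying the cutoff clauses and $\Delta_K$) via Lemma~\ref{lem:bandwidth-component}, again using $K\le U^*$.

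The only point needing care, and the expected main obstacle, is justifying that these separately constructed extensions combine into one assignment. I would handle it by noting two facts. First, the auxiliary variable sets (counter auxiliaries for the exactly-one and at-most-one constraints, the $A$ variables, and the $T$ variables) are pairwise disjoint. Second, each component's clauses mention only the position variables plus its own auxiliaries. Hence the union of the extensions is well defined and satisfies every clause of $\mathcal F_{\mathrm{full}}(K)$.
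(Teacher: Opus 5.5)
Your proposal is correct and follows essentially the same route as the paper's proof. The forward direction combines Lemma~\ref{lem:position-occupancy} with Lemma~\ref{lem:bandwidth-component}, using $K\le U^*$. The converse reflects the anchor into $Q_1$ via Lemma~\ref{lem:q1-symmetry}, extends the resulting position assignment separately to the occupancy and the distance/bandwidth components, and then merges these extensions because their auxiliary variables are disjoint; your remark that each component's clauses mention only the position variables plus its own auxiliaries is a slightly more explicit version of that combination step.
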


\begin{proof}
Suppose that $\mathcal F_{\mathrm{full}}(K)$ is satisfiable. By Lemma~\ref{lem:position-occupancy}, its position and sequential occupancy components determine an injective embedding $\varphi:V\to H_h$. Lemma~\ref{lem:bandwidth-component}, applied to $\mathcal F_{\mathrm{dist}}^{U^*}$ and the form of $\Delta_K$ in Equation~\eqref{eq:delta_k_cutoff}, shows that every guest edge has Manhattan distance at most $K$. It follows that $\beta(G,\varphi)\le K$.

For the reverse direction, let $\varphi$ be an injective embedding with $\beta(G,\varphi)\le K$. Lemma~\ref{lem:q1-symmetry} provides an injective embedding $\varphi'$ with the same bandwidth whose anchor lies in $Q_1$. Lemma~\ref{lem:position-occupancy} extends $\varphi'$ to a satisfying assignment of $\mathcal F_{\mathrm{pos}}\land\mathcal F_{\mathrm{occ}}^{\mathrm{seq}}$, while Lemmas~\ref{lem:distance-threshold} and~\ref{lem:bandwidth-component} extend it to satisfy $\mathcal F_{\mathrm{dist}}^{U^*}\land\Delta_K$. Since the auxiliary variables introduced by the occupancy and distance components are disjoint, these extensions can be combined. The anchor position satisfies $\mathcal F_{\mathrm{sym}}$, so all components in Equation~\eqref{eq:full_encoding} are satisfied.
\end{proof}

\begin{lemma}[Equivalence of the alternative configurations]
\label{lem:configuration-equivalence}
For every $K$ with $0\le K\le U^*$, each alternative encoding configuration in Table~\ref{tab:encoding-configurations} is satisfiable if and only if $\mathcal F_{\mathrm{full}}(K)$ is satisfiable.
\end{lemma}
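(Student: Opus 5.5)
The plan is to route every configuration through the same intermediate statement: each is satisfiable exactly when there is an injective embedding $\varphi:V\to H_h$ with $\beta(G,\varphi)\le K$. Proposition~\ref{prop:full-correctness} already gives this for $\mathcal F_{\mathrm{full}}(K)$. So it suffices to prove the analogous characterization for $\mathcal F_{\mathrm{no\text{-}sym}}(K)$, $\mathcal F_{\mathrm{pair}}(K)$ and $\mathcal F_{\mathrm{full\text{-}dist}}(K)$ and then chain the equivalences. For each configuration I will rerun the two directions of the proof of Proposition~\ref{prop:full-correctness}, replacing only the lemma that handles the changed component.

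\emph{Forward direction.} Take a satisfying assignment of an alternative configuration.
\begin{itemize}
\item For $\mathcal F_{\mathrm{no\text{-}sym}}$ and $\mathcal F_{\mathrm{full\text{-}dist}}$, Lemma~\ref{lem:position-occupancy} (sequential case) yields an injective $\varphi$.
\item For $\mathcal F_{\mathrm{pair}}$, the pairwise case of the same lemma does so.
\item For the cutoff-aware configurations, Lemma~\ref{lem:bandwidth-component} with Equation~\eqref{eq:delta_k_cutoff} and $0\le K\le U^*$ gives $\beta(G,\varphi)\le K$.
\item For $\mathcal F_{\mathrm{full\text{-}dist}}$, the second part of Lemma~\ref{lem:bandwidth-component} with Equation~\eqref{eq:delta_k_full} applies. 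This needs $K\le 2(h-1)$. For $K>2(h-1)$, every embedding has bandwidth at most $2(h-1)<K$, so I will instead use that $\Delta_K$ at $K$ is implied by the clauses at $2(h-1)$. The second clause group is then empty, since $K-i+1\le h-1$ and $i\le h-1$ would force $K\le 2h-3$.
\end{itemize}
Dropping $\mathcal F_{\mathrm{sym}}$ only removes clauses, so it is harmless in this direction.

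\emph{Reverse direction.} Start from an injective $\varphi$ with $\beta(G,\varphi)\le K$.
\begin{itemize}
\item For the configurations that contain $\mathcal F_{\mathrm{sym}}$, first move the anchor into $Q_1$ via Lemma~\ref{lem:q1-symmetry}. For $\mathcal F_{\mathrm{no\text{-}sym}}$, use $\varphi$ directly.
\item Extend the position variables to the chosen occupancy encoding by the converse part of Lemma~\ref{lem:position-occupancy}.
\item Extend to the chosen distance encoding and $\Delta_K$ by the converse parts of Lemmas~\ref{lem:distance-threshold} and~\ref{lem:bandwidth-component}. In the $\mathcal F_{\mathrm{full\text{-}dist}}$ case, the full encoding always extends, and the threshold valuation satisfies $\Delta_K$ for any $K$ once the axis sums are at most $K$.
\end{itemize}
As in Proposition~\ref{prop:full-correctness}, the auxiliary variables of the occupancy and distance components are disjoint, so the partial extensions combine into one satisfying assignment.

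The main obstacle is the range mismatch in the $\mathcal F_{\mathrm{full\text{-}dist}}$ case. Lemma~\ref{lem:bandwidth-component} is stated for $K\le 2(h-1)$, whereas the claim quantifies over $K\le U^*$, and $U^*$ may exceed $2(h-1)$ (for instance, if $\delta(n)$ is large relative to $h$). I expect to handle this with the direct observation above rather than by invoking the lemma. When $K>2(h-1)$ there are no binary clauses and no represented unit clause at threshold $K+1$, because $K+1>h-1$. So $\Delta_K$ is empty, and both sides reduce to the existence of an injective embedding. The symmetric worry for the cutoff-aware forms is already covered, since they are only claimed for $K\le U^*$.
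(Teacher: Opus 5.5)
Your proposal is correct and follows essentially the same route as the paper. Both directions pass through the existence of an injective embedding of bandwidth at most $K$: you use Proposition~\ref{prop:full-correctness} for $\mathcal F_{\mathrm{full}}(K)$ and substitute the appropriate case of Lemmas~\ref{lem:position-occupancy}--\ref{lem:q1-symmetry} for the changed component. Your separate treatment of $K>2(h-1)$ in the $\mathcal F_{\mathrm{full\text{-}dist}}$ case is sound but never needed, because $U^*\le 2(h-1)$ always holds ($U_{\mathrm{DRSA}}$ is a bandwidth on the $h\times h$ host, and $\delta(n)\le 2(h-1)$ since any $n\le h^2$ points of that host have Manhattan diameter at most $2(h-1)$); the paper's proof uses this fact implicitly.
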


\begin{proof}
Any satisfying assignment of $\mathcal F_{\mathrm{full}}(K)$ satisfies $\mathcal F_{\mathrm{no\text{-}sym}}(K)$ after the symmetry clauses are omitted. In the other direction, Lemmas~\ref{lem:position-occupancy} and~\ref{lem:bandwidth-component} show that a model of $\mathcal F_{\mathrm{no\text{-}sym}}(K)$ determines an injective embedding of bandwidth at most $K$. Proposition~\ref{prop:full-correctness} then shows that $\mathcal F_{\mathrm{full}}(K)$ is satisfiable.

For $\mathcal F_{\mathrm{pair}}(K)$, Lemma~\ref{lem:position-occupancy} shows that pairwise and sequential occupancy represent the same injective embeddings. A model of $\mathcal F_{\mathrm{pair}}(K)$ therefore yields an embedding of bandwidth at most $K$ by Lemma~\ref{lem:bandwidth-component}, and Proposition~\ref{prop:full-correctness} gives a model of $\mathcal F_{\mathrm{full}}(K)$. Conversely, an embedding supplied by Proposition~\ref{prop:full-correctness} can be reflected into $Q_1$ by Lemma~\ref{lem:q1-symmetry}, represented with pairwise occupancy by Lemma~\ref{lem:position-occupancy}, and extended to the cutoff-aware distance and bandwidth components by Lemmas~\ref{lem:distance-threshold} and~\ref{lem:bandwidth-component}.

For $\mathcal F_{\mathrm{full\text{-}dist}}(K)$, a satisfying assignment determines an injective embedding by Lemma~\ref{lem:position-occupancy}, and Lemma~\ref{lem:bandwidth-component} shows that its bandwidth is at most $K$. Proposition~\ref{prop:full-correctness} then gives a satisfying assignment of $\mathcal F_{\mathrm{full}}(K)$. Conversely, Proposition~\ref{prop:full-correctness} supplies an injective bandwidth-$K$ embedding from a model of $\mathcal F_{\mathrm{full}}(K)$. Lemma~\ref{lem:q1-symmetry} places its anchor in $Q_1$, Lemma~\ref{lem:position-occupancy} represents its position and occupancy components, and Lemmas~\ref{lem:distance-threshold} and~\ref{lem:bandwidth-component} extend it to satisfy the full-distance component and the applicable form of $\Delta_K$. This completes the equivalence of all four configurations.
\end{proof}

\begin{lemma}[Threshold monotonicity and cumulative threshold clauses]
\label{lem:incremental-monotonicity}
Let $\mathcal F_{\mathrm{cfg}}(K)$ denote an encoding configuration in Table~\ref{tab:encoding-configurations}. For configurations containing $\mathcal F_{\mathrm{dist}}^{U^*}$, satisfiability is monotone over $0\le K\le U^*$. For $\mathcal F_{\mathrm{full\text{-}dist}}(K)$, satisfiability is monotone over $0\le K\le 2(h-1)$. Moreover, let $\mathcal B_{\mathrm{cfg}}$ denote the conjunction of the $K$-independent components of the selected configuration. For any decreasing sequence $U^*\ge K_1>\cdots>K_t\ge1$ in a cutoff-aware configuration, or $2(h-1)\ge K_1>\cdots>K_t\ge1$ in $\mathcal F_{\mathrm{full\text{-}dist}}(K)$, the cumulative formula $\mathcal B_{\mathrm{cfg}}\land\bigwedge_{j=1}^{t}\Delta_{K_j}$ is satisfiable if and only if $\mathcal F_{\mathrm{cfg}}(K_t)$ is satisfiable.
\end{lemma}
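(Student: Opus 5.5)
The plan is to reduce both claims to the semantic characterization in Proposition~\ref{prop:full-correctness} and Lemma~\ref{lem:configuration-equivalence}. For a cutoff-aware configuration and $0\le K\le U^*$, these results show that $\mathcal F_{\mathrm{cfg}}(K)$ is satisfiable if and only if some injective embedding $\varphi:V\to H_h$ has $\beta(G,\varphi)\le K$. For $\mathcal F_{\mathrm{full\text{-}dist}}(K)$, the statement of Lemma~\ref{lem:configuration-equivalence} only covers $K\le U^*$. I would therefore argue directly over the range $0\le K\le 2(h-1)$, combining Lemmas~\ref{lem:position-occupancy}, \ref{lem:distance-threshold}, \ref{lem:bandwidth-component} (full-distance form) and~\ref{lem:q1-symmetry}, exactly as in the proof of Proposition~\ref{prop:full-correctness}, to obtain the same characterization. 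Monotonicity is then immediate: if $K\le K'$ in the admissible range and an embedding has bandwidth at most $K$, it also has bandwidth at most $K'$. Hence satisfiability at $K$ implies satisfiability at $K'$.

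For the cumulative statement, the direction from left to right is trivial. $\mathcal F_{\mathrm{cfg}}(K_t)=\mathcal B_{\mathrm{cfg}}\land\Delta_{K_t}$ is a subconjunction of $\mathcal B_{\mathrm{cfg}}\land\bigwedge_{j}\Delta_{K_j}$, so any model of the cumulative formula is a model of $\mathcal F_{\mathrm{cfg}}(K_t)$.

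For the converse, I take a model $\mu$ of $\mathcal F_{\mathrm{cfg}}(K_t)$. By Lemmas~\ref{lem:position-occupancy} and~\ref{lem:bandwidth-component}, its position variables determine an injective embedding in which every edge satisfies $a_x(e)+a_y(e)\le K_t$. The key point is that $\Delta_K$ only mentions threshold variables $T^q_{e,d}$, and by Lemma~\ref{lem:distance-threshold} every model of the distance component fixes these to the true semantics $d\le a_q(e)$. So $\mu$ itself already assigns the threshold variables faithfully. I would then check directly that $\mu$ satisfies each $\Delta_{K_j}$ with $K_j\ge K_t$. A unit clause $\neg T^q_{e,K_j+1}$ holds because $a_q(e)\le K_t<K_j+1$. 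A binary clause $\neg T^x_{e,i}\lor\neg T^y_{e,K_j-i+1}$ holds, since both literals false would force $a_x(e)+a_y(e)\ge K_j+1>K_t$. Equivalently, one can apply the reverse direction of Lemma~\ref{lem:bandwidth-component} at $K_j$ to the same embedding. That gives a valuation satisfying $\Delta_{K_j}$, and it agrees with $\mu$ because the threshold semantics are forced. Thus the same assignment $\mu$ satisfies the whole cumulative formula.

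The main obstacle is this agreement step: the extensions guaranteed separately for each $K_j$ must be realizable by one common assignment. This is resolved by the uniqueness part of Lemma~\ref{lem:distance-threshold}, namely that the threshold variables are functionally determined by the positions, together with the observation that $\Delta_{K_j}$ introduces no new auxiliary variables. A minor point needs care as well: each $K_j$ must lie in the admissible range so that the applicable form of $\Delta_{K_j}$ is the one analysed in Lemma~\ref{lem:bandwidth-component}. This holds because $K_j\le K_1\le U^*$ (respectively $\le 2(h-1)$).
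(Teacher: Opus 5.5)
Your proposal is correct and follows the paper's proof essentially step for step. Monotonicity comes from the embedding characterization, with the full-distance range $0\le K\le 2(h-1)$ handled directly from Lemmas~\ref{lem:position-occupancy}, \ref{lem:distance-threshold}, \ref{lem:bandwidth-component} and~\ref{lem:q1-symmetry}; the left-to-right cumulative direction is the trivial subconjunction argument; and the converse holds because the forced threshold valuation of a model of $\mathcal F_{\mathrm{cfg}}(K_t)$ already satisfies every $\Delta_{K_j}$ with $K_j\ge K_t$.

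Your explicit use of the uniqueness part of Lemma~\ref{lem:distance-threshold} spells out an agreement step that the paper leaves implicit. The paper's extra paragraph on retained learned clauses goes beyond what the stated equivalence requires.
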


\begin{proof}
For configurations containing $\mathcal F_{\mathrm{dist}}^{U^*}$, Proposition~\ref{prop:full-correctness} and Lemma~\ref{lem:configuration-equivalence} show that $\mathcal F_{\mathrm{cfg}}(K)$ is satisfiable exactly when an injective embedding of bandwidth at most $K$ exists over $0\le K\le U^*$. For $\mathcal F_{\mathrm{full\text{-}dist}}(K)$, the same characterization over $0\le K\le 2(h-1)$ follows from Lemmas~\ref{lem:position-occupancy}, \ref{lem:distance-threshold}, \ref{lem:bandwidth-component}, and~\ref{lem:q1-symmetry}. In either case, the same embedding has bandwidth at most every larger threshold in the applicable range, so the corresponding configuration is satisfiable at each such threshold. This establishes threshold monotonicity for every configuration in Table~\ref{tab:encoding-configurations}.

The cumulative formula contains $\mathcal B_{\mathrm{cfg}}\land\Delta_{K_t}$, so any model of the cumulative formula is also a model of $\mathcal F_{\mathrm{cfg}}(K_t)$. For the reverse direction, let an assignment satisfy $\mathcal F_{\mathrm{cfg}}(K_t)$. Its position and threshold variables describe an embedding of bandwidth at most $K_t$ by Lemmas~\ref{lem:distance-threshold} and~\ref{lem:bandwidth-component}. Since $K_j>K_t$ for every earlier threshold, the same threshold valuation satisfies each $\Delta_{K_j}$ by Lemma~\ref{lem:bandwidth-component}. The assignment therefore satisfies the complete cumulative formula.

Any learned clause retained by a sound SAT solver is a logical consequence of the clauses present when it was learned. Every later cumulative formula contains those earlier clauses, so retaining learned clauses also preserves the satisfiability equivalence at the current threshold. Therefore, keeping both earlier threshold clauses and learned clauses changes the solver state but not the bandwidth decision problem being solved.
\end{proof}

\begin{proposition}[Exactness of the incremental search]
\label{prop:incremental-optimality}
Assume that $G$ contains at least one edge whose endpoints are distinct, $U^*$ is a valid upper bound, and the incremental procedure in Table~\ref{tab:incremental-procedure} uses an encoding configuration from Table~\ref{tab:encoding-configurations}. If no timeout occurs, the procedure terminates with status \textnormal{\textsc{optimal}} and returns the certified optimal value $\beta^*(G)$. If a timeout occurs, any returned incumbent is only a feasible upper bound.
\end{proposition}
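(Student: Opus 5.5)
The plan is to argue by an invariant on the loop in Table~\ref{tab:incremental-procedure}. Write $U$ for the initial threshold: $U=U^*$ for cutoff-aware configurations and $U=2(h-1)$ for $\mathcal F_{\mathrm{full\text{-}dist}}(K)$.

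The invariant, maintained at the start of every iteration, has two parts. First, either $\mathit{bestBandwidth}$ is undefined and $K=U$, or $\mathit{bestBandwidth}=\beta(G,\varphi)$ for some injective embedding $\varphi$ and $K=\mathit{bestBandwidth}-1$. Second, the solver contains $\mathcal B_{\mathrm{cfg}}\land\bigwedge_j\Delta_{K_j}$ for a strictly decreasing sequence of previously tested thresholds, all of which are larger than the current $K$ and lie in the admissible range.

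By Lemma~\ref{lem:incremental-monotonicity}, after $\Delta_K$ is added the solver formula (together with any learned clauses) is equisatisfiable with $\mathcal F_{\mathrm{cfg}}(K)$. By Proposition~\ref{prop:full-correctness} and Lemma~\ref{lem:configuration-equivalence}, it is therefore satisfiable iff there is an injective embedding of bandwidth at most $K$. For $\mathcal F_{\mathrm{full\text{-}dist}}$, the same characterization follows from Lemmas~\ref{lem:position-occupancy}--\ref{lem:q1-symmetry}.

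The case analysis follows the loop.

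\textbf{First call.} Because $U^*$ is valid (or $U=2(h-1)$ is the host diameter), some embedding has bandwidth at most $U$. The first call is therefore satisfiable, so the procedure can never return \textsc{optimal} with an undefined value.

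\textbf{SAT result.} The decoded model yields an injective embedding $\varphi$ by Lemma~\ref{lem:position-occupancy}, with $\beta(G,\varphi)\le K$ by Lemma~\ref{lem:bandwidth-component}. The update $K\leftarrow\beta(G,\varphi)-1$ then strictly decreases $K$ and keeps it inside the range where the lemmas apply. It also keeps the previously added thresholds strictly larger, so the hypotheses of Lemma~\ref{lem:incremental-monotonicity} continue to hold.

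\textbf{UNSAT result at $K=B-1$.} Here $B=\mathit{bestBandwidth}$. No embedding of bandwidth at most $B-1$ exists, and the stored embedding witnesses $\beta^*(G)\le B$, so $B=\beta^*(G)$.

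\textbf{Early stop at $B=1$.} Since $G$ has an edge with distinct endpoints and $\varphi$ is injective, that edge has Manhattan distance at least $1$. Hence $\beta^*(G)\ge 1$, and $B=1$ is optimal.

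\textbf{Termination.} $K$ is a nonnegative integer that strictly decreases with each SAT answer, and the loop exits once $K$ would drop below $1$ (only via the $B=1$ return). Assuming each solver call terminates, which holds for a complete SAT solver, the procedure terminates without a timeout only through one of the two \textsc{optimal} returns.

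\textbf{Timeout.} No UNSAT certificate has been obtained, so the only guarantee is the invariant: the incumbent is the bandwidth of a genuine embedding. It is therefore an upper bound on $\beta^*(G)$, but nothing more is certified.

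The main obstacle is ensuring that Lemma~\ref{lem:incremental-monotonicity} applies at every call. That requires the tested thresholds to be strictly decreasing, at least $1$ (or $0$), and at most $U$, and it requires the learned clauses to be harmless. I would check these conditions explicitly from the update rule and from $\beta(G,\varphi)\le K$. One boundary case also needs care: $K=B-1$ could be $0$, which lies outside the range $K_t\ge1$ stated in the lemma. The $B=1$ early return ensures that $K=0$ is never tested, and this is exactly why the edge hypothesis is needed.
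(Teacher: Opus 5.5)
Your proposal is correct and follows essentially the same route as the paper. The first call is satisfiable because the initial threshold is a valid upper bound. Each satisfiable call yields an incumbent $B\le K$ and a strictly smaller next threshold $B-1$. Lemma~\ref{lem:incremental-monotonicity} makes the cumulative solver state, learned clauses included, equisatisfiable with $\mathcal F_{\mathrm{cfg}}$ at the current threshold. The two \textsc{optimal} exits are justified either by \textsc{unsat} at $B-1$ or by $B=1$ together with the edge hypothesis.

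Your loop invariant, which checks that each return is sound and that the loop terminates, replaces the paper's case split on $B>\beta^*(G)$ versus $B=\beta^*(G)$. The logical content is the same.
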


\begin{proof}
Because $U^*$ is a valid upper bound, an injective embedding of bandwidth at most $U^*$ exists. For a configuration containing $\mathcal F_{\mathrm{dist}}^{U^*}$, Proposition~\ref{prop:full-correctness} and Lemma~\ref{lem:configuration-equivalence} show that the selected configuration is satisfiable at its initial threshold $U^*$. For $\mathcal F_{\mathrm{full\text{-}dist}}(K)$, every embedding on the $h\times h$ grid has bandwidth at most $2(h-1)$, and Lemmas~\ref{lem:position-occupancy}, \ref{lem:distance-threshold}, \ref{lem:bandwidth-component}, and~\ref{lem:q1-symmetry} show that this configuration is satisfiable at its initial threshold $2(h-1)$. Hence, unless the first solver call times out, the procedure obtains a feasible incumbent before any unsatisfiable result can occur.

Consider a satisfiable call at threshold $K$. The decoded model gives an embedding with actual bandwidth $B\le K$, so $B$ is a feasible upper bound and $\beta^*(G)\le B$. The next threshold is $B-1$, which is strictly smaller than $K$. If $B>\beta^*(G)$, integrality gives $B-1\ge\beta^*(G)$, and the next decision problem remains satisfiable. If $B=\beta^*(G)>1$, the next threshold is below the optimal value and is unsatisfiable. Lemma~\ref{lem:incremental-monotonicity} ensures that the cumulative solver state has exactly the same satisfiability status as the selected configuration at this new threshold.

The tested thresholds decrease after every satisfiable call, so the process is finite when no timeout occurs. It either reaches an incumbent $B=\beta^*(G)>1$ followed by an unsatisfiable call at $B-1$, or reaches a feasible incumbent $B=1$. In the first case, feasibility at $B$ and infeasibility at $B-1$ certify that $B$ is the optimal value. In the second case, injectivity and the existence of an edge with distinct endpoints imply $\beta^*(G)\ge1$, so the feasible value $1$ is also certified optimal. The smart jump omits only thresholds from $B$ through $K-1$, all of which admit the current embedding. Every embedding with bandwidth smaller than $B$ remains feasible at the next tested threshold $B-1$, so the jump excludes no improving embedding.

A timeout establishes neither satisfiability nor unsatisfiability at the current threshold. Any incumbent obtained earlier remains a feasible upper bound, but the procedure has not excluded a smaller bandwidth.
\end{proof}

\begin{lemma}[Exactness on the $2\times\lceil n/2\rceil$ host]
\label{lem:two-row-correctness}
Let $U$ be a feasible initial threshold for $\mathcal F_{\mathrm{full}}(K)$ instantiated on the $2\times\lceil n/2\rceil$ host. For every integer $K$ with $0\le K\le U$, $\mathcal F_{\mathrm{full}}(K)$ on the $2\times\lceil n/2\rceil$ host is satisfiable if and only if there exists an injective embedding on the $2\times\lceil n/2\rceil$ host with bandwidth at most $K$. Consequently, a feasible bandwidth $B$ is a certified optimal value if $B=1$ or if the incremental solver returns \textnormal{\textsc{unsat}} at $B-1$. If a solver call times out before either condition is established, any returned incumbent is only a feasible upper bound.
\end{lemma}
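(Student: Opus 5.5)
The plan is to replay the square-grid argument, Proposition~\ref{prop:full-correctness} and Proposition~\ref{prop:incremental-optimality}, on the rectangular host $H=\{1,2\}\times\{1,\ldots,m\}$ with $m=\lceil n/2\rceil$. Every component is re-indexed so that row variables range over $\{1,2\}$ and column variables over $\{1,\ldots,m\}$. The distance thresholds use the axis-specific limits $L_x=\min\{U,1\}$ and $L_y=\min\{U,m-1\}$ in place of the common limit $\min\{U^*,h-1\}$. The cutoff is $U$. Each lemma in Appendix~\ref{app:correctness} uses only three facts: exactly-one coordinate selection per axis, the threshold semantics on a single axis with its own coordinate range, and the fact that reflections are distance-preserving bijections of the host. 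I would therefore state that each lemma holds verbatim with $h$ replaced by the relevant axis length, and then check the few places where the square shape is used.

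\textbf{Position, occupancy and distance.} Lemma~\ref{lem:position-occupancy} transfers directly, since its proof never uses $h\times h$; it only uses the exactly-one constraints and the at-most-one constraint per cell. For Lemma~\ref{lem:distance-threshold}, I would argue per axis: on the row axis only $T^x_{e,1}$ is represented, and on the column axis the thresholds are $d\le L_y$. The activation, deactivation, zero and monotonicity arguments are local to one axis, so they carry over unchanged. The cutoff clauses exclude exactly the axis separations greater than $U$. Any embedding of bandwidth at most $K\le U$ has both axis separations at most $U$, so it survives.

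\textbf{Bandwidth component and symmetry.} For Lemma~\ref{lem:bandwidth-component}, I would rerun the proof with $L$ replaced by $L_x$ on the $x$-thresholds and $L_y$ on the $y$-thresholds. There are two steps.
\begin{itemize}
\item If $a_x+a_y\ge K+1$ with both separations at most $K$, then $i=a_x$ gives a binary clause whose indices $i\le L_x$ and $K-i+1\le a_y\le L_y$ are represented.
\item If $a_x=0$, then $a_y\ge K+1$ and the unit clause $\neg T^y_{e,K+1}$ is present, because $K+1\le a_y\le L_y$.
\end{itemize}
The converse direction is unchanged. For Lemma~\ref{lem:q1-symmetry}, I take $Q_1=\{1\}\times\{1,\ldots,\lceil m/2\rceil\}$, since $\lceil 2/2\rceil=1$. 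The row reflection $r\mapsto 3-r$ and the column reflection $c\mapsto m+1-c$ are bijections of $H$ that preserve axis differences, so the anchor can always be moved into $Q_1$. Combining these lemmas as in Proposition~\ref{prop:full-correctness} gives the equivalence for $0\le K\le U$.

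\textbf{Certification.} The certification claims would then follow as in Lemma~\ref{lem:incremental-monotonicity} and Proposition~\ref{prop:incremental-optimality}. Satisfiability is monotone in $K$, and the cumulative threshold clauses together with the learned clauses preserve satisfiability at the current threshold. A SAT result at $K$ gives a feasible $B\le K$. An UNSAT result at $B-1$, with $B-1\le U$ so that the equivalence applies, rules out every embedding of bandwidth below $B$. If $B=1$, the bound $\beta^*\ge1$ holds by injectivity together with the nonempty edge set. A timeout proves nothing.

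I expect the main obstacle to be the asymmetric bookkeeping in the bandwidth-component step. The paper's $\Delta_K$ is written with a single limit, so I must fix an interpretation in which clauses referencing unrepresented thresholds are simply omitted, and then verify that no violating case falls into an omitted clause. Cases with $a_x=1$ and $a_y=K$ need checking, as does a $U$ that exceeds $m$. The second point is that validity of $U$ comes from the initialization procedure, not from $\delta(n)$. That only matters for guaranteeing the first call is satisfiable; the exactness equivalence itself requires only $K\le U$.
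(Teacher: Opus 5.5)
Your proposal is correct and follows the paper's proof essentially step for step. Like the paper, you transfer Lemma~\ref{lem:position-occupancy} to the two coordinate domains, rerun the threshold and bandwidth-component arguments with the axis-specific limits $\min\{U,1\}$ and $\min\{U,m-1\}$, use the reflections $r\mapsto 3-r$ and $c\mapsto m+1-c$ for the symmetry step, and then invoke Lemma~\ref{lem:incremental-monotonicity} and Proposition~\ref{prop:incremental-optimality} for certification. Your axis-wise check of $\Delta_K$ is more explicit than the paper's bare claim that the arguments carry over. One small point: your two bullets presuppose the unit-clause step that gives $a_x,a_y\le K$ (including the case $K=0$, $a_x=1$). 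That step goes through verbatim because the cutoff and the coordinate domains force $a_q\le L_q$.
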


\begin{proof}
On the $2\times\lceil n/2\rceil$ host, the row-coordinate domain is $\{1,2\}$ and the column-coordinate domain is $\{1,\ldots,\lceil n/2\rceil\}$. The position and occupancy argument in Lemma~\ref{lem:position-occupancy} remains valid for these two coordinate domains and therefore represents exactly the injective embeddings on the $2\times\lceil n/2\rceil$ host.

For an embedding $\varphi$, write $\varphi(v)=(r_v,c_v)$. For every guest edge $(u,v)$, the row separation $|r_u-r_v|$ is at most $1$, while the column separation $|c_u-c_v|$ is at most $\lceil n/2\rceil-1$. The row-distance thresholds range up to $\min\{U,1\}$, and the column-distance thresholds range up to $\min\{U,\lceil n/2\rceil-1\}$. Applying the argument in Lemma~\ref{lem:distance-threshold} separately to these two threshold ranges shows that the corresponding threshold variables represent the row and column separations exactly. The argument in Lemma~\ref{lem:bandwidth-component} then shows that the distance components and $\Delta_K$ are satisfiable if and only if
\[
|r_u-r_v|+|c_u-c_v|\le K
\]
for every guest edge. The left-hand side is the Manhattan distance between the positions of $u$ and $v$.

A row reflection maps $r$ to $3-r$, while a column reflection maps $c$ to $\lceil n/2\rceil+1-c$. Both reflections are bijections of the $2\times\lceil n/2\rceil$ host and preserve the row and column separations. The symmetry-preservation argument in Lemma~\ref{lem:q1-symmetry} therefore remains valid. Adding $\mathcal F_{\mathrm{sym}}$ does not change bandwidth-$K$ feasibility. This proves the stated satisfiability equivalence.

The argument in Lemma~\ref{lem:incremental-monotonicity} applies with the row and column threshold ranges above, so the cumulative solver state at each tested threshold has the same satisfiability status as $\mathcal F_{\mathrm{full}}(K)$ on the $2\times\lceil n/2\rceil$ host. The certification and timeout arguments in Proposition~\ref{prop:incremental-optimality} therefore apply unchanged.
\end{proof}

\section{Initial Upper Bound for the
\texorpdfstring{$2\times\lceil n/2\rceil$}{2 x ceil(n/2)} Grid}
\label{app:initial-bound-2xm}

This section describes how a valid initial upper bound is obtained for the $2\times\lceil n/2\rceil$ grid. Throughout this section, $m=\lceil n/2\rceil$ and $H_{2,m}=[2]\times[m]$. Because $\delta(n)$ concerns unrestricted grid positions rather than positions restricted to $H_{2,m}$, it supplies only the starting scale for the candidate thresholds. The initialization then combines these candidates with an instance-specific lower bound and SAT feasibility checks, and adopts a candidate as the initial upper bound only after SAT verifies a feasible embedding at that threshold.

The lower bound combines a maximum-degree capacity condition with a connected-component condition. For an integer bandwidth threshold $B\ge1$, Equation~\eqref{eq:2xm-strip-capacity} presents the strip-capacity term $S_m(B)$:
\begin{equation}
S_m(B)
=
\bigl(\min\{2B+1,m\}-1\bigr)+\min\{2B-1,m\}.
\label{eq:2xm-strip-capacity}
\end{equation}
For any position $p\in H_{2,m}$, the first summand in Equation~\eqref{eq:2xm-strip-capacity} bounds the number of other positions in the same row as $p$ that lie within Manhattan distance $B$ of $p$. The second summand in Equation~\eqref{eq:2xm-strip-capacity} bounds the corresponding number of positions in the other row. Here, $\Delta(G)$ denotes the maximum degree of $G$, and $D_C=\operatorname{diam}_G(C)$ for each connected component $C$ with $|C|\ge2$. The capacity requirement $\Delta(G)\le S_m(B)$ leads to the maximum-degree bound, while comparing the host distance forced by $|C|$ distinct positions with $D_C$ leads to the component bound. Equations~\eqref{eq:2xm-strip-degree-lb} and~\eqref{eq:2xm-component-lb} present these bounds, respectively. Equation~\eqref{eq:2xm-lower-bound} takes their maximum, and Lemma~\ref{lem:2xm-lower-bound} establishes its validity:
\begin{align}
\mathrm{LB}_{\mathrm{strip\text{-}degree}}(G)
&=
\min\left\{
B\in\mathbb Z_{\ge1}:\Delta(G)\le S_m(B)
\right\},
\label{eq:2xm-strip-degree-lb}\\
\mathrm{LB}_{\mathrm{component}}(G)
&=
\max_{\substack{C\text{ connected component of }G\\|C|\ge2}}
\left\lceil
\frac{\lceil |C|/2\rceil}{D_C}
\right\rceil,
\label{eq:2xm-component-lb}\\
\mathrm{LB}_{2,m}(G)
&=
\max\left\{
\mathrm{LB}_{\mathrm{strip\text{-}degree}}(G),
\mathrm{LB}_{\mathrm{component}}(G)
\right\}.
\label{eq:2xm-lower-bound}
\end{align}
The lower bound determines where feasibility testing begins, whereas SAT determines whether a selected candidate is feasible. Equation~\eqref{eq:2xm-candidate-sequence} presents the candidate-threshold sequence generated from $\delta(n)$:
\begin{equation}
P_j
=
\min\left\{
m,
\left\lceil\left(1+\frac{j}{2}\right)\delta(n)\right\rceil
\right\},
\qquad j=0,1,2,\ldots.
\label{eq:2xm-candidate-sequence}
\end{equation}
The sequence starts with $\delta(n)$, increases its multiplier in steps of $0.5$, and caps each candidate at the host diameter $m$. The initialization selects the first candidate $P_j$ satisfying $P_j\ge\mathrm{LB}_{2,m}(G)$ and tests it with the cutoff-aware SAT encoding. If the result is \textsc{unsat}, the threshold is increased one unit at a time and the encoding is rebuilt until SAT verifies a feasible embedding. The first satisfiable threshold is a valid upper bound and is used as both the distance cutoff and the initial threshold of the descending search. By Lemma~\ref{lem:2xm-lower-bound}, candidate values below $\mathrm{LB}_{2,m}(G)$ are infeasible and need not be tested during initialization. In addition, this lower bound can be used to certify optimality when it equals a verified feasible bandwidth.

\begin{lemma}[Lower bound for $H_{2,m}$]
\label{lem:2xm-lower-bound}
Assume that $E\ne\varnothing$. Every injective embedding $\varphi:V\to H_{2,m}$ satisfies:
\begin{equation}
\beta(G,\varphi)\ge \mathrm{LB}_{2,m}(G).
\end{equation}
\end{lemma}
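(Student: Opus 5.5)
The bound is the maximum of two terms, so the plan is to show separately that $\beta(G,\varphi)\ge\mathrm{LB}_{\mathrm{strip\text{-}degree}}(G)$ and $\beta(G,\varphi)\ge\mathrm{LB}_{\mathrm{component}}(G)$ for an arbitrary injective embedding $\varphi:V\to H_{2,m}$. Throughout, write $B=\beta(G,\varphi)$ and $\varphi(v)=(r_v,c_v)$. Since $E\ne\varnothing$ and $\varphi$ is injective, the endpoints of any edge occupy distinct positions, so $B\ge1$. This places $B$ in the domain $\mathbb Z_{\ge1}$ over which Equation~\eqref{eq:2xm-strip-degree-lb} minimizes.

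\emph{Degree bound.} Fix a vertex $v$ of degree $\Delta(G)$. Its neighbours occupy distinct positions, all different from $\varphi(v)$, at Manhattan distance at most $B$ from $\varphi(v)$. I would count the available positions row by row.
\begin{itemize}
\item In row $r_v$, the admissible positions are the columns $c\ne c_v$ with $|c-c_v|\le B$. These lie in a window of at most $\min\{2B+1,m\}$ columns minus the centre, giving at most $\min\{2B+1,m\}-1$ positions.
\item In the other row, a position in column $c$ has distance $1+|c-c_v|$, so the admissible columns satisfy $|c-c_v|\le B-1$. This gives at most $\min\{2B-1,m\}$ positions.
\end{itemize}
Hence $\Delta(G)\le S_m(B)$, so $B$ belongs to the set in Equation~\eqref{eq:2xm-strip-degree-lb}. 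Its minimum is therefore at most $B$; no monotonicity of $S_m$ is needed.

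\emph{Component bound.} Fix a connected component $C$ with $|C|\ge2$ and let $k=\lceil |C|/2\rceil\ge1$. Because $D_C$ is finite and at least $1$, every pair $u,w\in C$ is joined by a path of at most $D_C$ edges. Each edge has host length at most $B$, so the triangle inequality gives $\|\varphi(u)-\varphi(w)\|_1\le D_C B$. It therefore suffices to exhibit $u,w\in C$ with $\|\varphi(u)-\varphi(w)\|_1\ge k$. Then $B\ge k/D_C$, and integrality of $B$ gives $B\ge\lceil k/D_C\rceil$; maximizing over $C$ gives the component term.

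\emph{Main step: finding a pair at distance $k$.} This is the only step with a real case analysis. Each column holds at most two vertices of $C$, so $C$ occupies at least $k$ distinct columns. Hence the column span $s=\max_C c-\min_C c$ is at least $k-1$.
\begin{itemize}
\item If $s\ge k$, take the two vertices in the extreme columns; their distance is already at least $k$.
\item If $s=k-1$, the $|C|$ vertices lie in exactly $k$ consecutive columns, which contain $2k$ cells. At most one cell is empty, since $|C|\ge 2k-1$.
\begin{itemize}
\item For $k=1$: $|C|=2$ and both cells of the single column are used, giving distance $1=k$.
\item For $k\ge2$: at least one extreme column is full and the other extreme column is nonempty. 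Choose a vertex in the nonempty extreme column and a vertex in the opposite row of the full extreme column. Their distance is $(k-1)+1=k$.
\end{itemize}
\end{itemize}
In every case the required pair exists, which completes the component bound and hence the lemma.
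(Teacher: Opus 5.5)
Your proposal is correct and follows the paper's proof essentially step for step. It uses the same row-by-row count giving $\Delta(G)\le S_m(B)$, and the same column-span argument (span at least $\lceil |C|/2\rceil-1$, with the tight case forcing an opposite-row pair across the extreme columns) combined with the path-length bound $q\le B\,D_C$; your explicit handling of the tight case, including the $k=1$ subcase and the observation that at most one of the $2k$ cells is empty, spells out what the paper compresses into a single sentence.
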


\begin{proof}
Fix an injective embedding $\varphi:V\to H_{2,m}$ and write $B=\beta(G,\varphi)$. Since $E\ne\varnothing$, $B\ge1$.
Choose a vertex $v^\star$ of maximum degree. At most $\min\{2B+1,m\}-1$ positions other than $\varphi(v^\star)$ in its row can be within Manhattan distance $B$. In the other row, the row separation is one, so at most $\min\{2B-1,m\}$ positions can be within distance $B$. Injectivity places the distinct neighbors of $v^\star$ in these positions, which gives $\Delta(G)\le S_m(B)$. Hence,
\begin{equation}
B\ge \mathrm{LB}_{\mathrm{strip\text{-}degree}}(G).
\end{equation}

For the component bound, fix a connected component $C$ with $|C|\ge2$, and let $q=\lceil |C|/2\rceil$. Any $|C|$ distinct positions in $H_{2,m}$ contain two positions at Manhattan distance at least $q$. To see this, let $a$ and $b$ be the minimum and maximum occupied columns and set $L=b-a$. Since each column contains at most two positions, $L\ge q-1$. If $L\ge q$, the claim follows from the column separation. If $L=q-1$, the number of occupied positions forces an opposite-row pair across the two extreme columns, whose Manhattan distance is $q$.

Let $u,v\in C$ occupy such a pair of positions. A shortest path between $u$ and $v$ in $G$ has length at most $D_C$, and every edge on this path has host distance at most $B$. The triangle inequality gives
\begin{equation}
q
\le d_{H_{2,m}}\bigl(\varphi(u),\varphi(v)\bigr)
\le B D_C.
\end{equation}
Therefore,
\begin{equation}
B
\ge
\left\lceil\frac{q}{D_C}\right\rceil
=
\left\lceil
\frac{\lceil |C|/2\rceil}{D_C}
\right\rceil.
\end{equation}
Taking the maximum over all non-singleton connected components gives $B\ge\mathrm{LB}_{\mathrm{component}}(G)$. Combining the two inequalities proves the claim.
\end{proof}

\section{Encoding-Size Comparison in the Ablation Study}
\label{app:ablation-encoding-size-comparison}

\begin{table}[pos=htbp]
\caption{Variable and clause counts for the SAT variants on the Regular and Harwell--Boeing sets using the $\lceil\sqrt n\rceil\times\lceil\sqrt n\rceil$ host grid. The counts exclude $\Delta_K$.}
\label{tab:ablation-encoding-size-aggregate}
\centering
\small
\begin{tabular}{@{}llrrrr@{}}
\toprule
Dataset & Variant & Median variables & Max variables & Median clauses & Max clauses \\
\midrule
Regular & $\mathcal F_{\mathrm{full}}$ & $716$ & $2{,}855$ & $2{,}494$ & $20{,}150$ \\
Regular & $\mathcal F_{\mathrm{no\text{-}sym}}$ & $716$ & $2{,}855$ & $2{,}648$ & $20{,}860$ \\
Regular & $\mathcal F_{\mathrm{pair}}$ & $312$ & $1{,}880$ & $2{,}508$ & $21{,}754$ \\
Regular & $\mathcal F_{\mathrm{full\text{-}dist}}$ & $716$ & $2{,}855$ & $2{,}494$ & $20{,}150$ \\
Regular & non-incremental $\mathcal F_{\mathrm{full}}$ & $716$ & $2{,}855$ & $2{,}494$ & $20{,}150$ \\
\addlinespace
Harwell--Boeing & $\mathcal F_{\mathrm{full}}$ & $69{,}263$ & $420{,}492$ & $512{,}972$ & $5{,}210{,}530$ \\
Harwell--Boeing & $\mathcal F_{\mathrm{no\text{-}sym}}$ & $69{,}263$ & $420{,}492$ & $526{,}444$ & $5{,}225{,}972$ \\
Harwell--Boeing & $\mathcal F_{\mathrm{pair}}$ & $20{,}340$ & $131{,}292$ & $2{,}225{,}780$ & $30{,}372{,}830$ \\
Harwell--Boeing & $\mathcal F_{\mathrm{full\text{-}dist}}$ & $69{,}263$ & $420{,}492$ & $512{,}972$ & $5{,}210{,}530$ \\
Harwell--Boeing & non-incremental $\mathcal F_{\mathrm{full}}$ & $69{,}263$ & $420{,}492$ & $512{,}972$ & $5{,}210{,}530$ \\
\bottomrule
\end{tabular}
\end{table}

Table~\ref{tab:ablation-encoding-size-aggregate} reports the median and maximum variable and clause counts after the fixed encoding components are constructed and before $\Delta_K$ is added. $\mathcal F_{\mathrm{pair}}$ has lower median and maximum variable counts but the largest maximum clause count in both datasets. For every instance in both datasets, $U^*\ge\lceil\sqrt n\rceil-1$, where $\lceil\sqrt n\rceil-1$ is the largest possible separation along one axis of the host grid. The cutoff therefore removes no distance thresholds, and the index set in Equation~\eqref{eq:cutoff_pair} is empty. Consequently, no cutoff clauses are generated, and $\mathcal F_{\mathrm{full}}$ and $\mathcal F_{\mathrm{full\text{-}dist}}$ have identical variable and clause counts. The non-incremental variant also matches $\mathcal F_{\mathrm{full}}$ because the two variants use the same encoded components before $\Delta_K$ is added.

Figure~\ref{fig:ablation-initial-size-scatter} gives an instance-level view of the occupancy-encoding tradeoff between $\mathcal F_{\mathrm{full}}$ and $\mathcal F_{\mathrm{pair}}$. The variable panel shows that $\mathcal F_{\mathrm{full}}$ uses more variables on all displayed instances. In contrast, the clause panel shows that $\mathcal F_{\mathrm{pair}}$ incurs substantially larger clause counts on many larger Harwell--Boeing instances, while some smaller instances remain above the diagonal.

\begin{figure*}[pos=t]
\centering
\begin{minipage}[t]{0.48\textwidth}
\centering
\includegraphics[width=\linewidth]{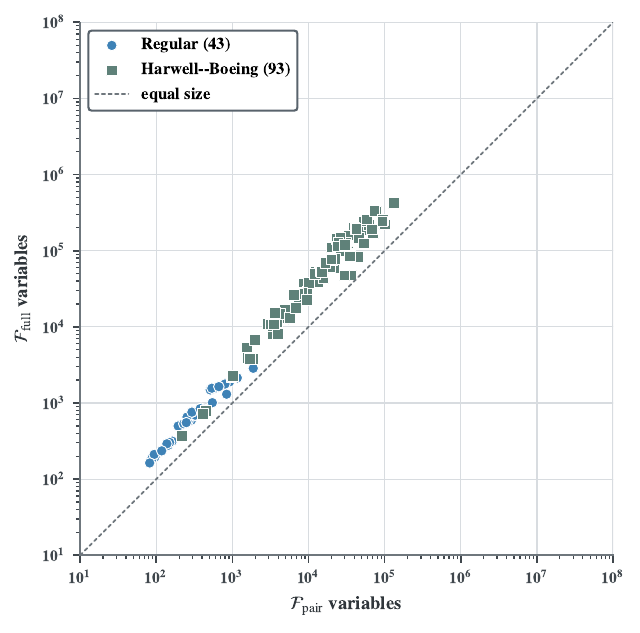}
\par\vspace{0.25em}(a) Variables
\end{minipage}\hfill
\begin{minipage}[t]{0.48\textwidth}
\centering
\includegraphics[width=\linewidth]{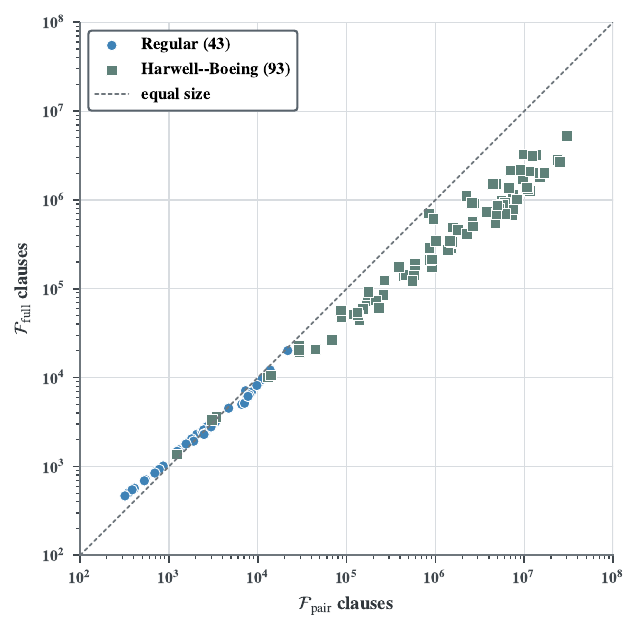}
\par\vspace{0.25em}(b) Clauses
\end{minipage}
\caption{Comparison of variable and clause counts between $\mathcal F_{\mathrm{full}}$ and $\mathcal F_{\mathrm{pair}}$ for the Regular and Harwell--Boeing sets on the $\lceil\sqrt n\rceil\times\lceil\sqrt n\rceil$ host grid. The counts exclude $\Delta_K$. Points below the diagonal indicate smaller counts for $\mathcal F_{\mathrm{full}}$.}
\label{fig:ablation-initial-size-scatter}
\end{figure*}

\section{Detailed Benchmark Tables}\label{app:detailed-results}

The detailed comparison rows from the benchmark files used in Section~\ref{sec:experiments} are reported in Tables~\ref{tab:standard-grid-combined-detail} and~\ref{tab:auxiliary-host-detail}. In Table~\ref{tab:auxiliary-host-detail}, dashes in the $n\times n$ block mark instances for which this additional host-grid evaluation could not be run under the available hardware resources.

\begingroup
\scriptsize
\setlength{\tabcolsep}{1.15pt}
\begin{longtable}{@{}lrrcccccccccc@{}}
\caption{Detailed comparison on the $\lceil\sqrt n\rceil \times \lceil\sqrt n\rceil$ host grid for Regular and Harwell--Boeing instances. Published columns report M3, BVNS, IG, RVNS, DRSA, and RLTS bandwidths. Local columns report $\mathcal F_{\mathrm{full}}$, Gurobi, CPLEX MIP, and CPLEX CP bandwidths with solving times in seconds. Boldface marks the three certified bandwidths below every previously published value.}
\label{tab:standard-grid-combined-detail}\\
\toprule
Instance & $n$ & $|E|$ & M3 & BVNS & IG & RVNS & DRSA & RLTS & $\mathcal{F}_{\mathrm{full}}$ & Gurobi & CPLEX MIP & CPLEX CP \\
\midrule
\endfirsthead
\caption[]{Detailed comparison on the $\lceil\sqrt n\rceil \times \lceil\sqrt n\rceil$ host grid for Regular and Harwell--Boeing instances (continued).}\\
\toprule
Instance & $n$ & $|E|$ & M3 & BVNS & IG & RVNS & DRSA & RLTS & $\mathcal{F}_{\mathrm{full}}$ & Gurobi & CPLEX MIP & CPLEX CP \\
\midrule
\endhead
\bottomrule
\endfoot
\multicolumn{13}{@{}l}{\textit{Regular}}\\
\texttt{wheel5} & 5 & 8 & 2$^{*}$ & 2 & 2 & -- & -- & 2 & 2* (0.03) & 2* (0.07) & 2* (1.23) & 2* (1.19) \\
\texttt{bipartite3x3} & 6 & 9 & 2$^{*}$ & 2 & 2 & -- & -- & 2 & 2* (0.04) & 2* (0.07) & 2* (1.19) & 2* (1.20) \\
\texttt{p2xc3} & 6 & 9 & 2$^{*}$ & 2 & 2 & -- & -- & 2 & 2* (0.03) & 2* (0.07) & 2* (1.21) & 2* (1.20) \\
\texttt{p2xp3} & 6 & 7 & 1$^{*}$ & 2 & 1 & -- & -- & 1 & 1* (0.04) & 1* (0.06) & 1* (1.25) & 1* (1.19) \\
\texttt{bipartite3x4} & 7 & 12 & 3$^{*}$ & 3 & 3 & -- & -- & 3 & 3* (0.05) & 3* (1.17) & 3* (0.09) & 3* (1.22) \\
\texttt{tree2x2} & 7 & 6 & 1$^{*}$ & 1 & 1 & -- & -- & 1 & 1* (0.04) & 1* (0.07) & 1* (1.20) & 1* (1.20) \\
\texttt{wheel7} & 7 & 12 & 2$^{*}$ & 2 & 2 & -- & -- & 2 & 2* (0.04) & 2* (0.07) & 2* (1.21) & 2* (1.21) \\
\texttt{bipartite4x4} & 8 & 16 & 3$^{*}$ & 3 & 3 & -- & -- & 3 & 3* (0.05) & 3* (1.20) & 3* (1.19) & 3* (1.24) \\
\texttt{c3xc3} & 9 & 18 & 2$^{*}$ & 2 & 2 & -- & -- & 2 & 2* (0.04) & 2* (0.08) & 2* (0.07) & 2* (1.21) \\
\texttt{p3xc3} & 9 & 15 & 2$^{*}$ & 2 & 2 & -- & -- & 2 & 2* (0.03) & 2* (0.08) & 2* (1.25) & 2* (1.22) \\
\texttt{p3xp3} & 9 & 12 & 1$^{*}$ & 2 & 1 & -- & -- & 1 & 1* (0.03) & 1* (0.07) & 1* (1.21) & 1* (1.20) \\
\texttt{bipartite5x5} & 10 & 25 & 3$^{*}$ & 3 & 3 & -- & -- & 3 & 3* (0.07) & 3* (1.31) & 3* (1.27) & 3* (1.29) \\
\texttt{cycle10} & 10 & 10 & 1$^{*}$ & 1 & 1 & -- & -- & 1 & 1* (0.04) & 1* (0.07) & 1* (0.07) & 1* (1.20) \\
\texttt{cyclePow10-10} & 10 & 45 & 4$^{*}$ & 4 & 4 & -- & -- & 4 & 4* (2.37) & 4* (2.67) & 4* (3.57) & 4* (8.84) \\
\texttt{cyclePow10-2} & 10 & 20 & 2$^{*}$ & 2 & 2 & -- & -- & 2 & 2* (0.04) & 2* (1.17) & 2* (0.08) & 2* (1.20) \\
\texttt{path10} & 10 & 9 & 1$^{*}$ & 1 & 1 & -- & -- & 1 & 1* (0.04) & 1* (0.07) & 1* (1.24) & 1* (1.21) \\
\texttt{petersen} & 10 & 15 & 2$^{*}$ & 2 & 2 & -- & -- & 2 & 2* (0.04) & 2* (1.22) & 2* (1.50) & 2* (1.22) \\
\texttt{wheel10} & 10 & 18 & 2$^{*}$ & 2 & 2 & -- & -- & 2 & 2* (0.05) & 2* (0.09) & 2* (1.20) & 2* (1.21) \\
\texttt{c3xc4} & 12 & 24 & 2$^{*}$ & 2 & 2 & -- & -- & 2 & 2* (0.04) & 2* (1.26) & 2* (0.11) & 2* (1.21) \\
\texttt{c3xk4} & 12 & 30 & 3$^{*}$ & 3 & 3 & -- & -- & 3 & 3* (0.10) & 3* (2.02) & 3* (1.36) & 3* (1.37) \\
\texttt{k3xk4} & 12 & 30 & 3$^{*}$ & 3 & 3 & -- & -- & 3 & 3* (1.16) & 3* (1.98) & 3* (1.32) & 3* (1.38) \\
\texttt{p3xk4} & 12 & 26 & 2$^{*}$ & 2 & 2 & -- & -- & 2 & 2* (0.04) & 2* (1.53) & 2* (1.58) & 2* (1.22) \\
\texttt{tree2x3} & 13 & 12 & 2$^{*}$ & 2 & 2 & -- & -- & 2 & 2* (0.04) & 2* (1.22) & 2* (1.42) & 2* (1.22) \\
\texttt{bipartite7x8} & 15 & 56 & 4$^{*}$ & 4 & 4 & -- & -- & 4 & 4* (22.01) & 4* (3.01) & 4* (1.64) & 4* (450.14) \\
\texttt{cycle15} & 15 & 15 & 2$^{*}$ & 2 & 2 & -- & -- & 2 & 2* (1.17) & 2* (1.67) & 2* (1.34) & 2* (1.30) \\
\texttt{cyclePow15-10} & 15 & 105 & -- & 6 & 6 & -- & -- & 6 & 6* (108.85) & 6* (640.33) & 6* (27.41) & 6* (10.67) \\
\texttt{cyclePow15-2} & 15 & 30 & 2$^{*}$ & 2 & 2 & -- & -- & 2 & 2* (0.04) & 2* (1.38) & 2* (0.11) & 2* (1.23) \\
\texttt{path15} & 15 & 14 & 1$^{*}$ & 2 & 1 & -- & -- & 1 & 1* (0.04) & 1* (0.10) & 1* (0.09) & 1* (1.23) \\
\texttt{tree3x2} & 15 & 14 & 2$^{*}$ & 2 & 2 & -- & -- & 2 & 2* (0.04) & 2* (1.34) & 2* (1.51) & 2* (1.23) \\
\texttt{wheel15} & 15 & 28 & 3$^{*}$ & 3 & 3 & -- & -- & 3 & 3* (1.48) & 3* (0.10) & 3* (1.62) & 3* (4.33) \\
\texttt{bipartite10x10} & 20 & 100 & -- & 5 & 5 & -- & -- & 5 & 5 (FEAS) & 5* (45.15) & 5* (92.25) & 5* (20.18) \\
\texttt{c4xc5} & 20 & 40 & 2$^{*}$ & 3 & 2 & -- & -- & 2 & 2* (0.06) & 2* (4.99) & 2* (1.66) & 2* (1.28) \\
\texttt{c4xk5} & 20 & 60 & 3$^{*}$ & 3 & 3 & -- & -- & 3 & 3* (1.26) & 3* (6.77) & 3* (77.34) & 3* (4.64) \\
\texttt{cycle20} & 20 & 20 & 1$^{*}$ & 1 & 1 & -- & -- & 1 & 1* (0.04) & 1* (1.23) & 1* (1.21) & 1* (1.21) \\
\texttt{cyclePow20-10} & 20 & 190 & -- & 6 & 6 & -- & -- & 6 & 6 (FEAS) & 6 (FEAS) & 6 (FEAS) & 6 (FEAS) \\
\texttt{cyclePow20-2} & 20 & 40 & 2$^{*}$ & 2 & 2 & -- & -- & 2 & 2* (0.05) & 2* (1.74) & 2* (1.62) & 2* (1.27) \\
\texttt{k4xk5} & 20 & 70 & 4$^{*}$ & 4 & 4 & -- & -- & 4 & 4* (2312.08) & 4 (FEAS) & 4 (FEAS) & 4* (2754.32) \\
\texttt{p4xc5} & 20 & 35 & 2$^{*}$ & 3 & 2 & -- & -- & 2 & 2* (0.06) & 2* (2.50) & 2* (2.04) & 2* (1.25) \\
\texttt{p4xk5} & 20 & 55 & 3$^{*}$ & 3 & 3 & -- & -- & 3 & 3* (1.35) & 3* (14.70) & 3* (28.16) & 3* (2.49) \\
\texttt{p4xp5} & 20 & 31 & 1$^{*}$ & 2 & 1 & -- & -- & 1 & 1* (0.05) & 1* (2.55) & 1* (1.23) & 1* (1.22) \\
\texttt{path20} & 20 & 19 & 1$^{*}$ & 1 & 1 & -- & -- & 1 & 1* (0.05) & 1* (1.26) & 1* (1.22) & 1* (1.22) \\
\texttt{wheel20} & 20 & 38 & -- & 3 & 3 & -- & -- & 3 & 3* (5.97) & 3* (2.69) & 3* (1.53) & 3* (71.87) \\
\texttt{tree2x4} & 21 & 20 & 2$^{*}$ & 2 & 2 & -- & -- & 2 & 2* (0.06) & 2* (1.44) & 2* (1.32) & 2* (1.27) \\
\addlinespace
\multicolumn{13}{@{}l}{\textit{Harwell--Boeing}}\\
\texttt{jgl009} & 9 & 50 & -- & -- & 3 & -- & 3 & 3 & 3* (0.04) & 3* (1.65) & 3* (0.07) & 3* (1.25) \\
\texttt{rgg010} & 10 & 76 & -- & -- & -- & -- & 4 & -- & 4* (2.33) & 4* (2.37) & 4* (3.58) & 4* (8.53) \\
\texttt{jgl011} & 11 & 76 & -- & -- & 4 & -- & 4 & 4 & 4* (2.02) & 4* (2.41) & 4* (4.03) & 4* (8.12) \\
\texttt{can\_\_\_24} & 24 & 92 & -- & -- & 3 & -- & 3 & 3 & 3* (1.36) & 3* (14.83) & 3* (2.77) & 3* (2.66) \\
\texttt{lap\_25} & 25 & 97 & -- & -- & -- & -- & 2 & -- & 2* (1.17) & 2* (1.37) & 2* (1.78) & 2* (1.27) \\
\texttt{fidap005} & 27 & 279 & -- & -- & -- & -- & 4 & -- & 4* (15.84) & 4* (553.90) & 4* (675.42) & 4* (52.33) \\
\texttt{pores\_1} & 30 & 180 & -- & -- & 3 & -- & 3 & 3 & 3* (2.06) & 3* (18.25) & 3* (491.58) & 3* (3.63) \\
\texttt{ibm32} & 32 & 126 & -- & -- & 4 & -- & 4 & 4 & 4 (FEAS) & 4 (FEAS) & 4 (FEAS) & 4 (FEAS) \\
\texttt{bcspwr01} & 39 & 85 & -- & -- & 2 & -- & 2 & 2 & 2* (1.17) & 2* (13.83) & 2 (FEAS) & 2* (1.36) \\
\texttt{fidapm05} & 42 & 520 & -- & -- & -- & -- & 4 & -- & 4* (109.25) & 4 (FEAS) & 5 (FEAS) & 4* (710.15) \\
\texttt{bcsstk01} & 48 & 224 & 6 & 5 & 4 & 5 & 4 & 4 & 4 (FEAS) & 5 (FEAS) & 6 (FEAS) & 5 (FEAS) \\
\texttt{bcspwr02} & 49 & 108 & -- & -- & 2 & -- & 2 & 2 & 2* (2.14) & 2* (39.37) & 3 (FEAS) & 2* (3.78) \\
\texttt{curtis54} & 54 & 291 & -- & -- & 3 & -- & 3 & 3 & 3* (3.15) & 3 (FEAS) & 4 (FEAS) & 3* (20.18) \\
\texttt{will57} & 57 & 281 & -- & -- & 3 & -- & 3 & 3 & 3* (3.23) & 3 (FEAS) & 4 (FEAS) & 3* (83.75) \\
\texttt{dwt\_\_\_59} & 59 & 163 & -- & -- & 3 & -- & 2 & 2 & 2* (2.87) & 3 (FEAS) & 4 (FEAS) & 2* (2.07) \\
\texttt{impcol\_b} & 59 & 312 & -- & -- & 5 & -- & 5 & 5 & 5 (FEAS) & 5 (FEAS) & 6 (FEAS) & 5 (FEAS) \\
\texttt{can\_\_\_61} & 61 & 309 & -- & -- & 4 & -- & 4 & 4 & 4 (FEAS) & 4 (FEAS) & 6 (FEAS) & 4 (FEAS) \\
\texttt{bfw62a} & 62 & 450 & -- & -- & -- & -- & 4 & -- & \textbf{3* (584.02)} & 4 (FEAS) & 5 (FEAS) & 4 (FEAS) \\
\texttt{bfw62b} & 62 & 342 & -- & -- & -- & -- & 3 & -- & 3* (8.77) & 3* (75.08) & 4 (FEAS) & 3* (44.22) \\
\texttt{can\_\_\_62} & 62 & 140 & 2$^{*}$ & 2 & 2 & 3 & 2 & 2 & 2* (3.24) & 2 (FEAS) & 3 (FEAS) & 2* (3.45) \\
\texttt{bcsstk02} & 66 & 2211 & -- & -- & -- & -- & 12 & -- & 12 (FEAS) & 12 (FEAS) & 12 (FEAS) & 12 (FEAS) \\
\texttt{dwt\_\_\_66} & 66 & 193 & -- & -- & 2 & -- & 2 & 2 & 2* (2.56) & 2 (FEAS) & 3 (FEAS) & 2* (2.43) \\
\texttt{west0067} & 67 & 294 & -- & -- & 5 & -- & 5 & 5 & 5 (FEAS) & 6 (FEAS) & 7 (FEAS) & 6 (FEAS) \\
\texttt{dwt\_\_\_72} & 72 & 147 & -- & -- & -- & -- & 2 & -- & 2* (5.75) & 2 (FEAS) & 3 (FEAS) & 2* (186.93) \\
\texttt{can\_\_\_73} & 73 & 225 & -- & -- & 4 & -- & 3 & 3 & 3* (112.13) & 4 (FEAS) & 6 (FEAS) & 4 (FEAS) \\
\texttt{steam3} & 80 & 928 & -- & -- & 4 & -- & 4 & 4 & 4* (911.66) & 5 (FEAS) & 6 (FEAS) & 5 (FEAS) \\
\texttt{ash85} & 85 & 304 & -- & -- & 3 & -- & 2 & 3 & 2* (18.92) & 3 (FEAS) & 6 (FEAS) & 3 (FEAS) \\
\texttt{dwt\_\_\_87} & 87 & 314 & -- & -- & 3 & -- & 3 & 3 & 3* (9.19) & 4 (FEAS) & 6 (FEAS) & 3* (144.45) \\
\texttt{tols90} & 90 & 1746 & -- & -- & -- & -- & 10 & -- & 11 (FEAS) & 10 (FEAS) & 11 (FEAS) & 10 (FEAS) \\
\texttt{can\_\_\_96} & 96 & 432 & -- & -- & 4 & -- & 3 & 3 & 3* (28.30) & 4 (FEAS) & 7 (FEAS) & 4 (FEAS) \\
\texttt{nos4} & 100 & 347 & 8 & 4 & 4 & 4 & 3 & 3 & 3* (131.90) & 4 (FEAS) & 4 (FEAS) & 4 (FEAS) \\
\texttt{olm100} & 100 & 396 & -- & -- & -- & -- & 2 & -- & 2* (555.15) & 3 (FEAS) & 5 (FEAS) & 3 (FEAS) \\
\texttt{tub100} & 100 & 396 & -- & -- & -- & -- & 2 & -- & 2* (164.51) & 3 (FEAS) & 5 (FEAS) & 2* (76.38) \\
\texttt{ck104} & 104 & 992 & -- & -- & -- & -- & 3 & -- & 3* (122.95) & 5 (FEAS) & 6 (FEAS) & 4 (FEAS) \\
\texttt{bcsstk03} & 112 & 376 & -- & -- & -- & -- & 3 & -- & \textbf{2* (3190.90)} & 3 (FEAS) & 5 (FEAS) & 3 (FEAS) \\
\texttt{gent113} & 113 & 655 & -- & -- & 5 & -- & 5 & 5 & 6 (FEAS) & 6 (FEAS) & 10 (FEAS) & 6 (FEAS) \\
\texttt{gre\_\_115} & 115 & 421 & -- & -- & -- & -- & 4 & -- & 4 (FEAS) & 4 (FEAS) & 8 (FEAS) & 5 (FEAS) \\
\texttt{bcspwr03} & 118 & 297 & 15 & 4 & 3 & 4 & 2 & 3 & 2* (209.45) & 3 (FEAS) & 6 (FEAS) & 4 (FEAS) \\
\texttt{arc130} & 130 & 1282 & -- & -- & 10 & -- & 9 & 9 & 10 (FEAS) & 10 (FEAS) & 11 (FEAS) & 9 (FEAS) \\
\texttt{lns\_\_131} & 131 & 536 & -- & -- & -- & -- & 3 & -- & 3* (54.92) & 4 (FEAS) & 8 (FEAS) & 3 (FEAS) \\
\texttt{lnsp\_131} & 131 & 536 & -- & -- & -- & -- & 3 & -- & 3* (36.66) & 4 (FEAS) & 8 (FEAS) & 3 (FEAS) \\
\texttt{bcsstk04} & 132 & 1890 & 16 & 9 & 8 & 8 & 8 & 8 & 10 (FEAS) & 11 (FEAS) & 13 (FEAS) & 11 (FEAS) \\
\texttt{west0132} & 132 & 414 & -- & -- & -- & -- & 5 & -- & 6 (FEAS) & 7 (FEAS) & 9 (FEAS) & 7 (FEAS) \\
\texttt{rw136} & 136 & 479 & -- & -- & -- & -- & 3 & -- & 3* (441.74) & 5 (FEAS) & 8 (FEAS) & 4 (FEAS) \\
\texttt{impcol\_c} & 137 & 411 & -- & -- & -- & -- & 5 & -- & 5 (FEAS) & 7 (FEAS) & 9 (FEAS) & 6 (FEAS) \\
\texttt{bcsstk22} & 138 & 417 & 8 & 4 & 3 & 4 & 3 & 3 & \textbf{2* (61.01)} & 4 (FEAS) & 7 (FEAS) & 3 (FEAS) \\
\texttt{can\_\_144} & 144 & 720 & 16 & 4 & 4 & 4 & 3 & 4 & 4 (FEAS) & 7 (FEAS) & 9 (FEAS) & 6 (FEAS) \\
\texttt{lund\_a} & 147 & 1298 & -- & -- & 5 & -- & 5 & 5 & 5 (FEAS) & 7 (FEAS) & 12 (FEAS) & 6 (FEAS) \\
\texttt{lund\_b} & 147 & 1294 & -- & -- & 5 & -- & 5 & 5 & 5 (FEAS) & 7 (FEAS) & 12 (FEAS) & 6 (FEAS) \\
\texttt{bcsstk05} & 153 & 1288 & 17 & 7 & 6 & 5 & 6 & 6 & 7 (FEAS) & 9 (FEAS) & 11 (FEAS) & 8 (FEAS) \\
\texttt{west0156} & 156 & 371 & -- & -- & -- & -- & 5 & -- & 6 (FEAS) & 8 (FEAS) & 10 (FEAS) & 8 (FEAS) \\
\texttt{can\_\_161} & 161 & 769 & 9 & 6 & 4 & 5 & 4 & 4 & 4 (FEAS) & 8 (FEAS) & 10 (FEAS) & 6 (FEAS) \\
\texttt{dwt\_\_162} & 162 & 672 & -- & -- & -- & -- & 3 & -- & 3* (322.01) & 7 (FEAS) & 9 (FEAS) & 4 (FEAS) \\
\texttt{lop163} & 163 & 935 & -- & -- & -- & -- & 4 & -- & 5 (FEAS) & 6 (FEAS) & 12 (FEAS) & 6 (FEAS) \\
\texttt{west0167} & 167 & 507 & -- & -- & -- & -- & 5 & -- & 6 (FEAS) & 7 (FEAS) & 12 (FEAS) & 8 (FEAS) \\
\texttt{mcca} & 180 & 2659 & -- & -- & 8 & -- & 8 & 8 & 8 (FEAS) & 11 (FEAS) & 13 (FEAS) & 9 (FEAS) \\
\texttt{fs\_183\_1} & 183 & 1069 & -- & -- & -- & -- & 8 & -- & 9 (FEAS) & 10 (FEAS) & 11 (FEAS) & 9 (FEAS) \\
\texttt{gre\_\_185} & 185 & 1005 & -- & -- & -- & -- & 3 & -- & 4 (FEAS) & 10 (FEAS) & 14 (FEAS) & 7 (FEAS) \\
\texttt{can\_\_187} & 187 & 839 & -- & -- & -- & -- & 3 & -- & 3* (1163.91) & 8 (FEAS) & 14 (FEAS) & 6 (FEAS) \\
\texttt{dwt\_\_193} & 193 & 1843 & -- & -- & 6 & -- & 6 & 6 & 8 (FEAS) & 15 (FEAS) & 20 (FEAS) & 11 (FEAS) \\
\texttt{dwt\_\_198} & 198 & 795 & 10 & 3 & 3 & 3 & 3 & 3 & 3* (103.43) & 8 (FEAS) & 11 (FEAS) & 4 (FEAS) \\
\texttt{will199} & 199 & 701 & -- & -- & -- & -- & 7 & -- & 9 (FEAS) & 13 (FEAS) & 17 (FEAS) & 11 (FEAS) \\
\texttt{bwm200} & 200 & 796 & -- & -- & -- & -- & 2 & -- & 2* (323.52) & 6 (FEAS) & 8 (FEAS) & 2* (123.14) \\
\texttt{rdb200} & 200 & 1120 & -- & -- & -- & -- & 2 & -- & 2* (203.08) & 6 (FEAS) & 13 (FEAS) & 3 (FEAS) \\
\texttt{impcol\_a} & 207 & 572 & -- & -- & -- & -- & 5 & -- & 6 (FEAS) & 9 (FEAS) & 15 (FEAS) & 8 (FEAS) \\
\texttt{wm1} & 207 & 2909 & -- & -- & 10 & -- & 9 & 10 & 13 (FEAS) & 20 (FEAS) & 23 (FEAS) & 13 (FEAS) \\
\texttt{wm2} & 207 & 2942 & -- & -- & 10 & -- & 9 & 10 & 13 (FEAS) & 19 (FEAS) & 23 (FEAS) & 13 (FEAS) \\
\texttt{wm3} & 207 & 2948 & -- & -- & 10 & -- & 9 & 10 & 13 (FEAS) & 21 (FEAS) & 24 (FEAS) & 13 (FEAS) \\
\texttt{dwt\_\_209} & 209 & 976 & 10 & 6 & 5 & 5 & 4 & 4 & 5 (FEAS) & 12 (FEAS) & 18 (FEAS) & 8 (FEAS) \\
\texttt{fidap001} & 216 & 4374 & -- & -- & -- & -- & 6 & -- & 8 (FEAS) & 14 (FEAS) & 22 (FEAS) & 11 (FEAS) \\
\texttt{gre\_216a} & 216 & 876 & -- & -- & -- & -- & 4 & -- & 4 (FEAS) & 8 (FEAS) & 17 (FEAS) & 7 (FEAS) \\
\texttt{ash219} & 219 & 438 & -- & -- & -- & -- & 5 & -- & 6 (FEAS) & 10 (FEAS) & 17 (FEAS) & 9 (FEAS) \\
\texttt{dwt\_\_221} & 221 & 925 & 10 & 5 & 4 & 6 & 3 & 4 & 4 (FEAS) & 9 (FEAS) & 20 (FEAS) & 7 (FEAS) \\
\texttt{impcol\_e} & 225 & 1308 & -- & -- & -- & -- & 7 & -- & 8 (FEAS) & 12 (FEAS) & 22 (FEAS) & 11 (FEAS) \\
\texttt{pde225} & 225 & 1065 & -- & -- & -- & -- & 1 & -- & 1* (1701.68) & 8 (FEAS) & 1* (1.98) & 8 (FEAS) \\
\texttt{can\_\_229} & 229 & 1003 & 21 & 5 & 4 & 5 & 4 & 4 & 4 (FEAS) & 11 (FEAS) & 18 (FEAS) & 6 (FEAS) \\
\texttt{dwt\_\_234} & 234 & 534 & 21 & 2 & 3 & 2 & 3 & 3 & 2* (743.33) & 6 (FEAS) & 11 (FEAS) & 3 (FEAS) \\
\texttt{e05r0000} & 236 & 5856 & -- & -- & -- & -- & 6 & -- & 11 (FEAS) & 21 (FEAS) & 26 (FEAS) & 10 (FEAS) \\
\texttt{nos1} & 237 & 627 & 10 & 4 & 3 & 4 & 2 & 3 & 2* (832.54) & 4 (FEAS) & 15 (FEAS) & 3 (FEAS) \\
\texttt{steam1} & 240 & 3762 & -- & -- & 6 & -- & 6 & 6 & 7 (FEAS) & 16 (FEAS) & 23 (FEAS) & 9 (FEAS) \\
\texttt{dwt\_\_245} & 245 & 853 & 11 & 6 & 4 & 5 & 3 & 4 & 4 (FEAS) & 10 (FEAS) & 17 (FEAS) & 6 (FEAS) \\
\texttt{can\_\_256} & 256 & 1586 & -- & -- & -- & -- & 7 & -- & 9 (FEAS) & 15 (FEAS) & 21 (FEAS) & 13 (FEAS) \\
\texttt{nnc261} & 261 & 1500 & -- & -- & -- & -- & 4 & -- & 5 (FEAS) & 13 (FEAS) & 19 (FEAS) & 7 (FEAS) \\
\texttt{lshp\_265} & 265 & 1009 & 11 & 6 & 3 & 6 & 2 & 3 & 3 (FEAS) & 15 (FEAS) & 20 (FEAS) & 7 (FEAS) \\
\texttt{can\_\_268} & 268 & 1675 & -- & -- & -- & -- & 6 & -- & 8 (FEAS) & 17 (FEAS) & 23 (FEAS) & 10 (FEAS) \\
\texttt{bcspwr04} & 274 & 943 & 12 & 6 & 4 & 6 & 3 & 4 & 4 (FEAS) & 9 (FEAS) & 20 (FEAS) & 6 (FEAS) \\
\texttt{ash292} & 292 & 1250 & 12 & 6 & 4 & 5 & 3 & 4 & 3* (1197.86) & 10 (FEAS) & 25 (FEAS) & 6 (FEAS) \\
\texttt{can\_\_292} & 292 & 1416 & 24 & 7 & 5 & 8 & 5 & 5 & 6 (FEAS) & 14 (FEAS) & 24 (FEAS) & 8 (FEAS) \\
\texttt{dwt\_\_307} & 307 & 1415 & 12 & 6 & 4 & 6 & 4 & 4 & 5 (FEAS) & 13 (FEAS) & 28 (FEAS) & 9 (FEAS) \\
\texttt{dwt\_\_310} & 310 & 1379 & 12 & 5 & 4 & 6 & 3 & 4 & 4 (FEAS) & 13 (FEAS) & 29 (FEAS) & 8 (FEAS) \\
\texttt{dwt\_\_346} & 346 & 1786 & -- & -- & -- & -- & 5 & -- & 6 (FEAS) & 23 (FEAS) & 30 (FEAS) & 8 (FEAS) \\
\texttt{dwt\_\_361} & 361 & 1657 & 13 & 7 & 4 & 6 & 3 & 4 & 11 (FEAS) & 19 (FEAS) & 33 (FEAS) & 16 (FEAS) \\
\texttt{plat362} & 362 & 3074 & 13 & 8 & 6 & 6 & 4 & 5 & 7 (FEAS) & 26 (FEAS) & 34 (FEAS) & 9 (FEAS) \\
\end{longtable}
\endgroup

\begingroup
\scriptsize
\setlength{\tabcolsep}{1.0pt}
\begin{longtable}{@{}lrrcccccccc@{}}
\caption{Detailed auxiliary-host results of $\mathcal F_{\mathrm{full}}$, Gurobi, CPLEX MIP, and CPLEX CP under the $2\times\lceil n/2\rceil$ and $n\times n$ host grids for Regular and Harwell--Boeing instances.}
\label{tab:auxiliary-host-detail}\\
\toprule
Instance & $n$ & $|E|$ & \multicolumn{4}{c}{$2\times\lceil n/2\rceil$} & \multicolumn{4}{c}{$n\times n$} \\
\cmidrule(lr){4-7}\cmidrule(l){8-11}
 & & & $\mathcal{F}_{\mathrm{full}}$ & Gurobi & CPLEX MIP & CPLEX CP & $\mathcal{F}_{\mathrm{full}}$ & Gurobi & CPLEX MIP & CPLEX CP \\
\midrule
\endfirsthead
\caption[]{Detailed auxiliary-host results under the $2\times\lceil n/2\rceil$ and $n\times n$ host grids (continued).}\\
\toprule
Instance & $n$ & $|E|$ & \multicolumn{4}{c}{$2\times\lceil n/2\rceil$} & \multicolumn{4}{c}{$n\times n$} \\
\cmidrule(lr){4-7}\cmidrule(l){8-11}
 & & & $\mathcal{F}_{\mathrm{full}}$ & Gurobi & CPLEX MIP & CPLEX CP & $\mathcal{F}_{\mathrm{full}}$ & Gurobi & CPLEX MIP & CPLEX CP \\
\midrule
\endhead
\bottomrule
\endfoot
\multicolumn{11}{@{}l}{\textit{Regular}}\\
\texttt{wheel5} & 5 & 8 & 2* (0.06) & 2* (0.07) & 2* (34.35) & 2* (0.07) & 2* (0.06) & 2* (0.08) & 2* (0.07) & 2* (1.29) \\
\texttt{bipartite3x3} & 6 & 9 & 2* (0.11) & 2* (0.08) & 2* (0.05) & 2* (1.30) & 2* (0.08) & 2* (1.20) & 2* (2.39) & 2* (1.38) \\
\texttt{p2xc3} & 6 & 9 & 2* (0.07) & 2* (0.08) & 2* (15.35) & 2* (0.07) & 2* (0.06) & 2* (1.23) & 2* (1.28) & 2* (1.28) \\
\texttt{p2xp3} & 6 & 7 & 1* (0.07) & 1* (0.08) & 1* (16.35) & 1* (0.07) & 1* (0.06) & 1* (0.07) & 1* (0.07) & 1* (1.27) \\
\texttt{bipartite3x4} & 7 & 12 & 3* (0.06) & 3* (0.11) & 3* (0.08) & 3* (1.31) & 3* (1.54) & 3* (1.75) & 3* (1.77) & 3* (3.03) \\
\texttt{tree2x2} & 7 & 6 & 2* (0.06) & 2* (0.08) & 2* (27.35) & 2* (0.07) & 1* (0.06) & 1* (0.08) & 1* (0.06) & 1* (1.26) \\
\texttt{wheel7} & 7 & 12 & 2* (0.06) & 2* (0.07) & 2* (35.35) & 2* (0.07) & 2* (0.05) & 2* (1.27) & 2* (1.35) & 2* (1.23) \\
\texttt{bipartite4x4} & 8 & 16 & 3* (0.06) & 3* (0.10) & 3* (0.06) & 3* (1.30) & 3* (1.59) & 3* (2.24) & 3* (3.67) & 3* (3.47) \\
\texttt{c3xc3} & 9 & 18 & 3* (0.07) & 3* (1.25) & 3* (0.10) & 3* (1.40) & 2* (0.08) & 2* (1.93) & 2* (2.20) & 2* (1.33) \\
\texttt{p3xc3} & 9 & 15 & 2* (0.06) & 2* (0.08) & 2* (17.35) & 2* (0.08) & 2* (0.07) & 2* (1.37) & 2* (1.52) & 2* (1.31) \\
\texttt{p3xp3} & 9 & 12 & 2* (0.06) & 2* (0.08) & 2* (19.35) & 2* (0.08) & 1* (0.07) & 1* (0.11) & 1* (1.20) & 1* (1.26) \\
\texttt{bipartite5x5} & 10 & 25 & 4* (1.21) & 4* (1.38) & 4* (1.20) & 4* (3.97) & 3* (2.02) & 3* (6.21) & 3* (21.44) & 3* (6.78) \\
\texttt{cycle10} & 10 & 10 & 1* (0.08) & 1* (0.08) & 1* (4.35) & 1* (0.08) & 1* (0.07) & 1* (1.30) & 1* (1.37) & 1* (1.26) \\
\texttt{cyclePow10-10} & 10 & 45 & 5* (0.06) & 5* (1.71) & 5* (7.35) & 5* (3.91) & 4* (98.79) & 4* (212.68) & 4* (649.24) & 4* (426.16) \\
\texttt{cyclePow10-2} & 10 & 20 & 2* (0.06) & 2* (0.09) & 2* (8.35) & 2* (0.09) & 2* (0.08) & 2* (2.28) & 2* (3.34) & 2* (1.30) \\
\texttt{path10} & 10 & 9 & 1* (0.06) & 1* (0.08) & 1* (23.35) & 1* (0.08) & 1* (0.08) & 1* (1.20) & 1* (1.23) & 1* (1.26) \\
\texttt{petersen} & 10 & 15 & 3* (0.06) & 3* (1.23) & 3* (26.35) & 3* (1.19) & 2* (0.09) & 2* (1.73) & 2* (1.72) & 2* (1.35) \\
\texttt{wheel10} & 10 & 18 & 3* (0.06) & 3* (0.08) & 3* (31.35) & 3* (0.11) & 2* (0.07) & 2* (1.73) & 2* (1.61) & 2* (1.29) \\
\texttt{c3xc4} & 12 & 24 & 3* (0.08) & 3* (1.27) & 3* (1.22) & 3* (1.59) & 2* (0.11) & 2* (3.17) & 2* (2.46) & 2* (1.34) \\
\texttt{c3xk4} & 12 & 30 & 4* (1.24) & 4* (1.71) & 4* (1.35) & 4* (4.83) & 3* (4.67) & 3* (35.96) & 3* (157.89) & 3* (33.75) \\
\texttt{k3xk4} & 12 & 30 & 4* (1.27) & 4* (1.55) & 4* (13.35) & 4* (4.70) & 3* (4.83) & 3* (36.68) & 3* (237.82) & 3* (53.16) \\
\texttt{p3xk4} & 12 & 26 & 2* (0.07) & 2* (1.20) & 2* (18.35) & 2* (0.09) & 2* (0.09) & 2* (2.40) & 2* (2.02) & 2* (1.34) \\
\texttt{tree2x3} & 13 & 12 & 2* (0.07) & 2* (0.09) & 2* (28.35) & 2* (0.10) & 2* (1.36) & 2* (8.04) & 2* (18.38) & 2* (4.48) \\
\texttt{bipartite7x8} & 15 & 56 & 6* (179.58) & 6* (0.09) & 6* (0.07) & 6* (1.25) & 4 (FEAS) & 4* (928.99) & 4 (FEAS) & 4 (FEAS) \\
\texttt{cycle15} & 15 & 15 & 2* (0.07) & 2* (1.47) & 2* (5.35) & 2* (1.21) & 2* (28.22) & 2 (FEAS) & 2 (FEAS) & 2* (278.46) \\
\texttt{cyclePow15-10} & 15 & 105 & 8* (0.06) & -- (TO) & 8* (9.35) & -- (TO) & 5 (FEAS) & 5 (FEAS) & 5 (FEAS) & 5 (FEAS) \\
\texttt{cyclePow15-2} & 15 & 30 & 3* (0.10) & 3* (2.33) & 3* (10.35) & 3* (1.81) & 2* (1.21) & 2* (16.24) & 2* (6.39) & 2* (1.40) \\
\texttt{path15} & 15 & 14 & 1* (0.07) & 1* (0.12) & 1* (24.35) & 1* (0.09) & 1* (0.09) & 1* (2.57) & 1* (1.70) & 1* (1.30) \\
\texttt{tree3x2} & 15 & 14 & 2* (0.07) & 2* (1.21) & 2* (30.35) & 2* (1.18) & 1* (0.09) & 1* (2.12) & 1* (1.75) & 1* (1.25) \\
\texttt{wheel15} & 15 & 28 & 4* (0.07) & 4* (0.11) & 4* (32.35) & 4* (76.10) & 3* (126.04) & 3 (FEAS) & 3* (42.28) & 3 (FEAS) \\
\texttt{bipartite10x10} & 20 & 100 & 8* (0.17) & 8* (1.21) & 8* (0.08) & 8* (1.33) & 5 (FEAS) & 5 (FEAS) & 5 (FEAS) & 5 (FEAS) \\
\texttt{c4xc5} & 20 & 40 & 4* (3.52) & 4* (8.86) & 4* (2.35) & 4* (1906.58) & 2* (1.52) & 2* (137.01) & 2* (248.87) & 2* (1.97) \\
\texttt{c4xk5} & 20 & 60 & 5* (44.65) & 5* (22.03) & 5* (3.35) & -- (TO) & 3* (7.64) & 3 (FEAS) & 3 (FEAS) & 3* (167.45) \\
\texttt{cycle20} & 20 & 20 & 1* (0.08) & 1* (1.22) & 1* (6.35) & 1* (0.10) & 1* (1.21) & 1* (9.22) & 1* (5.09) & 1* (1.35) \\
\texttt{cyclePow20-10} & 20 & 190 & 10* (0.08) & 10* (29.34) & 10* (11.35) & -- (TO) & 6 (FEAS) & 6 (FEAS) & 6 (FEAS) & 6 (FEAS) \\
\texttt{cyclePow20-2} & 20 & 40 & 2* (0.09) & 2* (1.52) & 2* (12.35) & 2* (1.24) & 2* (1.27) & 2* (27.87) & 2* (17.54) & 2* (1.92) \\
\texttt{k4xk5} & 20 & 70 & 6 (FEAS) & 6* (179.53) & 6* (14.35) & -- (TO) & 4 (FEAS) & 4 (FEAS) & 4 (FEAS) & 4 (FEAS) \\
\texttt{p4xc5} & 20 & 35 & 3* (1.29) & 3* (2.30) & 3* (20.35) & 3* (2.74) & 2* (1.54) & 2* (89.66) & 2* (482.54) & 2* (2.56) \\
\texttt{p4xk5} & 20 & 55 & 3* (0.09) & 3* (1.85) & 3* (21.35) & 3* (1.29) & 3* (7.14) & 3 (FEAS) & -- (TO) & 3* (43.01) \\
\texttt{p4xp5} & 20 & 31 & 2* (0.09) & 2* (1.69) & 2* (22.35) & 2* (1.22) & 1* (1.22) & 1* (17.88) & 1* (4.61) & 1* (1.36) \\
\texttt{path20} & 20 & 19 & 1* (0.09) & 1* (1.25) & 1* (25.35) & 1* (0.09) & 1* (1.20) & 1* (10.36) & 1* (1.79) & 1* (1.29) \\
\texttt{wheel20} & 20 & 38 & 6* (0.07) & 6* (1.43) & 6* (33.35) & -- (TO) & 3* (285.25) & 3 (FEAS) & 3 (FEAS) & 3 (FEAS) \\
\texttt{tree2x4} & 21 & 20 & 3* (0.07) & 3* (2.49) & 3* (29.35) & 3* (231.48) & 2* (1.56) & 2* (125.16) & 2* (109.53) & 2* (7.11) \\
\addlinespace
\multicolumn{11}{@{}l}{\textit{Harwell--Boeing}}\\
\texttt{jgl009} & 9 & 50 & 4* (0.10) & 4* (1.59) & 4* (1.52) & 4* (1.92) & 3* (1.28) & 3* (6.75) & 3* (3.17) & 3* (1.66) \\
\texttt{rgg010} & 10 & 76 & 5* (0.08) & 5* (1.93) & 5* (1.81) & 5* (12.75) & 4* (79.95) & 4* (114.31) & 4* (594.88) & 4* (197.21) \\
\texttt{jgl011} & 11 & 76 & 5* (1.45) & 5* (2.70) & 5* (1.71) & 5* (9.25) & 4* (127.24) & 4* (349.30) & 4 (FEAS) & 4* (738.78) \\
\texttt{can\_\_\_24} & 24 & 92 & 3* (1.31) & 3* (4.63) & 3* (3.08) & 3* (4.26) & 2* (1.35) & 2* (164.95) & 2 (FEAS) & 2* (1.45) \\
\texttt{lap\_25} & 25 & 97 & 4* (1.21) & 4* (52.32) & 4* (25.70) & 4* (145.02) & 2* (1.39) & 2* (373.56) & 2 (FEAS) & 2* (1.91) \\
\texttt{fidap005} & 27 & 279 & 5* (2.25) & 5* (40.80) & 5* (323.79) & 5* (239.33) & 4* (1482.38) & 4 (FEAS) & 4 (FEAS) & 4 (FEAS) \\
\texttt{pores\_1} & 30 & 180 & 4* (1.89) & 4* (70.24) & 4 (FEAS) & 4* (446.92) & 3* (13.54) & 3 (FEAS) & 4 (FEAS) & 3* (131.12) \\
\texttt{ibm32} & 32 & 126 & 6 (FEAS) & 6 (FEAS) & 7 (FEAS) & 6 (FEAS) & 3* (462.15) & 4 (FEAS) & 4 (FEAS) & 3 (FEAS) \\
\texttt{bcspwr01} & 39 & 85 & 3* (3.75) & 3* (178.98) & 3 (FEAS) & 3 (FEAS) & 2* (2.33) & 2 (FEAS) & 2 (FEAS) & 2* (7.11) \\
\texttt{fidapm05} & 42 & 520 & 6* (8.77) & 6* (661.87) & 7 (FEAS) & 6 (FEAS) & 4 (FEAS) & 5 (FEAS) & 5 (FEAS) & 4 (FEAS) \\
\texttt{bcsstk01} & 48 & 224 & 9 (FEAS) & -- (TO) & 9 (FEAS) & 9 (FEAS) & 4 (FEAS) & 5 (FEAS) & 6 (FEAS) & 4 (FEAS) \\
\texttt{bcspwr02} & 49 & 108 & 4* (1037.80) & 4 (FEAS) & 5 (FEAS) & 4 (FEAS) & 2* (4.13) & 2 (FEAS) & 4 (FEAS) & 2* (9.58) \\
\texttt{curtis54} & 54 & 291 & 5 (FEAS) & 5* (1030.38) & 8 (FEAS) & 6 (FEAS) & 3* (41.30) & 4 (FEAS) & 6 (FEAS) & 3 (FEAS) \\
\texttt{will57} & 57 & 281 & 4* (10.97) & -- (TO) & 6 (FEAS) & 4 (FEAS) & 3* (126.32) & 4 (FEAS) & 5 (FEAS) & 3 (FEAS) \\
\texttt{dwt\_\_\_59} & 59 & 163 & 3* (4.16) & 3* (339.85) & 4 (FEAS) & 3 (FEAS) & 2* (8.05) & 3 (FEAS) & 5 (FEAS) & 2* (36.98) \\
\texttt{impcol\_b} & 59 & 312 & -- (TO) & -- (TO) & -- (TO) & -- (TO) & 4 (FEAS) & 6 (FEAS) & 8 (FEAS) & 5 (FEAS) \\
\texttt{can\_\_\_61} & 61 & 309 & 7* (37.33) & 7 (FEAS) & 9 (FEAS) & 7 (FEAS) & 4 (FEAS) & 4 (FEAS) & 7 (FEAS) & 4 (FEAS) \\
\texttt{bfw62a} & 62 & 450 & 7 (FEAS) & 7 (FEAS) & 8 (FEAS) & 7 (FEAS) & 3* (61.27) & 5 (FEAS) & 7 (FEAS) & 4 (FEAS) \\
\texttt{bfw62b} & 62 & 342 & 5* (434.41) & 5 (FEAS) & 7 (FEAS) & 5 (FEAS) & 3* (290.19) & 5 (FEAS) & 6 (FEAS) & 3* (1153.74) \\
\texttt{can\_\_\_62} & 62 & 140 & 3* (36.54) & 3* (1070.51) & 6 (FEAS) & 3 (FEAS) & 2* (11.60) & 4 (FEAS) & 5 (FEAS) & 2* (36.10) \\
\texttt{bcsstk02} & 66 & 2211 & 33* (0.06) & 33 (FEAS) & 33 (FEAS) & 33 (FEAS) & 11 (FEAS) & -- (TO) & -- (TO) & 11 (FEAS) \\
\texttt{dwt\_\_\_66} & 66 & 193 & 2* (17.44) & 2* (622.91) & 4 (FEAS) & 2* (147.26) & 2* (10.88) & 4 (FEAS) & 7 (FEAS) & 2* (19.53) \\
\texttt{west0067} & 67 & 294 & -- (TO) & -- (TO) & -- (TO) & -- (TO) & 5 (FEAS) & 7 (FEAS) & 10 (FEAS) & 5 (FEAS) \\
\texttt{dwt\_\_\_72} & 72 & 147 & 3* (108.82) & 3 (FEAS) & 7 (FEAS) & 3 (FEAS) & 1* (24.21) & 4 (FEAS) & 8 (FEAS) & 1* (6.20) \\
\texttt{can\_\_\_73} & 73 & 225 & 9 (FEAS) & 9 (FEAS) & 11 (FEAS) & 9 (FEAS) & 3 (FEAS) & 5 (FEAS) & 9 (FEAS) & 3 (FEAS) \\
\texttt{steam3} & 80 & 928 & 3* (30.95) & 3* (67.49) & 5 (FEAS) & 3 (FEAS) & 2* (19.36) & 7 (FEAS) & 16 (FEAS) & 2* (221.05) \\
\texttt{ash85} & 85 & 304 & 5 (FEAS) & 6 (FEAS) & 11 (FEAS) & 5 (FEAS) & 2* (37.20) & 6 (FEAS) & 12 (FEAS) & 2* (205.00) \\
\texttt{dwt\_\_\_87} & 87 & 314 & 5 (FEAS) & 7 (FEAS) & 9 (FEAS) & 6 (FEAS) & 3* (757.44) & 6 (FEAS) & 42 (FEAS) & 3 (FEAS) \\
\texttt{tols90} & 90 & 1746 & -- (TO) & -- (TO) & -- (TO) & -- (TO) & 9 (FEAS) & -- (TO) & -- (TO) & 9 (FEAS) \\
\texttt{can\_\_\_96} & 96 & 432 & 10 (FEAS) & 8 (FEAS) & -- (TO) & 9 (FEAS) & 3 (FEAS) & -- (TO) & 14 (FEAS) & 3 (FEAS) \\
\texttt{nos4} & 100 & 347 & 5 (FEAS) & 5 (FEAS) & 14 (FEAS) & 6 (FEAS) & 2* (82.79) & 10 (FEAS) & 11 (FEAS) & 2 (FEAS) \\
\texttt{olm100} & 100 & 396 & 2* (31.35) & 3 (FEAS) & 6 (FEAS) & 4 (FEAS) & 2* (54.22) & 5 (FEAS) & 3 (FEAS) & 2* (312.55) \\
\texttt{tub100} & 100 & 396 & 1* (26.47) & 1* (948.40) & 2 (FEAS) & 4 (FEAS) & 1* (10.36) & 6 (FEAS) & 2 (FEAS) & 1* (3.02) \\
\texttt{ck104} & 104 & 992 & 3* (809.80) & 9 (FEAS) & 7 (FEAS) & 7 (FEAS) & 3* (911.67) & -- (TO) & -- (TO) & 3 (FEAS) \\
\texttt{bcsstk03} & 112 & 376 & 2* (693.32) & 3 (FEAS) & 11 (FEAS) & 5 (FEAS) & 2* (96.65) & -- (TO) & 49 (FEAS) & 2* (244.31) \\
\texttt{gent113} & 113 & 655 & -- (TO) & -- (TO) & -- (TO) & 14 (FEAS) & 6 (FEAS) & -- (TO) & -- (TO) & 6 (FEAS) \\
\texttt{gre\_\_115} & 115 & 421 & 15 (FEAS) & 14 (FEAS) & -- (TO) & 14 (FEAS) & 3 (FEAS) & -- (TO) & 69 (FEAS) & 4 (FEAS) \\
\texttt{bcspwr03} & 118 & 297 & 6 (FEAS) & 7 (FEAS) & 11 (FEAS) & 7 (FEAS) & 2* (160.18) & -- (TO) & -- (TO) & 2 (FEAS) \\
\texttt{arc130} & 130 & 1282 & -- (TO) & 32 (FEAS) & -- (TO) & 32 (FEAS) & 9 (FEAS) & -- (TO) & -- (TO) & 8 (FEAS) \\
\texttt{lns\_\_131} & 131 & 536 & 11 (FEAS) & 11 (FEAS) & -- (TO) & 12 (FEAS) & 3* (2227.35) & -- (TO) & -- (TO) & 3 (FEAS) \\
\texttt{lnsp\_131} & 131 & 536 & 11 (FEAS) & 14 (FEAS) & -- (TO) & 12 (FEAS) & 3* (2280.70) & -- (TO) & -- (TO) & 3 (FEAS) \\
\texttt{bcsstk04} & 132 & 1890 & -- (TO) & -- (TO) & -- (TO) & -- (TO) & 8 (FEAS) & -- (TO) & -- (TO) & 9 (FEAS) \\
\texttt{west0132} & 132 & 414 & -- (TO) & -- (TO) & -- (TO) & -- (TO) & 6 (FEAS) & -- (TO) & -- (TO) & 7 (FEAS) \\
\texttt{rw136} & 136 & 479 & 9 (FEAS) & 9 (FEAS) & -- (TO) & 12 (FEAS) & 2* (349.05) & -- (TO) & -- (TO) & 2 (FEAS) \\
\texttt{impcol\_c} & 137 & 411 & -- (TO) & 16 (FEAS) & -- (TO) & -- (TO) & 5 (FEAS) & -- (TO) & -- (TO) & 5 (FEAS) \\
\texttt{bcsstk22} & 138 & 417 & 6 (FEAS) & 6 (FEAS) & -- (TO) & 8 (FEAS) & 2* (379.35) & -- (TO) & -- (TO) & 3 (FEAS) \\
\texttt{can\_\_144} & 144 & 720 & 10 (FEAS) & 11 (FEAS) & -- (TO) & 13 (FEAS) & 3* (1394.24) & -- (TO) & -- (TO) & 4 (FEAS) \\
\texttt{lund\_a} & 147 & 1298 & 12 (FEAS) & -- (TO) & -- (TO) & 16 (FEAS) & 4 (FEAS) & -- (TO) & -- (TO) & 6 (FEAS) \\
\texttt{lund\_b} & 147 & 1294 & 16 (FEAS) & -- (TO) & -- (TO) & 15 (FEAS) & 4 (FEAS) & -- (TO) & -- (TO) & 6 (FEAS) \\
\texttt{bcsstk05} & 153 & 1288 & 16 (FEAS) & -- (TO) & -- (TO) & 12 (FEAS) & -- & -- & -- & -- \\
\texttt{west0156} & 156 & 371 & -- (TO) & -- (TO) & -- (TO) & -- (TO) & -- & -- & -- & -- \\
\texttt{can\_\_161} & 161 & 769 & 15 (FEAS) & -- (TO) & -- (TO) & 15 (FEAS) & -- & -- & -- & -- \\
\texttt{dwt\_\_162} & 162 & 672 & 12 (FEAS) & 17 (FEAS) & -- (TO) & 12 (FEAS) & -- & -- & -- & -- \\
\texttt{lop163} & 163 & 935 & 8 (FEAS) & 9 (FEAS) & -- (TO) & 10 (FEAS) & -- & -- & -- & -- \\
\texttt{west0167} & 167 & 507 & -- (TO) & -- (TO) & -- (TO) & -- (TO) & -- & -- & -- & -- \\
\texttt{mcca} & 180 & 2659 & -- (TO) & -- (TO) & -- (TO) & -- (TO) & -- & -- & -- & -- \\
\texttt{fs\_183\_1} & 183 & 1069 & -- (TO) & -- (TO) & -- (TO) & -- (TO) & -- & -- & -- & -- \\
\texttt{gre\_\_185} & 185 & 1005 & 15 (FEAS) & -- (TO) & -- (TO) & 19 (FEAS) & -- & -- & -- & -- \\
\texttt{can\_\_187} & 187 & 839 & 14 (FEAS) & -- (TO) & -- (TO) & 12 (FEAS) & -- & -- & -- & -- \\
\texttt{dwt\_\_193} & 193 & 1843 & -- (TO) & -- (TO) & -- (TO) & -- (TO) & -- & -- & -- & -- \\
\texttt{dwt\_\_198} & 198 & 795 & 13 (FEAS) & 16 (FEAS) & -- (TO) & 13 (FEAS) & -- & -- & -- & -- \\
\texttt{will199} & 199 & 701 & -- (TO) & -- (TO) & -- (TO) & -- (TO) & -- & -- & -- & -- \\
\texttt{bwm200} & 200 & 796 & 13 (FEAS) & -- (TO) & 1* (392.73) & 13 (FEAS) & -- & -- & -- & -- \\
\texttt{rdb200} & 200 & 1120 & 12 (FEAS) & -- (TO) & -- (TO) & 16 (FEAS) & -- & -- & -- & -- \\
\texttt{impcol\_a} & 207 & 572 & -- (TO) & -- (TO) & -- (TO) & -- (TO) & -- & -- & -- & -- \\
\texttt{wm1} & 207 & 2909 & -- (TO) & -- (TO) & -- (TO) & -- (TO) & -- & -- & -- & -- \\
\texttt{wm2} & 207 & 2942 & -- (TO) & -- (TO) & -- (TO) & -- (TO) & -- & -- & -- & -- \\
\texttt{wm3} & 207 & 2948 & -- (TO) & -- (TO) & -- (TO) & -- (TO) & -- & -- & -- & -- \\
\texttt{dwt\_\_209} & 209 & 976 & 20 (FEAS) & -- (TO) & -- (TO) & 19 (FEAS) & -- & -- & -- & -- \\
\texttt{fidap001} & 216 & 4374 & -- (TO) & -- (TO) & -- (TO) & -- (TO) & -- & -- & -- & -- \\
\texttt{gre\_216a} & 216 & 876 & 18 (FEAS) & -- (TO) & -- (TO) & 19 (FEAS) & -- & -- & -- & -- \\
\texttt{ash219} & 219 & 438 & 19 (FEAS) & -- (TO) & -- (TO) & -- (TO) & -- & -- & -- & -- \\
\texttt{dwt\_\_221} & 221 & 925 & 14 (FEAS) & -- (TO) & -- (TO) & 19 (FEAS) & -- & -- & -- & -- \\
\texttt{impcol\_e} & 225 & 1308 & -- (TO) & -- (TO) & -- (TO) & -- (TO) & -- & -- & -- & -- \\
\texttt{pde225} & 225 & 1065 & 14 (FEAS) & -- (TO) & -- (TO) & 14 (FEAS) & -- & -- & -- & -- \\
\texttt{can\_\_229} & 229 & 1003 & 21 (FEAS) & -- (TO) & -- (TO) & 19 (FEAS) & -- & -- & -- & -- \\
\texttt{dwt\_\_234} & 234 & 534 & 13 (FEAS) & -- (TO) & -- (TO) & 14 (FEAS) & -- & -- & -- & -- \\
\texttt{e05r0000} & 236 & 5856 & -- (TO) & -- (TO) & -- (TO) & -- (TO) & -- & -- & -- & -- \\
\texttt{nos1} & 237 & 627 & 19 (FEAS) & 16 (FEAS) & -- (TO) & 13 (FEAS) & -- & -- & -- & -- \\
\texttt{steam1} & 240 & 3762 & -- (TO) & -- (TO) & -- (TO) & -- (TO) & -- & -- & -- & -- \\
\texttt{dwt\_\_245} & 245 & 853 & 18 (FEAS) & -- (TO) & -- (TO) & 18 (FEAS) & -- & -- & -- & -- \\
\texttt{can\_\_256} & 256 & 1586 & -- (TO) & -- (TO) & -- (TO) & -- (TO) & -- & -- & -- & -- \\
\texttt{nnc261} & 261 & 1500 & 22 (FEAS) & -- (TO) & -- (TO) & 19 (FEAS) & -- & -- & -- & -- \\
\texttt{lshp\_265} & 265 & 1009 & 19 (FEAS) & -- (TO) & -- (TO) & 21 (FEAS) & -- & -- & -- & -- \\
\texttt{can\_\_268} & 268 & 1675 & -- (TO) & -- (TO) & -- (TO) & -- (TO) & -- & -- & -- & -- \\
\texttt{bcspwr04} & 274 & 943 & 19 (FEAS) & -- (TO) & -- (TO) & -- (TO) & -- & -- & -- & -- \\
\texttt{ash292} & 292 & 1250 & 14 (FEAS) & -- (TO) & -- (TO) & -- (TO) & -- & -- & -- & -- \\
\texttt{can\_\_292} & 292 & 1416 & -- (TO) & -- (TO) & -- (TO) & -- (TO) & -- & -- & -- & -- \\
\texttt{dwt\_\_307} & 307 & 1415 & -- (TO) & -- (TO) & -- (TO) & -- (TO) & -- & -- & -- & -- \\
\texttt{dwt\_\_310} & 310 & 1379 & 24 (FEAS) & -- (TO) & -- (TO) & -- (TO) & -- & -- & -- & -- \\
\texttt{dwt\_\_346} & 346 & 1786 & -- (TO) & -- (TO) & -- (TO) & -- (TO) & -- & -- & -- & -- \\
\texttt{dwt\_\_361} & 361 & 1657 & -- (TO) & -- (TO) & -- (TO) & -- (TO) & -- & -- & -- & -- \\
\texttt{plat362} & 362 & 3074 & -- (TO) & -- (TO) & -- (TO) & -- (TO) & -- & -- & -- & -- \\
\end{longtable}
\endgroup

\section*{Declaration of generative AI and AI-assisted technologies in the manuscript preparation process}

During the preparation of this work, the authors used ChatGPT by OpenAI to assist with language revision. After using this tool, the authors reviewed and edited the content as needed and take full responsibility for the content of the published article.

\bibliographystyle{cas-model2-names}
\bibliography{cas-refs}

\end{document}